\def\arXiv{}

\documentclass[11pt]{article}

\usepackage{acronym}
\usepackage{amsmath}
\usepackage{amsthm}
\usepackage{amssymb}
\usepackage{authblk}
\usepackage[american]{babel}
\usepackage{dsfont}
\usepackage[inline]{enumitem}
\usepackage[top=1in,left=1in,bottom=1in,right=1in]{geometry}
\usepackage[utf8x]{inputenc}
\usepackage{lineno}
\ifdefined \arXiv
	\usepackage[disable]{todonotes}
\else
	\usepackage{todonotes}
\fi
\usepackage[pdfborder={0 0 0}]{hyperref} 

\acrodef{APSP}{All-Pairs Shortest Paths}
	\acrodefindefinite{APSP}{an}{an}
\acrodef{APWP}{All-Pairs Widest Paths}
	\acrodefindefinite{APWP}{an}{an}
\acrodef{BFS}{Breadth-First Search}
	\acused{BFS}
\acrodef{DAG}{Directed Acyclic Graph}
\acrodef{FRT}{Fakcharoenphol, Rao, and Talwar~\cite{frt-tbaamtm-04}}
	\acrodefindefinite{FRT}{an}{a}
\acrodef{kDSDP}[\texorpdfstring{$k$}{k}-DSDP]{\texorpdfstring{$k$}{k}-Distinct-Shortest Distance Problem}
\acrodef{kSDP}[\texorpdfstring{$k$}{k}-SDP]{\texorpdfstring{$k$}{k}-Shortest Distance Problem}
\acrodef{kSSP}[\texorpdfstring{$k$}{k}-SSP]{\texorpdfstring{$k$}{k}-Source Shortest Paths}
\acrodef{kSWP}[\texorpdfstring{$k$}{k}-SWP]{\texorpdfstring{$k$}{k}-Source Widest Paths}
\acrodef{LE}{Least Element}
	\acrodefindefinite{LE}{an}{a}
\acrodef{MBF}{Moore-Bellman-Ford}
	\acrodefindefinite{MBF}{an}{a}
\acrodef{MBF-like}{Moore-Bellman-Ford-like}
	\acrodefindefinite{MBF-like}{an}{a}
\acrodef{MSSP}{Multi-Source Shortest Paths}
	\acrodefindefinite{MSSP}{an}{a}
\acrodef{MSWP}{Multi-Source Widest Paths}
	\acrodefindefinite{MSSP}{an}{a}
\acrodef{SLF}{Simple Linear Function}
	\acrodefindefinite{SLF}{an}{a}
\acrodef{SPD}{Shortest Path Diameter}
	\acrodefindefinite{SPD}{an}{a}
\acrodef{SSSP}{Single-Source Shortest Paths}
	\acrodefindefinite{SSSP}{an}{a}
\acrodef{SSWP}{Single-Source Widest Paths}
	\acrodefindefinite{SSWP}{an}{a}
\acrodef{WPP}{Widest Path Problem}

\numberwithin{equation}{section}

\newtheorem{theorem}{Theorem}[section]
\newtheorem{corollary}[theorem]{Corollary}
\newtheorem{definition}[theorem]{Definition}
\newtheorem{example}[theorem]{Example}
\newtheorem{lemma}[theorem]{Lemma}
\newtheorem{observation}[theorem]{Observation}
\newtheorem{remark}[theorem]{Remark}

\setlist[enumerate,1]{label=(\arabic*)}
\setlist[enumerate,2]{label=(\alph*),ref=(\arabic{enumi}\alph*)}

\ifdefined \arXiv
	\nolinenumbers
\else
	\linenumbers
\fi

\renewcommand{\epsilon}{\varepsilon}
\setcounter{secnumdepth}{2}

\def\dash---{\kern.16667em---\penalty\exhyphenpenalty\hskip.16667em\relax}
\newcommand{\bigO}{\operatorname{O}}
\newcommand{\bigOT}{\operatorname{\tilde{O}}}
\newcommand{\bigOmega}{\operatorname{\Omega}}
\newcommand{\bigOmegaT}{\operatorname{\tilde{\Omega}}}
\newcommand{\bigTheta}{\operatorname{\Theta}}
\newcommand{\bigThetaT}{\operatorname{\tilde{\Theta}}}
\newcommand{\bigo}{\operatorname{o}}
\newcommand{\bigoT}{\operatorname{\tilde{o}}}

\newcommand{\dotcup}{\mathbin{\dot{\cup}}}
\newcommand{\E}{\operatorname{\mathds{E}}}
\newcommand{\N}{\mathds{N}}
\newcommand{\Prob}{\operatorname{\mathds{P}}}

\newcommand{\quotient}[2]{{#1}/_{#2}}
\newcommand{\R}{\mathds{R}}
\newcommand{\T}{\top}
\newcommand{\width}{\operatorname{width}}
\newcommand{\Z}{\mathds{Z}}

\newcommand{\diam}{\operatorname{D}}
\newcommand{\dist}{\operatorname{dist}}
\newcommand{\hop}{\operatorname{hop}}
\newcommand{\id}{\operatorname{id}}
\newcommand{\level}{\operatorname{\lambda}}
\newcommand{\mhspaths}{\operatorname{MHSP}}
\newcommand{\paths}{\operatorname{P}}
\newcommand{\poly}{\operatorname{poly}}
\newcommand{\polylog}{\operatorname{polylog}}
\newcommand{\proofinappendix}{\noindent Proof in Appendix~\ref{app:proofs}.\medskip}
\newcommand{\Rdist}{\R_{\geq 0} \cup \{ \infty \}}
\newcommand{\spd}{\operatorname{SPD}}
\newcommand{\weight}{\operatorname{\omega}}

\title{\bf\Large Parallel Metric Tree Embedding\protect\\ based on an Algebraic View on {M}oore-{B}ellman-{F}ord\footnote{%
	This work extends and subsumes the extended abstract that appeared in the
	\emph{Proceedings of the 28th ACM Symposium on Parallelism in Algorithms and Architectures (SPAA~2016),} pages 455--466, 2016~\cite{fl-pmteboaavombf-16}.}}

\author[1,2]{Stephan~Friedrichs}
\author[1]{Christoph~Lenzen}
\affil[1]{Max Planck Institute for Informatics, Saarbr\"ucken, Germany\\
	Email:~\texttt{\{sfriedri,clenzen\}@mpi-inf.mpg.de}}
\affil[2]{Saarbr\"ucken Graduate School of Computer Science}

\date{}

\begin{document}

\pagestyle{plain}
\setcounter{page}{0}

\maketitle

\thispagestyle{empty}

\begin{abstract}
	A \emph{metric tree embedding} of expected \emph{stretch~$\alpha \geq 1$} maps a weighted $n$-node graph $G = (V, E, \weight)$ to a weighted tree $T = (V_T, E_T , \weight_T)$ with $V \subseteq V_T$ such that, for all $v,w \in V$, $\dist(v, w, G) \leq \dist(v, w, T)$ and $\E[\dist(v, w, T)] \leq \alpha \dist(v, w, G)$.
	Such embeddings are highly useful for designing fast approximation algorithms, as many hard problems are easy to solve on tree instances.
	However, to date the best parallel $(\polylog n)$-depth algorithm that achieves an asymptotically optimal expected stretch of $\alpha \in \bigO(\log n)$ requires $\bigOmega(n^2)$ work and a metric as input.

	In this paper, we show how to achieve the same guarantees using $\polylog n$ depth and $\bigOT(m^{1+\epsilon})$ work, where $m = |E|$ and $\epsilon > 0$ is an arbitrarily small constant.
	Moreover, one may further reduce the work to $\bigOT(m + n^{1+\epsilon})$ at the expense of increasing the expected stretch to $\bigO(\epsilon^{-1} \log n)$.

	Our main tool in deriving these parallel algorithms is an algebraic characterization of a generalization of the classic \acl{MBF} algorithm.
	We consider this framework, which subsumes a variety of previous ``\acl{MBF-like}'' algorithms, to be of independent interest and discuss it in depth.
	In our tree embedding algorithm, we leverage it for providing efficient query access to an approximate metric that allows sampling the tree using $\polylog n$ depth and $\bigOT(m)$ work.

	We illustrate the generality and versatility of our techniques by various examples and a number of additional results.
	Specifically, we
	\begin{enumerate*}
	\item
		improve the state of the art for determining metric tree embeddings in the Congest model,

	\item
		determine a $(1 + \hat\epsilon)$-approximate metric regarding the distances in a graph $G$ in polylogarithmic depth and $\bigOT(nm^{1 + \epsilon})$ work, and

	\item
		improve upon the state of the art regarding the $k$-median and the the buy-at-bulk network design problems.
	\end{enumerate*}
\end{abstract}

\todo[inline]{Update ``to appear'' and arXiv bibtex entries like~\cite{en-hwchaatasp-16}}

\newpage

\section{Introduction}
\label{sec:introduction}

In many graph problems the objective is closely related to distances in the graph.
Prominent examples are shortest path problems, minimum weight spanning trees, a plethora of Steiner-type problems~\cite{Steiner-site}, the traveling salesman, finding a longest simple path, and many more.

If approximation is viable or mandatory, a successful strategy is to approximate the distance structure of the weighted graph $G$ by a simpler graph~$G'$, where ``simpler'' can mean fewer edges, smaller degrees, being from a specific family of graphs, or any other constraint making the considered problem easier to solve.
One then proceeds to solve a related instance of the problem on $G'$ and maps the solution back to~$G$, yielding an approximate solution to the original instance.
Naturally, this requires a mapping of bounded impact on the objective value.

A standard tool are \emph{metric embeddings,} mapping $G = (V, E, \weight)$ to $G' = (V', E', \weight')$, such that $V \subseteq V'$ and $\dist(v, w, G) \leq \dist(v, w, G') \leq \alpha \dist(v, w, G)$ for some $\alpha \geq 1$ referred to as \emph{stretch}.\footnote{%
	$\dist(\cdot, \cdot, G)$ denotes the distance in $G$ and defines a metric space.
	See definitions in Section~\ref{sec:notation}.%
} An especially convenient class of metric embeddings are \emph{metric tree embeddings,} plainly because very few problems are hard to solve on tree instances.
The utility of tree embeddings originates in the fact that, despite their extremely simple topology, it is possible to randomly construct an embedding of any graph $G$ into a tree $T$ so that the \emph{expected stretch $\alpha = \max\{ \E_T[\dist(v, w, T)] / \dist(v,w,G) \mid v,w \in V \}$} satisfies $\alpha \in \bigO(\log n)$~\cite{frt-tbaamtm-04}.
By linearity of expectation, this ensures an expected approximation ratio of $\bigO(\log n)$ for most problems;
repeating the process $\log(\epsilon^{-1})$ times and taking the best result, one obtains an $\bigO(\log n)$-approximation with probability at least $1 - \epsilon$.

A substantial advantage of tree embeddings lies in the simplicity of applying the machinery once they are computed:
Translating the instance on $G$ to one on~$T$, solving the instance on~$T$, and translating the solution back tends to be extremely efficient and highly parallelizable;
we demonstrate this in Sections~\ref{sec:kmedian} and~\ref{sec:buyatbulk}.
Note also that the embedding can be computed as a preprocessing step, which is highly useful for online approximation algorithms~\cite{frt-tbaamtm-04}.
Hence, a low-depth small-work parallel algorithm for embedding weighted graphs into trees in the vein of \ac{FRT} would give rise to fast and efficient parallel approximations for a large class of graph problems.
Unfortunately, the trade-off between depth and work achieved by state-of-the-art parallel algorithms for this purpose is suboptimal.
Concretely, all algorithms of $\polylog n$ depth use $\bigOmega(n^2)$ work, whereas we are not aware of any stronger lower bound than the trivial $\bigOmega(m)$ work bound.\footnote{%
	Partition $V = A \dotcup B$ evenly, and add spanning trees of $A$ and $B$ consisting of edges of weight~$1$.
	Connect $A$ and $B$ with $m - n + 2$ edges, all of weight $W \gg n\log n$, but w.p.~$1/2$, pick one of the connecting edges uniformly at random and set its weight to~$1$.
	To approximate the distance between $a \in A$ and $b \in B$ better than factor $W / n \gg \log n$ w.p.\ substantially larger than~$1/2$, any algorithm must examine $\bigOmega(m)$ edges in expectation.}

\paragraph*{Our Contribution}

Our main contribution is to reduce the amount of work for sampling from the \ac{FRT} distribution\dash---a random distribution of tree embeddings\dash---to $\bigOT(m^{1+\epsilon})$ while maintaining $\polylog n$ depth.
This paper is organized in two parts.
The first establishes the required techniques:
\begin{itemize}
\item
	Our key tool is an algebraic interpretation of \ac{MBF-like} algorithms described in Section~\ref{sec:mbf}.
	As our framework subsumes a large class of known algorithms and explains them from a different perspective\dash---we demonstrate this using numerous examples in Section~\ref{sec:examples}\dash---and we consider it to be of independent interest.

\item
	Section~\ref{sec:h} proposes a sampling technique for embedding a graph $G$ in which $d$-hop distances $(1 + \hat\epsilon)$-approximate exact distances into a complete graph~$H$, where $H$ has polylogarithmic \ac{SPD} and preserves $G$-distances $(1 + \hat\epsilon)^{\bigO(\log n)}$-approximately.

\item
	We devise an oracle that answers \ac{MBF-like} queries by efficiently simulating an iteration of \iac{MBF-like} algorithm on $H$ in Section~\ref{sec:oracle}.
	It uses only the edges of $G$ and polylogarithmic overhead, resulting in $\bigOT(dm)$ work w.r.t.~$G$, i.e., subquadratic work, per iteration;
	we use $d \in \polylog n$.
\end{itemize}
The second part applies our techniques and establishes our results:
\begin{itemize}
\item
	A first consequence of our techniques is that we can query the oracle with \ac{APSP} to determine w.h.p.\ a $(1 + \bigo(1))$-approximate metric on $G$ using $\bigOT(nm^{1+\epsilon})$ work and $\polylog n$ depth.
	We discuss this in Section~\ref{sec:metric}.

\item
	In Section~\ref{sec:frt}, we show that for any constant $\epsilon > 0$, there is a randomized parallel algorithm of depth $\polylog n$ and work $\bigOT(m^{1 + \epsilon})$ that computes a metric tree embedding of expected stretch $\bigO(\log n)$ w.h.p.
	This follows from the above techniques and from the fact that sampling from the \ac{FRT} distribution is \ac{MBF-like}.
	Applying the spanner construction of Baswana and Sen~\cite{bs-sltracsswg-07} as a preprocessing step, the work can be reduced to $\bigOT(m + n^{1+\epsilon})$ at the expense of stretch $\bigO(\epsilon^{-1} \log n)$.

\item
	Our techniques allow to improve over previous distributed algorithms computing tree embeddings in the Congest~\cite{p-dclsa-00} model.
	We reduce the best known round complexity for sampling from a tree embedding of expected stretch $\bigO(\log n)$ from $\bigOT(n^{1/2 + \epsilon} + \diam(G))$, where $\epsilon > 0$ is an arbitrary constant and $\diam(G)$ is the unweighted hop diameter of~$G$, to $(n^{1/2} + \diam(G)) n^{\bigo(1)}$.
	This is detailed in Section~\ref{sec:distributed}.

\item
	We illustrate the utility of our main results by providing efficient approximation algorithms for the $k$-median and buy-at-bulk network design problems.
	Blelloch et~al.~\cite{bgt-pptekmbabnd-12} devise polylogarithmic depth parallel algorithms based on \ac{FRT} embeddings for these problems assuming a metric as input.
	We provide polylogarithmic depth parallel algorithms for the more general case where the metric is given implicitly by~$G$, obtaining more work-efficient solutions for a wide range of parameters.
	The details are given in Sections~\ref{sec:kmedian} and~\ref{sec:buyatbulk}, respectively.
\end{itemize}
Section~\ref{sec:conclusion} concludes the paper.

\paragraph*{Our Approach}

The algorithm of Khan et~al.~\cite{kkmpt-edaapte-12}, formulated for the Congest model~\cite{p-dclsa-00}, gives rise to an $\bigOT(\spd(G))$-depth parallel algorithm sampling from the \ac{FRT} distribution.
The \ac{SPD} is the maximum, over all $v,w \in V$, of the minimum hop-length of a shortest $v$-$w$-path.
Intuitively, $\spd(G)$ captures the number of iterations of \ac{MBF-like} algorithms in~$G$:
Each iteration updates distances until the $(\spd(G) + 1)$-th iteration does not yield new information.
Unfortunately, $\spd(G) = n - 1$ is possible, so a naive application of this algorithm results in poor performance.

A natural idea is to reduce the number of iterations by adding ``shortcuts'' to the graph.
Cohen~\cite{c-ptnlwasusp-00} provides an algorithm of depth $\polylog n$ and work $\bigOT(m^{1+\epsilon})$ that computes a \emph{$(d, \hat\epsilon)$-hop set} with $d \in \polylog n$:
This is a set $E'$ of additional edges such that $\dist(v, w, G) \leq \dist^d(v, w, G') \leq (1 + \hat\epsilon) \dist(v, w, G)$ for all $v, w \in V$, where $\hat\epsilon \in 1 / \polylog n$ and $\dist^d(v, w, G')$ is the minimum weight of a $v$-$w$-path with at most $d$ edges in $G$ augmented with~$E'$.
Note carefully that $\epsilon$ is different from~$\hat\epsilon$.
In other words, Cohen computes a metric embedding with the additional property that polylogarithmically many \ac{MBF-like} iterations suffice to determine $(1 + 1 / \polylog n)$-approximate distances.

The course of action might now seem obvious:
Run Cohen's algorithm, then run the algorithm by Khan et~al.\ on the resulting graph for $d \in \polylog n$ rounds, and conclude that the resulting output corresponds to a tree embedding of the original graph $G$ of stretch $\bigO((1 + 1 / \polylog n) \log n) = \bigO(\log n)$.
Alas, this reasoning is flawed:
Constructing \ac{FRT} trees crucially relies on the fact that the distances form a metric, i.e., satisfy the triangle inequality.
An approximate triangle inequality for approximate distances is insufficient since the \ac{FRT} construction relies on the \emph{subtractive} form of the triangle inequality, i.e., $\dist(v,w,G') - \dist(v,u,G') \leq \dist(w,u,G')$ for arbitrary $u,v,w \in V$.

Choosing a different hop set does not solve the problem:
Hop sets guarantee that $d$-hop distances \emph{approximate} distances, but any hop set that fulfills the triangle inequality on $d$-hop distances has to reduce the \ac{SPD} to at most~$d$, i.e., yield \emph{exact} distances:

\begin{observation}\label{obs:hop-set-spd}
	Let $G$ be a graph augmented with a $(d, \hat\epsilon)$-hop set.%
	\footnote{By the definitions in Section~\ref{sec:notation}.}
	If $\dist^d(\cdot, \cdot, G)$ is a metric, then $\dist^d(\cdot, \cdot, G)=\dist(\cdot, \cdot, G)$, i.e., $\spd(G) \leq d$.
\end{observation}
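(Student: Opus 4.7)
My plan is to argue the two inequalities separately, noting that the $\leq$ direction $\dist(\cdot, \cdot, G) \leq \dist^d(\cdot, \cdot, G)$ is immediate since every path of at most $d$ hops is a path, so restricting to $d$-hop paths cannot decrease the minimum-weight path. The interesting direction is $\dist^d(\cdot, \cdot, G) \leq \dist(\cdot, \cdot, G)$, which is where I would use the metric hypothesis.

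For the nontrivial inequality, fix $v, w \in V$ and let $v = v_0, v_1, \dots, v_k = w$ be a shortest $v$-$w$-path in~$G$ (with no restriction on the number of hops). Each single edge $(v_{i-1}, v_i)$ is a $1$-hop, and hence $d$-hop, path, so $\dist^d(v_{i-1}, v_i, G) \leq \weight(v_{i-1}, v_i)$. Since $\dist^d$ is assumed to be a metric, I can iterate its triangle inequality along the chosen path to obtain
\[
	\dist^d(v, w, G) \;\leq\; \sum_{i=1}^{k} \dist^d(v_{i-1}, v_i, G) \;\leq\; \sum_{i=1}^{k} \weight(v_{i-1}, v_i) \;=\; \dist(v, w, G).
\]
Combining this with the trivial lower bound yields $\dist^d(\cdot, \cdot, G) = \dist(\cdot, \cdot, G)$.

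To conclude the $\spd$ bound, I would unfold the definitions: $\dist^d = \dist$ means that for every pair $v, w$ there is a path of at most $d$ hops whose weight equals the shortest path weight. Hence the minimum hop-length of a shortest $v$-$w$-path is at most~$d$ for every $v, w$, which by definition gives $\spd(G) \leq d$.

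No step here looks like a real obstacle; the proof is essentially a one-liner once one notices that, for a true metric $\dist^d$, the triangle inequality along an arbitrary shortest path immediately contradicts a strict separation between $\dist^d$ and $\dist$. The only subtlety worth stating clearly is that ``$G$ is augmented with a hop set'' means that the added shortcut edges respect distances in the original graph, so $\dist(\cdot, \cdot, G)$ is unambiguous and edge weights along any shortest path are bounded above by themselves.
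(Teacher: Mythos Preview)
Your proof is correct and follows essentially the same argument as the paper: both pick a shortest $v$-$w$-path, iterate the triangle inequality of $\dist^d$ along its edges, and bound each $\dist^d$ term by the corresponding edge weight to sandwich $\dist^d(v,w,G)$ between $\dist(v,w,G)$ and itself. The paper compresses this into a single chain of inequalities, but the content is identical.
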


\begin{proof}
	Let $\pi$ be a shortest $u$-$v$-path in~$G$.
	Since $\dist^d(\cdot, \cdot, G)$ fulfills the triangle inequality,
	\begin{equation}
		\dist(u,v,G)\leq \dist^d(u, v, G)
			\leq \sum_{\{u_1, u_2\} \in \pi} \dist^d(u_1, u_2, G)
			\leq \sum_{\{u_1, u_2\} \in \pi} \weight(u_1, u_2)
			= \dist(u, v, G).\qedhere
	\end{equation}
\end{proof}

We overcome this obstacle by embedding $G'$ into a complete graph $H$ on the same node set that $(1 + \bigo(1))$-approximates distances in~$G$ but fulfills $\spd(H) \in \polylog n$.
In other words, where Cohen preserves distances \emph{exactly} and ensures existence of \emph{approximately} shortest paths with few hops, we preserve distances \emph{approximately} but guarantee that we obtain \emph{exact} shortest paths with few hops.
This yields a sequence of embeddings:
\begin{enumerate}
\item
	Start with the original graph~$G$,

\item
	augment $G$ with a $(d, 1 / \polylog n)$-hop set~\cite{c-ptnlwasusp-00}, yielding~$G'$, and

\item
	modify $G'$ to ensure a small \ac{SPD}, resulting in $H$ (Section~\ref{sec:h}).
\end{enumerate}
Unfortunately, this introduces a new obstacle:
As $H$ is complete, we cannot explicitly compute $H$ without incurring $\bigOmega(n^2)$ work.

\paragraph*{\acs{MBF-like} Algorithms}

This is where our novel perspective on \ac{MBF-like} algorithms comes into play.
We can simulate an iteration of any \ac{MBF-like} algorithm on~$H$, using only the edges of $G'$ and polylogarithmic overhead, resulting in an oracle for \acs{MBF-like} queries on~$H$.
Since $\spd(H) \in \polylog n$, the entire algorithm runs in polylogarithmic time and with a polylogarithmic work overhead w.r.t.~$G'$.

In an iteration of \iac{MBF-like} algorithm,
\begin{enumerate*}
\item
	the information stored at each node is \emph{propagated} to its neighbors,

\item
	each node \emph{aggregates} the received information, and

\item
	optionally \emph{filters} out irrelevant parts.
\end{enumerate*}
For example, in order for each node to determine the $k$ nodes closest to it, each node stores node--distance pairs (initially only themselves at distance~$0$) and then iterates the following steps:
\begin{enumerate*}
\item
	communicate the node--distance pairs to the neighbors (distances uniformly increased by the corresponding edge weight),

\item
	aggregate the received values by picking the node-wise minimum, and

\item
	discard all but the pairs corresponding to the $k$ closest sources.
\end{enumerate*}

It is well-known~\cite{agm-oteotapspp-97,m-sfasdp-02,z-apspubsarmm-02} that distance computations can be performed by multiplication with the (weighted) adjacency matrix $A$ over the min-plus semiring $\mathcal{S}_{\min,+} = (\Rdist, \min, +)$ (see Definition~\ref{def:semiring} in Appendix~\ref{app:algebra}).
For instance, if $B = A^h$ with $h \geq \spd(G)$, then $b_{vw} = \dist(v,w,G)$.
In terms of~$\mathcal{S}_{\min,+}$, propagation is the ``multiplication'' with an edge weight and aggregation is ``summation.''
The $(i+1)$-th iteration results in $x^{(i+1)} = r^V A x^{(i)}$, where $r^V$ is the (node-wise) filter and $x \in M^V$ the node values.
Both $M$ and $M^V$ form semimodules\dash---a semimodule supports scalar multiplication (propagation) and provides a semigroup (representing aggregation), compare Definition~\ref{def:semimodule} in Appendix~\ref{app:algebra}\dash---over~$\mathcal{S}_{\min,+}$.

In other words, in an $h$-iteration \ac{MBF-like} algorithm each node determines its part of the output based on its $h$-hop distances to all other nodes.
However, for efficiency reasons, various algorithms~\cite{akpw-gtgaksp-95,b-pamsaa-96,b-aamtm-98,hkn-atdacsssp-15,lp-frtcusm-13,lp-idsfc-14,lp-fpdea-15} compute only a subset of these distances.
The role of the filter is to remove the remaining values to allow for better efficiency.
The core feature of \iac{MBF-like} algorithm is that filtering is \emph{compatible} with propagation and aggregation:
If a node discards information and then propagates it, the discarded parts must be ``uninteresting'' at the receiving node as well.
We model this using a \emph{congruence relation} on the node states;
filters pick a suitable (efficiently encodable) representative of the node state's equivalence class.

\paragraph*{Constructing \acs{FRT} Trees}

This helps us to sample from the \ac{FRT} distribution as follows.
First, we observe that \iac{MBF-like} algorithm can acquire the information needed to represent \iac{FRT} tree.
Second, we can \emph{simulate} any \ac{MBF-like} algorithm on~$H$\dash---without explicitly storing~$H$\dash---using polylogarithmic overhead and \ac{MBF-like} iterations on~$G'$.
The previously mentioned sampling technique decomposes the vertices and edges of $H$ into $\Lambda \in \bigO(\log n)$ \emph{levels.}
We may rewrite its adjacency matrix as $A_H = \bigoplus_{\lambda = 0}^\Lambda P_{\lambda} A_{\lambda}^d P_{\lambda}$, where $\oplus$ is the ``addition'' of functions induced by the semimodule, $P_{\lambda}$~is a projection on nodes of at least level~$\lambda$, and $A_{\lambda}$ is a (slightly stretched) adjacency matrix of $G'$.
We are interested in $r^V A_H^h x^{(0)}$\dash---$h$~iterations on the graph $H$ followed by applying the node-wise filter~$r^V$.
The key insight is that the congruence relation allows us to apply intermediate filtering steps without changing the outcome, as filtering does not change the equivalence class of a state.
Hence, we may compute $(r^V \bigoplus_{\lambda = 0}^\Lambda P_{\lambda} (r^V A_{\lambda})^d P_{\lambda})^h x^{(0)}$ instead.
This repeated application of $r^V$ keeps the intermediate results small, ensuring that we can perform multiplication with $A_{\lambda}$ with $\bigOT(|E| + |E'|) \subseteq \bigOT(m^{1+\epsilon})$ work.
Since $d \in \polylog n$, $\Lambda \in \bigO(\log n)$, and each $A_\lambda$ accounts for $|E|+|E'|$ edges, this induces only polylogarithmic overhead w.r.t.\ iterations in~$G'$, yielding a highly efficient parallel algorithm of depth $\polylog n$ and work $\bigOT(m^{1 + \epsilon})$.

\subsection{Related Work}
\label{sec:rw}

We confine the discussion to undirected graphs.

\paragraph*{Classical Distance Computations}

The earliest\dash---and possibly also most basic\dash---algorithms for \ac{SSSP} computations are Dijkstra's algorithm~\cite{d-ntpcg-59} and the \acf{MBF} algorithm~\cite{b-rp-58,f-nft-56,m-sp-59}.
From the perspective of parallel algorithms, Dijkstra's algorithm performs excellent in terms of work, requiring $\bigOT(m)$ computational steps, but suffers from being inherently sequential, processing one vertex at a time.

\paragraph*{Algebraic Distance Computations}

The \ac{MBF} algorithm can be interpreted as a fixpoint iteration $Ax^{(i+1)} = Ax^{(i)}$, where $A$ is the adjacency matrix of the graph $G$ and ``addition'' and ``multiplication'' are replaced by $\min$ and~$+$, respectively.
This structure is known as the the min-plus semiring\dash---a.k.a.\ tropical semiring\dash---$\mathcal{S}_{\min,+} = (\Rdist,\min,+)$ (compare Section~\ref{sec:notation}), which is a well-established tool for distance computations~\cite{agm-oteotapspp-97,m-sfasdp-02,z-apspubsarmm-02}.
From this point of view, $\spd(G)$ is the number of iterations until a fixpoint is reached.
\ac{MBF} thus has depth $\bigOT(\spd(G))$ and work $\bigOT(m\spd(G))$, where small $\spd(G)$ are possible.

One may overcome the issue of large depth entirely by performing the fixpoint iteration on the matrix, by setting $A^{(0)} := A$ and iterating $A^{(i+1)} := A^{(i)} A^{(i)}$; after $\lceil \log \spd(G) \rceil \leq \lceil \log n \rceil$ iterations a fixpoint is reached~\cite{clrs-ia-09}.
The final matrix then has as entries exactly the pairwise node distances, and the computation has polylogarithmic depth.
This comes at the cost of $\bigOmega(n^3)$ work (even if $m \ll n^2$) but is as work-efficient as $n$ instances of Dijkstra's algorithm for solving \ac{APSP} in dense graphs, without incurring depth $\bigOmega(n)$.

Mohri~\cite{m-sfasdp-02} solved various shortest-distance problems using the $\mathcal{S}_{\min,+}$ semiring and variants thereof.
While Mohri's framework is quite general, our approach is different in crucial aspects:
\begin{enumerate}
\item
	Mohri uses an individual semiring for each problem and then solves it by a general algorithm.
	Our approach, on the other hand, is more generic as well as easier to use:
	We use off-the-shelf semirings\dash---usually just $\mathcal{S}_{\min,+}$\dash---and combine them with appropriate semimodules carrying problem-specific information.
	Further problem-specific customization happens in the definition of a congruence relation on the semiring;
	it specifies which parts of a node's state can be discarded because they are irrelevant for the problem.
	We demonstrate the modularity and flexibility of the approach by various examples in Section~\ref{sec:examples}, which cover a large variety of distance problems.

\item
	In our framework, node states are semimodule elements and edge weights are semiring elements;
	hence, there is no multiplication of node states.
	Mohri's approach, however, does not make that distinction and hence requires the introduction of an artificial ``multiplication'' between node states.

\item
	Mohri's algorithm can be interpreted as a generalization of Dijkstra's algorithm~\cite{d-ntpcg-59}, because it maintains a queue and, in each iteration, applies a relaxation technique to the dequeued element and its neighbors.
	This strategy is inherently sequential;
	to the best of our knowledge, we are the first to present a general algebraic framework for distance computations that exploits the implicit parallelism of the \ac{MBF} algorithm.

\item
	In Mohri's approach, choosing the global queueing strategy is not only an integral part of an algorithm, but also simplifies the construction of the underlying semirings, as one may rule that elements are processed in a ``convenient'' order.
	Our framework is flexible enough to achieve counterparts even of Mohri's more involved results without such assumptions;
	concretely, we propose a suitable semiring for solving the \ac{kSDP} and the \ac{kDSDP} in Section~\ref{sec:examples-allpaths}.
\end{enumerate}

\paragraph*{Approximate Distance Computations}

As metric embeddings reproduce distances only approximately, we may base them on approximate distance computation in the original graph.
Using rounding techniques and embedding $\mathcal{S}_{\min,+}$ into a polynomial ring, this enables to use fast matrix multiplication to speed up the aforementioned fixpoint iteration $A^{(i+1)} := A^{(i)} A^{(i)}$~\cite{z-apspubsarmm-02}.
This reduces the work to $\bigOT(n^{\omega})$ at the expense of only $(1 + \bigo(1))$-approximating distances, where $\omega < 2.3729$~\cite{lg-ptfmm-14} denotes the fast matrix-multiplication exponent.
However, even if the conjecture that $\omega = 2$ holds true, this technique must result in $\bigOmega(n^2)$ work, simply because $\bigOmega(n^2)$ pairwise distances are computed.

Regarding \ac{SSSP}, there was no work-efficient low-depth parallel algorithm for a long time, even when allowing approximation.
This was referred to as the ``sequential bottleneck:''
Matrix-matrix multiplication was inefficient in terms of work, while sequentially exploring (shortest) paths resulted in depth $\bigOmega(\spd(G))$.
Klein and Subramanian~\cite{ks-arpafsssp-97} showed that depth $\bigOT(\sqrt{n})$ can be achieved with $\bigOT(m\sqrt{n})$ work, beating the $n^2$ work barrier with sublinear depth in sparse graphs.\
As an aside, similar bounds were later achieved for exact \ac{SSSP} computations by Shi and Spencer~\cite{ss-twtssspp-99}.

In a seminal paper, Cohen~\cite{c-ptnlwasusp-00} proved that \ac{SSSP} can be $(1+\bigo(1))$-approximated at depth $\polylog n$ and near-optimal $\bigOT(m^{1+\epsilon})$ work, for any constant choice of $\epsilon > 0$;
her approach is based on the aforementioned hop-set construction.
Similar guarantees can be achieved deterministically.
Henziger et~al.~\cite{hkn-atdacsssp-15} focus on Congest algorithms, which can be interpreted in our framework to yield hop sets $(1 + 1 / \polylog n)$-approximating distances for $d \in 2^{\bigO(\sqrt{\log n})} \subset n^{\bigo(1)}$, and can be computed using depth $2^{\bigO(\sqrt{\log n}) }\subset n^{\bigo(1)}$ and work $m 2^{\bigO(\sqrt{\log n})}\subset m^{1+\bigo(1)}$.
In a recent breakthrough, Elkin and Neiman obtained hop sets with substantially improved trade-offs~\cite{en-hwchaatasp-16}, both for the parallel setting and the Congest model.

Our embedding technique is formulated independently from the underlying hop-set construction, whose performance is reflected in the depth and work bounds of our algorithms.
While the improvements by Elkin and Neiman do not enable us to achieve a work bound of $m^{1+\bigo(1)}$ when sticking to our goals of depth $\polylog n$ and expected stretch $\bigO(\log n)$, they can be used to obtain better trade-offs between the parameters.

\paragraph{Metric Tree Embeddings}

When metrically embedding into a tree, it is, in general, impossible to guarantee a small stretch.
For instance, when the graph is a cycle with unit edge weights, it is impossible to embed it into a tree without having at least one edge with stretch~$\bigOmega(n)$.
However, \emph{on average} the edges in this example are stretched by a constant factor only, justifying the hope that one may be able to randomly embed into a tree such that, for each pair of nodes, the \emph{expected} stretch is small.
A number of elegant algorithms~\cite{akpw-gtgaksp-95,b-pamsaa-96,b-aamtm-98,frt-tbaamtm-04} compute tree embeddings, culminating in the one by \acf{FRT} that achieves stretch $\bigO(\log n)$ in expectation.
This stretch bound is optimal in the worst case, as illustrated by expander graphs~\cite{b-aamtm-98}.
Mendel and Schwob show how to sample from the \ac{FRT} distribution in $\bigOT(m)$ steps~\cite{ms-fckrpsg-09}, matching the trivial $\Omega(m)$ lower bound up to polylogarithmic factors.
However, their approach relies on a pruned version of Dijkstra's algorithm for distance computations and hence does not lead to a low-depth parallel algorithm.

Several parallel and distributed algorithms compute \ac{FRT} trees~\cite{bgt-pptekmbabnd-12,gl-nodte-14,kkmpt-edaapte-12}.
These algorithms and ours have in common that they represent the embedding by \ac{LE} lists, which were first introduced in~\cite{c-sefatcr-97,ck-sdaon-07}.
In the parallel case, the state-of-the-art solution due to Blelloch et~al.~\cite{bgt-pptekmbabnd-12} achieves $\bigO(\log^2 n)$ depth and $\bigO(n^2 \log n)$ work.
However, Blelloch et~al.\ assume the input to be given as an $n$-point \emph{metric,} where the distance between two points can be queried at constant cost.
Note that our approach is more general as a metric can be interpreted as a complete weighted graph of \ac{SPD}~$1$;
a single \ac{MBF-like} iteration reproduces the result by Blelloch et~al.
Moreover, this point of view shows that the input required to achieve subquadratic work must be a sparse graph.
For graph inputs, we are not aware of any algorithms achieving $\polylog n$ depth and a non-trivial work bound, i.e., not incurring the $\bigOmega(n^3)$ work caused by relying on matrix-matrix multiplication.

In the distributed setting, Khan et~al.~\cite{kkmpt-edaapte-12} show how to compute \ac{LE} lists in $\bigO(\spd(G) \log n)$ rounds in the Congest model~\cite{p-dclsa-00}.
On the lower bound side, trivially $\bigOmega(\diam(G))$ rounds are required, where $\diam(G)$ is the maximum hop distance (i.e., ignoring weights) between nodes.
However, even if $\diam(G) \in \bigO(\log n)$, $\bigOmegaT(\sqrt{n})$ rounds are necessary~\cite{dhkknppw-dvhda-12,gl-nodte-14}.
Extending the algorithm by Khan et~al., in~\cite{gl-nodte-14} it is shown how to obtain a round complexity of $\bigOT(\min\{n^{1/2+\epsilon},\spd(G)\} + \diam(G))$ for any $\epsilon > 0$, at the expense of increasing the stretch to $\bigO(\epsilon^{-1}\log n)$.
We partly build on these ideas;
specifically, the construction in Section~\ref{sec:h} can be seen as a generalization of the key technique from~\cite{gl-nodte-14}.
As detailed in Section~\ref{sec:distributed}, our framework subsumes these algorithms and can be used to improve on the result from~\cite{gl-nodte-14}:
Leveraging further results~\cite{hkn-atdacsssp-15,lp-fpdea-15}, we obtain a metric tree embedding with expected stretch $\bigO(\log n)$ that is computed in $\min\{n^{1/2+\bigo(1)} + \diam(G)^{1+\bigo(1)}, \bigOT(\spd(G))\}$ rounds.

\subsection{Notation and Preliminaries}
\label{sec:notation}

We consider weighted, undirected graphs $G = (V, E, \weight)$ without loops or parallel edges with \emph{nodes~$V$,} \emph{edges~$E$,} and edge \emph{weights $\weight\colon E \to \R_{>0}$}.
Unless specified otherwise, we set $n := |V|$ and $m := |E|$.
For an edge $e = \{v,w\} \in E$, we write $\weight(v, w) := \weight(e)$, $\weight(v, v) := 0$ for $v \in V$, and $\weight(v,w) := \infty$ for $\{v,w\} \notin E$.
We assume that the ratio between maximum and minimum edge weight is polynomially bounded in $n$ and that each edge weight and constant can be stored with sufficient precision in a single register.\footnote{%
	As we are interested in approximation algorithms, $\bigO(\log n)$ bits suffice to encode values with sufficient precision.%
} We assume that $G$ is connected and given in the form of an adjacency list.

Let $p \subseteq E$ be a path.
$p$~has $|p|$ \emph{hops,} and \emph{weight} $\weight(p) := \sum_{e \in p} \weight(e)$.
For the nodes $v, w \in V$ let $\paths(v, w, G)$ denote the set of paths from $v$ to $w$ and $\paths^h(v, w, G)$ the set of such paths using at most $h$ hops.
We denote by $\dist^h(v, w, G) := \min\{ \weight(p) \mid p \in \paths^h(v, w, G) \}$ the minimum weight of an $h$-hop path from $v$ to~$w$, where $\min \emptyset := \infty$;
the \emph{distance} between $v$ and $w$ is $\dist(v,w,G):=\dist^n(v,w,G)$.
The \emph{shortest path hop distance} between $v$ and $w$ is $\hop(v, w, G) := \min\{ |p| \mid p\in \paths(v, w, G) \land \weight(p) = \dist(v, w, G) \}$;
$\mhspaths(v, w, G) := \{ p \in \paths^{\hop(v,w,G)}(v, w, G) \mid \weight(p) = \dist(v, w, G) \}$ denotes all \emph{min-hop shortest paths} from $v$ to~$w$.
Finally, the \emph{\acf{SPD}} of $G$ is $\spd(G) := \max\{ \hop(v, w, G) \mid v,w \in V \}$, and $\diam(G) := \min\{ h \in \N \mid \forall v,w \in V\colon \dist^h(v,w,G) < \infty \}$ is the unweighted \emph{hop diameter} of~$G$.

We sometimes use $\min$ and $\max$ as binary operators, assume $0 \in \N$, and define, for a set $N$ and $k \in \N$, $\binom{N}{k} := \{ M \subseteq N \mid |M| = k \}$ and denote by $\id\colon N \to N$ the identity function.
Furthermore, we use weak asymptotic notation hiding polylogarithmic factors in~$n$:
$\bigO(f(n) \polylog(n)) = \bigOT(f(n))$, etc.

\paragraph*{Model of Computation}

We use an abstract model of parallel computation similar to those used in circuit complexity;
the goal here is to avoid distraction by details such as read or write collisions or load balancing issues typical to PRAM models, noting that these can be resolved with (at most) logarithmic overheads.
The computation is represented by \iac{DAG} with constantly-bounded maximum indegree, where nodes represent words of memory that are given as input (indegree~$0$) or computed out of previously determined memory contents (non-zero indegree).
Words are computed with a constant number of basic instructions, e.g., addition, multiplication, comparison, etc.;
here, we also allow for the use of independent randomness.
For simplicity, a memory word may hold any number computed throughout the algorithm.
As pointed out above, $\bigO(\log n)$-bit words suffice for our purpose.

An algorithm defines, given the input, the \ac{DAG} and how the nodes' content is computed, as well as which nodes represent the output.
Given an instance of the problem, the \emph{work} is the number of nodes of the corresponding \ac{DAG} and the \emph{depth} is its longest path.
Assuming that there are no read or write conflicts, the work is thus (proportional to) the time required by a single processor (of uniform speed) to complete the computation, whereas the depth lower-bounds the time required by an infinite number of processors.
Note that the \ac{DAG} may be a random graph, as the algorithm may use randomness, implying that work and depth may be random variables.
When making probabilistic statements, we require that they hold for all instances, i.e., the respective probability bounds are satisfied after fixing an arbitrary instance.

\paragraph*{Probability}

A claim holds \emph{with high probability~(w.h.p.)} if it occurs with a probability of at least $1 - n^{-c}$ for any fixed choice of $c \in \R_{\geq 1}$;
$c$~is a constant in terms of the $\bigO$-notation.
We use the following basic statement frequently and implicitly throughout this paper.

\begin{lemma}\label{lem:whp}
	Let $\mathcal{E}_1, \dots, \mathcal{E}_k$ be events occurring w.h.p., and $k \in \poly n$.
	$\mathcal{E}_1 \cap \dots \cap \mathcal{E}_k$ occurs w.h.p.
\end{lemma}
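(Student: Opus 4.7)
The plan is a straightforward union bound, but with care for how the quantifier on the constant $c$ in the definition of ``with high probability'' interacts with the polynomial $k$. I would begin by unpacking the definition: since $k \in \poly n$, fix a constant $d \in \R_{\geq 1}$ with $k \leq n^d$ for all sufficiently large $n$. Next, the goal is to show that for an arbitrary fixed target constant $c' \in \R_{\geq 1}$, the intersection fails with probability at most $n^{-c'}$.

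The key step is to apply the w.h.p.\ guarantee of each $\mathcal{E}_i$ with a \emph{larger} constant than the target. Concretely, set $c := c' + d$. Since each $\mathcal{E}_i$ occurs w.h.p., we have $\Prob[\neg \mathcal{E}_i] \leq n^{-c}$ (for $n$ large enough, with $c$ as the constant hidden by the definition). A union bound then yields
\begin{equation}
\Prob\!\left[\,\overline{\mathcal{E}_1 \cap \dots \cap \mathcal{E}_k}\,\right]
\;\leq\; \sum_{i=1}^{k} \Prob[\neg \mathcal{E}_i]
\;\leq\; k \cdot n^{-c}
\;\leq\; n^{d} \cdot n^{-(c'+d)}
\;=\; n^{-c'},
\end{equation}
which is exactly the bound needed to conclude that $\mathcal{E}_1 \cap \dots \cap \mathcal{E}_k$ occurs w.h.p.

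The only subtlety\dash---which I would flag but not belabor\dash---is the order of quantifiers: the constant $c$ used for the individual events depends on the target $c'$ and on the polynomial degree $d$ bounding $k$, but both are fixed constants, so $c$ itself is a constant in the $\bigO$-notation sense. I do not expect any real obstacle here; the lemma is essentially a reformulation of the union bound tailored to the paper's definition of w.h.p., and the short calculation above suffices.
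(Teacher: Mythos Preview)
Your proposal is correct and matches the paper's proof essentially line for line: both fix a polynomial bound on $k$, choose the individual-event constant to exceed the target constant by the degree of that polynomial, and finish with a union bound. The only cosmetic difference is that the paper writes $k \leq a n^b$ and offsets by $b + \log_n a$, whereas you absorb the leading constant into ``sufficiently large $n$'' and offset by $d$.
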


\begin{proof}
	We have $k \leq an^b$ for fixed $a, b \in \R_{>0}$ and choose that all $\mathcal{E}_i$ occur with a probability of at least $1 - n^{-c'}$ with $c' = c + b + \log_n a$ for some fixed $c \geq 1$.
	The union bound yields
	\begin{equation}
		\Prob[\overline{\mathcal{E}_1 \cap \dots \cap \mathcal{E}_k}]
			\leq \sum_{i=1}^k \Prob[\bar{\mathcal{E}_i}]
			\leq k n^{-c'}
			= an^b n^{-c-b-\log_n a}
			= n^{-c},
	\end{equation}
	hence $\mathcal{E}_1 \cap \dots \cap \mathcal{E}_k$ occurs w.h.p.\ as claimed.
\end{proof}

\paragraph*{Hop Sets}

A graph $G = (V, E, \weight)$, \emph{contains a $(d, \hat\epsilon)$-hop set} if
\begin{equation}\label{eq:hop-set}
	\forall v,w \in V\colon\quad
		\dist^d(v, w, G) \leq (1 + \hat\epsilon) \dist(v, w, G),
\end{equation}
i.e., if its $d$-hop distances are a $(1 + \hat\epsilon)$-approximation of its distances.
This definition is based on Cohen~\cite{c-ptnlwasusp-00}, who describes how to efficiently add edges to $G$ to establish this property.

\paragraph*{Distance Metrics}

The min-plus semiring ${\mathcal S}_{\min,+} = (\Rdist, \min, +)$, also referred to as the tropical semiring, forms a semiring, i.e., a ring without additive inverses (see Definition~\ref{def:semiring} in Appendix~\ref{app:algebra}).
Unless explicitly stated otherwise, we associate $\oplus$ and $\odot$ with the addition and multiplication of the underlying ring throughout the paper;
in this case we use $a \oplus b := \min\{a,b\}$ and $a \odot b := a+b$.
Observe that $\infty$ and $0$ are the neutral elements w.r.t.\ $\oplus$ and~$\odot$, respectively.
We sometimes write $x \in \mathcal{S}_{\min,+}$ instead of $x \in \Rdist$ to refer to the elements of a semiring.
Furthermore, we follow the standard convention to occasionally leave out $\odot$ and give it priority over~$\oplus$, e.g., interpret $ab \oplus c$ as $(a \odot b) \oplus c$ for all $a,b,c \in \mathcal{S}_{\min,+}$.

The min-plus semiring is a well-established tool to determine pairwise distances in a graph via the distance product, see e.g.~\cite{agm-oteotapspp-97,m-sfasdp-02,z-apspubsarmm-02}.
Let $G = (V, E, \weight)$ be a weighted graph and let $A \in \mathcal{S}_{\min,+}^{V \times V}$ be its \emph{adjacency matrix~$A$,} given by
\begin{equation}\label{eq:minplus-adjacencymatrix}
	(a_{vw}) := \begin{cases}
	0 & \text{if }v=w\\
	\weight(v,w) & \text{if }\{v,w\}\in E\\
	\infty & \text{otherwise.}
	\end{cases}
\end{equation}
Throughout this paper, the operations involved in matrix addition and multiplication are the operations of the underlying semiring, i.e., for square matrices $A, B$ with row and column index set $V$ we have
\begin{align}
	(A \oplus B)_{vw} &= \min\{ a_{vw}, b_{vw} \}\text{ and} \\
	(AB)_{vw}         &= \min_{u \in V} \{ a_{vu} + b_{uw} \}.
\end{align}
The \emph{distance product $A^h$} corresponds to $h$-hop distances, i.e., $(A^h)_{vw} = \dist^h(v, w, G)$~\cite{agm-oteotapspp-97}.
In particular, this corresponds to the exact distances between all pairs of nodes for $h \geq \spd(G)$.

\section{\acs{MBF-like} Algorithms}
\label{sec:mbf}

The \acf{MBF} algorithm~\cite{b-rp-58,f-nft-56,m-sp-59} is both fundamental and elegant.
In its classical form, it solves the \ac{SSSP} problem:
In each iteration, each node communicates its current upper bound on its distance to the source node~$s$ (initially $\infty$ at all nodes but~$s$) plus the corresponding edge weight to its neighbors, which then keep the minimum of the received values and their previously stored one.
Iterating $h$ times determines all nodes' $h$-hop distances to~$s$.

Over the years, numerous algorithms emerged that use similar iterative schemes for distributing information~\cite{akpw-gtgaksp-95,b-pamsaa-96,b-aamtm-98,frt-tbaamtm-04,hkn-atdacsssp-15,lp-frtcusm-13,lp-idsfc-14,lp-fpdea-15}.
It is natural to ask for a characterization that captures all these algorithms.
In this section, we propose such a characterization: the class of \emph{\acs{MBF-like} algorithms.}
The common denominator of these algorithms is the following:
\begin{itemize}
\item
	An initial state vector $x^{(0)} \in M^V$ contains information initially known to each node.

\item
	In each \emph{iteration,} each node first \emph{propagates} information along all incident edges.

\item
	All nodes then \emph{aggregate} the received information.
	This and the previous step are precisely the same as updating the state vector $x^{(i)}$ by the matrix-vector product $x^{(i+1)} = A x^{(i)}$ over the min-plus semiring.

\item
	Finally, irrelevant information is \emph{filtered} out before moving on to the next iteration.
\end{itemize}
As a concrete example consider \ac{kSSP}, the task of determining for each node the list of its $k$ closest nodes.
To this end, one needs to consider all nodes as sources, i.e., run the multi-source variant of the classic \ac{MBF} algorithm with all nodes as sources.
Nodes store values in $(\Rdist)^V$, so that in iteration $i$ each node $v \in V$ can store $\dist^i(v,w,G) \in \Rdist$ for all $w \in V$.
Initially, $x^{(0)}_{vw}$ is $0$ if $v = w$ and $\infty$ everywhere else (the $0$-hop distances).
\emph{Propagating} these distances over an edge of weight $\weight(e)$ means uniformly increasing them by~$\weight(e)$.
During \emph{aggregation,} each node picks, for each target node, the smallest distance reported so far.
This is costly, since each node might learn non-$\infty$ distances values for \emph{all} other nodes.
To increase efficiency, we \emph{filter} out, in each iteration and at each node, all source--distance pairs but the $k$ pairs with smallest distance.
This reduces the amount of work per iteration from $\bigThetaT(mn)$ to $\bigThetaT(mk)$.

The filtering step generalizes from classic \ac{MBF} to \iac{MBF-like} algorithm, with the goal of reducing work.
The crucial characteristics exploited by this idea are the following.
\begin{itemize}
\item
	Propagation and aggregation are interchangeable.
	It makes no difference whether two pieces of information are propagated separately or as a single aggregated piece of information.

\item
	Filtering or not filtering after aggregation has no impact on the correctness (i.e., the output) of an algorithm, only on its efficiency.
\end{itemize}

In this section, we formalize this approach for later use in more advanced algorithms.
To this end, we develop a characterization of \ac{MBF-like} algorithms in Sections~\ref{sec:mbf-propagation-aggregation}--\ref{sec:mbf-definition} and establish basic properties in Section~\ref{sec:mbf-functions}.
We demonstrate that our approach applies to a wide variety of known algorithms in Section~\ref{sec:examples}.
In order to maintain self-containment without obstructing presentation, basic algebraic definitions are given in Appendix~\ref{app:algebra}.

\subsection{Propagation and Aggregation}
\label{sec:mbf-propagation-aggregation}

Let $M$ be the set of node states, i.e., the possible values that \iac{MBF-like} algorithm can store at a vertex.
We represent propagation of $x \in M$ over an edge of weight $s \in \Rdist$ by $s \odot x$, where $\odot\colon \Rdist \times M \to M$, aggregation of $x,y \in M$ at some node by $x \oplus y$, where $\oplus\colon M \times M \to M$, and filtering is deferred to Section~\ref{sec:mbf-filters}.
Concerning the aggregation of information, we demand that $\oplus$ is associative and has a neutral element $\bot \in M$ encoding ``no available information,'' hence $(M, \oplus)$ is a semigroup with neutral element~$\bot$.
Furthermore, we require for all $s,t \in \Rdist$ and $x,y \in M$ (note that we ``overload'' $\oplus$ and~$\odot$):
\begin{align}
	0 \odot x            &= x \label{eq:module-one} \\
	\infty \odot x       &= \bot \label{eq:module-zero} \\
	s \odot (x \oplus y) &= (s \odot x) \oplus (s \odot y) \label{eq:module-dist1} \\
	(s \oplus t) \odot x &= (s \odot x) \oplus (t \odot x) \label{eq:module-dist2} \\
	(s \odot t) \odot x  &= s \odot (t \odot x) \label{eq:module-dist3}.
\end{align}
Our requirements are quite natural:
Equations~\eqref{eq:module-one} and~\eqref{eq:module-zero} state that propagating information over zero distance (e.g.\ keeping it at a vertex) does not alter it and that propagating it infinitely far away (i.e., ``propagating'' it over a non-existing edge) means losing it, respectively.
Note that $0$ and $\infty$ are the neutral elements w.r.t.\ $\odot$ and $\oplus$ in $\mathcal{S}_{\min,+}$.
Equation~\eqref{eq:module-dist1} says that propagating aggregated information is equivalent to aggregating propagated information (along identical distances),
Equation~\eqref{eq:module-dist2} means that propagating information over a shorter of two edges is equivalent to moving it along both edges and then aggregating it (information ``becomes obsolete'' with increasing distance),
and Equation~\eqref{eq:module-dist3} states that propagating propagated information can be done in a single step.

Altogether, this is equivalent to demanding that $\mathcal M = (M, \oplus, \odot)$ is a zero-preserving semimodule (see Definition~\ref{def:semimodule} in Appendix~\ref{app:algebra}) over~$\mathcal{S}_{\min,+}$.
A straightforward choice of $\mathcal{M}$ is the direct product of $|V|$ copies of $\Rdist$, which is suitable for most of the applications we consider.
\begin{definition}[Distance Map]\label{def:distance-map}
	The \emph{distance map semimodule} $\mathcal{D} := ((\Rdist)^V, \oplus, \odot)$ is given by, for all $s \in \mathcal{S}_{\min,+}$ and $x,y \in \mathcal{D}$,
	\begin{align}
		(x \oplus y)_v &:= x_v \oplus y_v = \min\{x_v, y_v\} \\
		(s \odot x)_v  &:= s \odot x_v    = s + x_v
	\end{align}
	where $\bot := (\infty,\ldots,\infty)^\T \in \mathcal{D}$ is the neutral element w.r.t.~$\oplus$.
\end{definition}

\begin{corollary}\label{cor:distance-map}
	$\mathcal{D}$ is a zero-preserving semimodule over $\mathcal{S}_{\min,+}$ with zero $\bot = (\infty, \dots, \infty)^\T$ by Lemma~\ref{lem:semiring-to-the-k}.
\end{corollary}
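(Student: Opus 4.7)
The plan is to invoke Lemma~\ref{lem:semiring-to-the-k} from the appendix directly: this lemma presumably asserts that for any semiring $\mathcal{S}$ and any index set of size $k$, the $k$-fold direct product $\mathcal{S}^k$ with componentwise operations forms a zero-preserving semimodule over $\mathcal{S}$, with the all-zero (i.e., all additively-neutral) vector as its zero element. Once we have this lemma in hand, the corollary reduces to observing that $\mathcal{D}$ is precisely the instance of this construction with $\mathcal{S} := \mathcal{S}_{\min,+}$ and $k := |V|$.

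First, I would check that the operations $\oplus$ and $\odot$ from Definition~\ref{def:distance-map} agree with those of the product semimodule: the aggregation $(x \oplus y)_v = \min\{x_v, y_v\}$ is the componentwise application of $\oplus$ in $\mathcal{S}_{\min,+}$, and the scalar action $(s \odot x)_v = s + x_v$ is the componentwise application of the semiring multiplication. These are exactly the product operations, so Definition~\ref{def:distance-map} is nothing but unfolding the general construction for our concrete choice of semiring and index set.

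Next, I would verify that the claimed zero $\bot = (\infty,\ldots,\infty)^\T$ is the right object. Since $\infty$ is the neutral element of $\oplus$ in $\mathcal{S}_{\min,+}$ and also satisfies $\infty \odot x = \bot$ (Equation~\eqref{eq:module-zero}) for scalar multiplication on each component, the product vector of $\infty$'s is simultaneously the $\oplus$-neutral element of $\mathcal{D}$ and annihilates under scalar multiplication by $\infty$; this is exactly the condition for being a \emph{zero-preserving} semimodule's zero in the sense of Definition~\ref{def:semimodule}.

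There is essentially no obstacle here: the statement is a one-line corollary whose entire content is that Lemma~\ref{lem:semiring-to-the-k} applies. The only thing worth being careful about is ensuring that the lemma's hypotheses on $\mathcal{S}_{\min,+}$ (being a semiring, having a well-defined $\oplus$-neutral element, and that element absorbing under scalar multiplication in the intended sense) are indeed satisfied\dash---which is immediate from the standard properties of the tropical semiring recalled in Section~\ref{sec:notation}. Hence the proof amounts to little more than ``apply Lemma~\ref{lem:semiring-to-the-k} with $k = |V|$.''
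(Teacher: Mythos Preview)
Your proposal is correct and follows exactly the paper's approach: the corollary is stated without proof in the paper precisely because it is an immediate instance of Lemma~\ref{lem:semiring-to-the-k} with $\mathcal{S} = \mathcal{S}_{\min,+}$ and $k = |V|$. Your verification that the componentwise operations of Definition~\ref{def:distance-map} match those of the lemma and that $\bot = (\infty,\dots,\infty)^\T$ is the product of the additive neutral element is more detail than the paper itself provides, but entirely in the same spirit.
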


Distance maps can be represented by only storing the non-$\infty$ distances (and their indices from~$V$).
This is of interest when there are few non-$\infty$ entries, which can be ensured by filtering (see below).
In the following, we denote by $|x|$ the number of non-$\infty$ entries of $x \in \mathcal{D}$.
The following lemma shows that this representation allows efficient aggregation.
\begin{lemma}\label{lem:aggregation}
	Suppose $x_1, \dots, x_n \in \mathcal D$ are stored in lists of index--distance pairs as above.
	Then $\bigoplus_{i=1}^n x_i$ can be computed with $\bigO(\log n)$ depth and $\bigO(\sum_{i=1}^n |x_i| \log n)$ work.
\end{lemma}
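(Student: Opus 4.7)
The plan is to reduce the aggregation to standard parallel primitives on a flattened array, exploiting that $(\bigoplus_i x_i)_v = \min_i x_{i,v}$ by Definition~\ref{def:distance-map}.

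First, I would concatenate the $n$ input lists into a single array of index--distance pairs of total length $N := \sum_{i=1}^n |x_i|$. Using a parallel prefix sum on the sequence $(|x_1|,\dots,|x_n|)$ to compute offsets, this flattening can be done with $O(\log n)$ depth and $O(n + N)$ work.

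Second, I would sort the flattened array lexicographically by $(\text{index},\text{distance})$, using a comparison-based parallel sort such as Cole's mergesort, which achieves $O(\log N)$ depth and $O(N \log N)$ work on $N$ keys. Because each $|x_i| \leq |V| = n$, we have $N \leq n^2$ and therefore $\log N = O(\log n)$. After sorting, all pairs sharing an index $v \in V$ form a contiguous block whose first entry carries the smallest distance seen for~$v$; any other entry is an index where some $x_i$ stores $\infty$ (and is absent from its list representation), so the block exactly encodes $\min_i x_{i,v}$, matching the coordinate-wise $\oplus$ from Definition~\ref{def:distance-map}. This step contributes $O(\log n)$ depth and $O(N\log n) = O(\sum_{i=1}^n |x_i|\log n)$ work, which dominates the total cost.

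Third, I would extract the block-leading entries. A single comparison between each position and its predecessor (depth $O(1)$, work $O(N)$) marks the first element of each block, and a parallel prefix sum over the marker bits compacts the surviving pairs into the output list (depth $O(\log n)$, work $O(N)$). The resulting list is exactly the sparse representation of $\bigoplus_{i=1}^n x_i$, since entries whose index is missing from every $x_i$ contribute no pair and remain $\infty$ by convention. Adding up the three stages yields the claimed $O(\log n)$ depth and $O(\sum_{i=1}^n |x_i|\log n)$ work. The only potential obstacle is making sure that the parallel sort and prefix-sum primitives fit into the DAG model described in Section~\ref{sec:notation}, but both are standard comparator-network constructions, so no subtleties arise here.
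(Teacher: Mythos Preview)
Your proposal is correct and follows essentially the same approach as the paper: lexicographically sort the union of all index--distance pairs, then keep only the first pair of each index-block. The paper cites the AKS sorting network where you invoke Cole's mergesort, and it leaves the flattening and compaction steps implicit where you spell them out via prefix sums, but the algorithmic core is identical.
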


\begin{proof}
	We sort $\bigcup_{i=1}^n x_i$ in ascending lexicographical order.
	This can be done in parallel with $\bigO(\log (\sum_{i=1}^n|x_i|)) \subseteq \bigO(\log n)$ depth and $\bigO(\sum_{i=1}^n |x_i| \log n)$ work~\cite{a-nlnsn-83}.
	Then we delete each pair for which the next smaller pair has the same index;
	the resulting list hence contains, for each $v \in V$ for which there is a non-$\infty$ value in some list~$x_i$, the minimum such value.
	As this operation is easy to implement with $\bigO(\log n)$ depth and $\bigO(\sum_{i=1}^n |x_i| \log n)$ work, the claim follows.
\end{proof}

While $\mathcal{S}_{\min,+}$ and $\mathcal D$ suffice for most applications and are suitable to convey our ideas, it is sometimes necessary to use a different semiring.
We elaborate on this in Section~\ref{sec:examples}.
Hence, rather than confining the discussion to semimodules over~$\mathcal{S}_{\min,+}$, in the following we make general statements about an arbitrary semimodule $\mathcal{M} = (M, \oplus, \odot)$ over an arbitrary semiring $\mathcal{S} = (S, \oplus, \odot)$ wherever it does not obstruct the presentation.
It is, however, helpful to keep $\mathcal{S} = \mathcal{S}_{\min,+}$ and $\mathcal{M} = \mathcal{D}$ in mind.

\subsection{Filtering}
\label{sec:mbf-filters}

\ac{MBF-like} algorithms achieve efficiency by maintaining and propagating\dash---instead of the full amount of information nodes are exposed to\dash---only a filtered (small) representative of the information they obtained.
Our goal in this section is to capture the properties a filter must satisfy to not affect output correctness.
We start with a congruence relation, i.e., an equivalence relation compatible with propagation and aggregation, on~$\mathcal M$.
A filter $r\colon \mathcal{M} \to \mathcal{M}$ is a projection mapping all members of an equivalence class to the same representative within that class, compare Definition~\ref{def:representative}.

\begin{definition}[Congruence Relation]\label{def:congruence}
	Let $\mathcal M = (M, \oplus, \odot)$ be a semimodule over the semiring $\mathcal S$ and $\sim$ an equivalence relation on~$M$.
	We call $\sim$ a \emph{congruence relation on~$\mathcal M$} if and only if
	\begin{align}
		\forall s \in \mathcal{S}, \forall x,x' \in \mathcal{M}\colon \quad
			& x \sim x' \Rightarrow sx \sim sx' \label{eq:mult-eq} \\
		\forall x,x',y,y' \in \mathcal{M}\colon \quad
			& x \sim x' \land y \sim y' \Rightarrow x \oplus y \sim x' \oplus y'. \label{eq:sum-eq}
	\end{align}
\end{definition}
A congruence relation induces a quotient semimodule.
\begin{observation}
	Denote by $[x]$ the equivalence class of $x \in \mathcal{M}$ under the congruence relation $\sim$ on the semimodule~$\mathcal M$.
	Set $\quotient{M}{\sim} := \{ [x] \mid x \in \mathcal{M} \}$.
	Then $\quotient{\mathcal M}{\sim} := (\quotient{M}{\sim}, \oplus, \odot)$ is a semimodule with the operations $[x] \oplus [y] := [x\oplus y]$ and $s \odot [x] := [sx]$.
\end{observation}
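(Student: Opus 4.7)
The plan is to verify two things: first, that the operations $\oplus$ and $\odot$ on $\quotient{M}{\sim}$ are well-defined (i.e., independent of the choice of representatives), and second, that the resulting structure satisfies the semimodule axioms (Definition~\ref{def:semimodule}).

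For well-definedness, suppose $[x] = [x']$ and $[y] = [y']$, which means $x \sim x'$ and $y \sim y'$. Condition~\eqref{eq:sum-eq} of Definition~\ref{def:congruence} immediately gives $x \oplus y \sim x' \oplus y'$, so $[x \oplus y] = [x' \oplus y']$, showing that $\oplus$ on equivalence classes does not depend on the chosen representatives. Similarly, for $s \in \mathcal{S}$ and $[x] = [x']$, condition~\eqref{eq:mult-eq} yields $sx \sim sx'$, so $[sx] = [sx']$. Thus both operations on $\quotient{M}{\sim}$ are well-defined.

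For the semimodule axioms, the strategy is simply to lift each axiom from $\mathcal{M}$ to $\quotient{\mathcal{M}}{\sim}$ by working with representatives. For instance, associativity of $\oplus$ follows from $([x] \oplus [y]) \oplus [z] = [(x \oplus y) \oplus z] = [x \oplus (y \oplus z)] = [x] \oplus ([y] \oplus [z])$, the middle equality holding because associativity holds in $\mathcal{M}$. The neutral element of $\oplus$ in $\quotient{\mathcal M}{\sim}$ is $[\bot]$, and the remaining distributivity and compatibility conditions (analogues of Equations~\eqref{eq:module-one}--\eqref{eq:module-dist3}) transfer verbatim in the same representative-wise manner.

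No step is genuinely difficult here; the content of the observation is essentially that the congruence conditions are exactly what is needed for well-definedness, after which everything else is a routine transfer. The only thing to be careful about is to invoke well-definedness before each computation on equivalence classes, so the potential obstacle is purely notational rather than mathematical. Given the routine nature, a short proof noting well-definedness and observing that the axioms transfer representative-wise suffices.
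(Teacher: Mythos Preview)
Your proposal is correct and follows the standard argument; the paper itself states this result as an observation without proof, so your verification of well-definedness via~\eqref{eq:mult-eq}--\eqref{eq:sum-eq} followed by the representative-wise transfer of the semimodule axioms is exactly the intended (and only reasonable) approach. One minor remark: the observation as stated only claims that the quotient is a semimodule, not a zero-preserving one, so per Definition~\ref{def:semimodule} you need only that $(\quotient{M}{\sim},\oplus)$ is a semigroup---your mention of the neutral element $[\bot]$ is a harmless extra.
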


\Iac{MBF-like} algorithm performs efficient computations by implicitly operating on this quotient semimodule, i.e., on suitable, typically small, representatives of the equivalence classes.
Such representatives are obtained in the filtering step using the a \emph{representative projection,} also referred to as \emph{filter.}
We refer to this step as filtering since, in all our applications and examples, it discards a subset of the available information that is irrelevant to the problem at hand.

\begin{definition}[Representative Projection]\label{def:representative}
	Let $\mathcal{M} = (M, \oplus, \odot)$ be a semimodule over the semiring $\mathcal S$ and $\sim$ a congruence relation on~$\mathcal M$.
	Then $r\colon \mathcal{M} \to \mathcal{M}$ is a \emph{representative projection w.r.t.~$\sim$} if and only if
	\begin{align}
		\forall x \in \mathcal{M}\colon \quad
			& x \sim r(x) \label{eq:projection} \\
		\forall x,y \in \mathcal{M}\colon \quad
			& x \sim y \Rightarrow r(x) = r(y). \label{eq:eq}
	\end{align}
\end{definition}

\begin{observation}
	A representative projection is a projection, i.e., $r^2 = r$.
\end{observation}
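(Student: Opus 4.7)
The plan is to derive $r^2 = r$ in one short step directly from the two defining properties of a representative projection in Definition~\ref{def:representative}.

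Fix an arbitrary $x \in \mathcal{M}$. First I would invoke property~\eqref{eq:projection} applied to $x$, which yields $x \sim r(x)$. Next I would invoke property~\eqref{eq:eq} with the pair $(x, r(x))$, whose hypothesis $x \sim r(x)$ we have just established; the conclusion is $r(x) = r(r(x))$. Since $x$ was arbitrary, this gives $r \circ r = r$, i.e.\ $r^2 = r$, which is exactly idempotence of~$r$ as a projection.

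There is essentially no obstacle here: the statement is a two-line consequence of the definition, with the only subtlety being to notice that property~\eqref{eq:eq} applies even when one of the two arguments is already in the image of~$r$ (using \eqref{eq:projection} to supply the required equivalence). No properties of $\oplus$, $\odot$, the semimodule structure, or the underlying semiring are needed.
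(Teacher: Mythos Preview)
Your proof is correct and is exactly the intended one-line argument; the paper in fact states this as an observation without proof, and your derivation from~\eqref{eq:projection} and~\eqref{eq:eq} is the natural way to justify it.
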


In the following, we typically first define a suitable projection~$r$;
this projection in turn defines equivalence classes $[x]:=\{ y \in \mathcal{M} \mid r(x)=r(y) \}$.
The following lemma is useful when we need to show that equivalence classes defined this way yield a congruence relation, i.e., are suitable for \ac{MBF-like} algorithms.
\begin{lemma}\label{lem:congruence-by-r}
	Let $\mathcal M$ be a semimodule over the semiring~$\mathcal S$, let $r\colon \mathcal{M} \to \mathcal{M}$ be a projection, and for $x,y \in \mathcal{M}$, let $x \sim y :\Leftrightarrow r(x) = r(y)$.
	Then $\sim$ is a congruence relation with representative projection $r$ if:
	\begin{align}
		\forall s \in \mathcal{S}, \forall x,x' \in \mathcal{M}\colon \quad
			& r(x) = r(x') \Rightarrow r(sx) = r(sx')\text{, and} \label{eq:congruence-by-r-product} \\
		\forall x,x',y,y' \in \mathcal{M}\colon \quad
			& r(x) = r(x') \land r(y) = r(y') \Rightarrow r(x \oplus y) = r(x' \oplus y'). \label{eq:congruence-by-r-sum}
	\end{align}
\end{lemma}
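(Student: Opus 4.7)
The plan is to verify the three things the statement asserts, essentially just by unpacking definitions, since the hypotheses are designed to match up with what needs to be proved.

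First I would check that $\sim$ as defined by $x \sim y :\Leftrightarrow r(x) = r(y)$ is indeed an equivalence relation. This is immediate because equality on $M$ is reflexive, symmetric, and transitive, and $r$ is a well-defined function on $M$; so $\sim$ inherits these properties.

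Next I would verify that $\sim$ satisfies the two congruence conditions from Definition~\ref{def:congruence}. For \eqref{eq:mult-eq}, suppose $x \sim x'$, i.e., $r(x) = r(x')$. Then hypothesis \eqref{eq:congruence-by-r-product} directly yields $r(sx) = r(sx')$, that is, $sx \sim sx'$. For \eqref{eq:sum-eq}, if $x \sim x'$ and $y \sim y'$ then $r(x) = r(x')$ and $r(y) = r(y')$, and \eqref{eq:congruence-by-r-sum} gives $r(x \oplus y) = r(x' \oplus y')$, i.e., $x \oplus y \sim x' \oplus y'$. Hence $\sim$ is a congruence relation on $\mathcal{M}$.

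Finally I would check that $r$ is a representative projection w.r.t.\ $\sim$ in the sense of Definition~\ref{def:representative}. Property~\eqref{eq:projection} says $x \sim r(x)$, which by definition of $\sim$ means $r(x) = r(r(x))$; this is exactly the observation $r^2 = r$ noted immediately above the lemma, i.e., the hypothesis that $r$ is a projection. Property~\eqref{eq:eq} is immediate: $x \sim y$ means, by definition, $r(x) = r(y)$. This finishes the proof.

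There is no real obstacle here; the lemma is essentially a convenient repackaging for later use. The only thing to note is that the definition of $\sim$ through $r$ automatically forces \eqref{eq:eq}, while \eqref{eq:projection} is precisely the idempotence of $r$\dash---so the two hypotheses in the lemma are exactly what is needed to lift equality of representatives from arguments to $s \odot x$ and $x \oplus y$, which is why they coincide with the two congruence conditions modulo replacing $\sim$ by $r(\cdot) = r(\cdot)$.
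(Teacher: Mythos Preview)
Your proof is correct and follows essentially the same approach as the paper's own proof, which is a terse two-sentence verification that $\sim$ is an equivalence relation, that $r$ satisfies \eqref{eq:projection} and \eqref{eq:eq}, and that \eqref{eq:mult-eq} and \eqref{eq:sum-eq} follow directly from the hypotheses. You have simply spelled out each of these steps in more detail.
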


\begin{proof}
	Obviously, $\sim$ is an equivalence relation, and $r$ fulfills~\eqref{eq:projection} and~\eqref{eq:eq}.
	Conditions~\eqref{eq:mult-eq} and~\eqref{eq:sum-eq} directly follow from the preconditions of the lemma.
\end{proof}

\Iac{MBF-like} algorithm has to behave in a compatible way for all vertices in that each vertex follows the same propagation, aggregation, and filtering rules.
This induces a semimodule structure on the (possible) state vectors of the algorithm in a natural way.

\begin{definition}[Power Semimodule]\label{def:powermodule}
	Given a node set $V$ and a zero-preserving semimodule $\mathcal{M} = (M, \oplus, \odot)$ over the semiring~$\mathcal S$, we define $\mathcal{M}^V = (M^V, \oplus, \odot)$ by applying the operations of $\mathcal M$ coordinatewise, i.e., $\forall v,w\in M^V, \forall s \in \mathcal{S}$:
	\begin{align}
		(x \oplus y)_v &:= x_v\oplus y_v\text{ and} \\
		(s \odot x)_v  &:= s \odot x_v.
	\end{align}
	Furthermore, by $r^V$ we denote the componentwise application of a representative projection $r$ of~$\mathcal M$,
	\begin{equation}
	(r^V x)_v := r(x_v).
	\end{equation}
	This induces the equivalence relation $\sim$ on $\mathcal M$ via $x \sim y$ if and only if $x_v \sim y_v$ for all $v \in V$.
\end{definition}

\begin{observation}
	$\mathcal{M}^V$~is a zero-preserving semimodule over~$\mathcal{S}$ and $\bot^V := (\bot,\dots,\bot)^\T \in \mathcal{M}^V$ is its neutral element w.r.t.~$\oplus$, where $\bot$ is the neutral element of~$\mathcal M$.
	The equivalence relation $\sim$ induced by $r^V$ is a congruence relation on $\mathcal{M}^V$ with representative projection~$r^V$.
\end{observation}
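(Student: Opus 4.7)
The plan is to prove each of the four claims\dash---(i) $\mathcal{M}^V$ is a zero-preserving semimodule over $\mathcal{S}$, (ii) $\bot^V$ is the neutral element for $\oplus$, (iii) $\sim$ is a congruence relation on $\mathcal{M}^V$, and (iv) $r^V$ is a representative projection for $\sim$\dash---by lifting the corresponding property from $\mathcal{M}$ to $\mathcal{M}^V$ coordinatewise. The essential observation is that $\oplus$, $\odot$, $r^V$, and $\sim$ on $\mathcal{M}^V$ are all defined pointwise, so every axiom or property reduces to the same statement at each coordinate $v \in V$.

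For (i) and (ii), I would verify the semimodule axioms of Definition~\ref{def:semimodule} one by one. Commutativity and associativity of $\oplus$, as well as $\bot^V \oplus x = x$ for all $x \in \mathcal{M}^V$, follow because at each coordinate $v \in V$ we have $(x \oplus \bot^V)_v = x_v \oplus \bot = x_v$. Equations~\eqref{eq:module-one}--\eqref{eq:module-dist3} (adapted to an arbitrary semiring $\mathcal{S}$ via Definition~\ref{def:semimodule}) carry over identically: for instance, $(s \odot (x \oplus y))_v = s \odot (x_v \oplus y_v) = (s \odot x_v) \oplus (s \odot y_v) = ((s \odot x) \oplus (s \odot y))_v$, where the middle equality is the assumption that $\mathcal{M}$ is a semimodule. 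Zero-preservation likewise lifts, since $0_{\mathcal{S}} \odot x$ and $s \odot \bot^V$ are both annihilated componentwise by the zero-preservation of $\mathcal{M}$. This is exactly the pattern used for Corollary~\ref{cor:distance-map} via Lemma~\ref{lem:semiring-to-the-k}, which handles the special case $\mathcal{M} = \Rdist$; the same argument applies to general~$\mathcal{M}$.

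For (iii) and (iv), I would invoke Lemma~\ref{lem:congruence-by-r} applied to $r^V$ on $\mathcal{M}^V$, and check its two hypotheses~\eqref{eq:congruence-by-r-product} and~\eqref{eq:congruence-by-r-sum} coordinatewise. Suppose $r^V(x) = r^V(x')$, i.e., $r(x_v) = r(x'_v)$ for all $v \in V$. Since $r$ is a representative projection on $\mathcal{M}$, hypothesis~\eqref{eq:congruence-by-r-product} for $r$ yields $r(s \odot x_v) = r(s \odot x'_v)$ for every $v$, hence $r^V(s \odot x) = r^V(s \odot x')$. The argument for $\oplus$ is analogous: if additionally $r^V(y) = r^V(y')$, then $r(x_v \oplus y_v) = r(x'_v \oplus y'_v)$ at each coordinate by~\eqref{eq:congruence-by-r-sum} for $r$, hence $r^V(x \oplus y) = r^V(x' \oplus y')$. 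Lemma~\ref{lem:congruence-by-r} then gives that $\sim$ is a congruence relation and $r^V$ is a representative projection. Because $r^V$ is coordinatewise $r$, the projection property $(r^V)^2 = r^V$ also descends immediately from $r^2 = r$.

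There is no real obstacle here beyond bookkeeping: the entire statement is a routine lift-to-product-semimodule argument. The only point worth being deliberate about is keeping the roles of $\mathcal{S}$ and $\mathcal{M}$ straight in the zero-preservation clause (the zero of the \emph{scalar} semiring $\mathcal{S}$ must send every element of $\mathcal{M}^V$ to $\bot^V$), but this too is immediate coordinatewise from the corresponding property of $\mathcal{M}$.
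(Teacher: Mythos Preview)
Your proposal is correct. The paper does not supply a proof for this Observation at all\dash---it is stated as self-evident\dash---and your coordinatewise lifting argument via Lemma~\ref{lem:congruence-by-r} is exactly the routine verification one would expect, so there is nothing to compare against.
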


\subsection{The Class of \acs{MBF-like} Algorithms}
\label{sec:mbf-definition}

The following definition connects the properties introduced and motivated above.

\begin{definition}[\acs{MBF-like} Algorithm]\label{def:mbf}
	A \emph{\acf{MBF-like} algorithm~$\mathcal A$} is determined by
	\begin{enumerate}
	\item
		a zero-preserving semimodule $\mathcal M$ over a semiring~$\mathcal{S}$,

	\item
		a congruence relation on $\mathcal M$ with representative projection $r\colon \mathcal{M} \to \mathcal{M}$, and

	\item
		initial values $x^{(0)} \in \mathcal{M}^V$ for the nodes (which may depend on the input graph).
	\end{enumerate}
	On a graph $G$ with adjacency matrix~$A$, $h$~iterations of $\mathcal A$ determine
	\begin{equation}\label{eq:mbf}
		\mathcal{A}^h(G) := x^{(h)} := r^V A^h x^{(0)}.
	\end{equation}
	Since $\mathcal{A}$ reaches a fixpoint after at most $i = \spd(G) < n$ iterations, i.e., a state where $x^{(i+1)} = x^{(i)}$, we abbreviate $\mathcal{A}(G) := A^n(G)$.
\end{definition}

Note that the definition of the adjacency matrix $A \in \mathcal{S}^{V \times V}$ depends on the choice of the semiring~$\mathcal S$.
For the standard choice of $\mathcal{S} = \mathcal{S}_{\min,+}$, which suffices for all our core results, we define $A$ in Equation~\eqref{eq:minplus-adjacencymatrix};
examples using different semirings and the associated adjacency matrices are discussed in Sections~\ref{sec:examples-maxmin}--\ref{sec:examples-bool}.

The $(i+1)$-th iteration of \iac{MBF-like} algorithm $\mathcal A$ determines $x^{(i+1)} := r^V A x^{(i)}$ (propagate, aggregate, and filter).
Thus, $h$~iterations yield $(r^V A)^h x^{(0)}$, which we show to be identical to $r^V A^h x^{(0)}$ in Corollary~\ref{cor:rV-id} of Section~\ref{sec:mbf-functions}.

\subsection{Preserving State-Equivalence across Iterations}
\label{sec:mbf-functions}

As motivated above, \ac{MBF-like} algorithms filter intermediate results;
a representative projection $r^V$ determines a small representative of each node state.
This maintains efficiency:
Nodes propagate and aggregate only small representatives of the relevant information\dash---instead of the full amount of information they are exposed to.
However, as motivated in e.g.\ Section~\ref{sec:mbf-filters}, filtering is only relevant regarding efficiency, but not the correctness of \ac{MBF-like} algorithms.

In this section, we formalize this concept in the following steps.
\begin{enumerate*}
\item
	We introduce the functions needed to iterate \ac{MBF-like} algorithms without filtering, i.e., multiplications with (adjacency) matrices.
	These \emph{\acfp{SLF}} are a proper subset of the linear\footnote{%
		A linear function $f\colon\mathcal{M} \to \mathcal{M}$ on the semimodule $\mathcal M$ over the semiring $\mathcal S$ satisfies, for all $x,y \in \mathcal{M}$ and $s \in \mathcal{S}$, that $f(x \oplus y) = f(x) \oplus f(y)$ and $f(s \odot x) = s \odot f(x)$.
	} functions on~$\mathcal{M}^V$.

\item
	The next step is to observe that \acp{SLF} are well-behaved w.r.t.\ the equivalence classes $\quotient{\mathcal{M}^V}{\sim}$ of node states.

\item
	Equivalence classes of \acp{SLF} mapping equivalent inputs to equivalent outputs yield the functions required for the study of \ac{MBF-like} algorithms.
	These form a semiring of (a~subset of) the functions on~$\quotient{\mathcal{M}^V}{\sim}$.

\item
	Finally, we observe that $r^V \sim \id$, formalizing the concepts of ``operating on equivalence classes of node states'' and ``filtering being optional w.r.t.\ correctness.''
\end{enumerate*}

\Iac{SLF} $f$ is ``simple'' in the sense that it corresponds to matrix-vector multiplications, i.e., maps $x \in \mathcal{M}^V$ such that $(f(x))_v$ is a linear combination of the coordinates $x_w$, $w \in V$, of~$x$.
\begin{definition}[\acl{SLF}]\label{def:slf}
	Let $\mathcal M$ be a semimodule over the semiring~$\mathcal{S}$.
	Each matrix $A \in \mathcal{S}^{V \times V}$ defines a \emph{\acf{SLF}} $A\colon \mathcal{M}^V \to \mathcal{M}^V$ (and vice versa) by
	\begin{equation}
		A(x)_v := (Ax)_v = \bigoplus_{w \in V} a_{vw} x_w.
	\end{equation}
\end{definition}
Thus, each iteration of \iac{MBF-like} algorithm is an application of \iac{SLF} given by an adjacency matrix followed by an application of the filter~$r^V$.
In the following, fix a semiring~$\mathcal{S}$, a semimodule $\mathcal{M}$ over~$\mathcal{S}$, and a congruence relation $\sim$ on~$\mathcal M$.
Furthermore, let $F$ denote the set of \acp{SLF}, i.e., matrices $A \in \mathcal{S}^{V \times V}$, each defining a function $A\colon \mathcal{M}^V \to \mathcal{M}^V$.

\begin{example}[Non-\acl{SLF}]\label{ex:non-simple}
	We remark that not all linear functions on $\mathcal{M}^V$ are \acp{SLF}.
	Choose $V = \{1,2\}$, $\mathcal{S} = \mathcal{S}_{\min,+}$, and $\mathcal{M} = \mathcal{D}$.
	Consider $f\colon \mathcal{M}^V \to \mathcal{M}^V$ given by
	\begin{equation}
		f\binom{(x_{11}, x_{12})}{(x_{21}, x_{22})}
			:= \binom{(x_{11} \oplus x_{12}, \infty)}{\bot}.
	\end{equation}
	While $f$ is linear, $f(x)_1$~is not a linear combination of $x_1$ and~$x_2$.
	Hence, $f$~is not \iac{SLF}.
\end{example}

Let $A,B \in F$ be \acp{SLF}.
Denote by $A(x) \mapsto Ax$ the application of the \ac{SLF} $A$ to the argument $x \in \mathcal{M}^V$.
Furthermore, we write $(A \oplus B)(x) \mapsto A(x) \oplus B(x)$ and $(A \circ B)(x) \mapsto A(B(x))$ for the addition and concatenation of \acp{SLF}, respectively.
We proceed to Lemma~\ref{lem:slf}, in which we show that matrix addition and multiplication are equivalent to the addition and concatenation of \ac{SLF} functions, respectively.
It follows that the \acp{SLF} form a semiring that is isomorphic to the matrix semiring of \ac{SLF} matrices.
Hence, we may use $A(x)$ and $Ax$ interchangeably in the following.

\begin{lemma}\label{lem:slf}
	$\mathcal{F} := (F, \oplus, \circ)$, where $\oplus$ denotes the addition of functions and $\circ$ their concatenation, is a semiring.
	Furthermore, $\mathcal F$ is isomorphic to the matrix-semiring over~$\mathcal S$, i.e., for all $A,B \in F$ and $x\in \mathcal{M}^V$,
	\begin{align}
		(A \oplus B)(x) &= (A \oplus B)x\text{ and} \label{eq:slf-plus} \\
		(A \circ B)(x)  &= ABx. \label{eq:slf-mult}
	\end{align}
\end{lemma}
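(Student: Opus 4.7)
The plan is to prove the two identities \eqref{eq:slf-plus} and \eqref{eq:slf-mult} by direct componentwise computation using the semimodule axioms, and then to deduce that $\mathcal{F}$ is a semiring from the fact that the matrix semiring over $\mathcal{S}$ is one and the map $A \mapsto (x \mapsto Ax)$ is a bijection between the two carriers respecting both operations.

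For \eqref{eq:slf-plus}, I would fix $v \in V$ and compute
\[
((A \oplus B)(x))_v \;=\; (Ax)_v \oplus (Bx)_v \;=\; \bigoplus_{w \in V} a_{vw} x_w \;\oplus\; \bigoplus_{w \in V} b_{vw} x_w,
\]
using the definition of function addition and Definition~\ref{def:slf}. Reordering the aggregation (associativity and commutativity of $\oplus$ in the semigroup $(\mathcal{M},\oplus)$) and applying right distributivity of scalar multiplication over semiring addition, cf.\ \eqref{eq:module-dist2}, yields $\bigoplus_{w \in V} (a_{vw} \oplus b_{vw}) x_w = ((A \oplus B)x)_v$, where on the right $A \oplus B$ is the matrix sum. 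For \eqref{eq:slf-mult}, I would similarly compute
\[
((A \circ B)(x))_v \;=\; (A(Bx))_v \;=\; \bigoplus_{w \in V} a_{vw} (Bx)_w \;=\; \bigoplus_{w \in V} a_{vw} \Bigl( \bigoplus_{u \in V} b_{wu} x_u \Bigr),
\]
then apply left distributivity \eqref{eq:module-dist1} to pull $a_{vw}$ inside, associativity of scalar multiplication \eqref{eq:module-dist3} to combine $a_{vw}(b_{wu} x_u) = (a_{vw} b_{wu}) x_u$, and finally swap the order of aggregation to obtain $\bigoplus_{u \in V} \bigl( \bigoplus_{w \in V} a_{vw} b_{wu} \bigr) x_u = ((AB)x)_v$.

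With the two identities in hand, the semiring structure of $\mathcal{F}$ and its isomorphism to the matrix semiring follow essentially for free. Definition~\ref{def:slf} exhibits a bijection $\Phi$ between matrices $A \in \mathcal{S}^{V\times V}$ and \acp{SLF}; by \eqref{eq:slf-plus} and \eqref{eq:slf-mult}, $\Phi$ transports matrix addition to function addition and matrix multiplication to function concatenation. Since the matrix semiring $(\mathcal{S}^{V \times V}, \oplus, \cdot)$ is a semiring (a standard consequence of $\mathcal{S}$ being a semiring), so is $\mathcal{F}$, and $\Phi$ is a semiring isomorphism.

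I do not expect a serious obstacle; the only subtlety is bookkeeping. In particular, one must verify that the swap of the two $\bigoplus$-sums in the computation of $A \circ B$ and the pulling of scalars into aggregated expressions are justified solely by the semimodule axioms \eqref{eq:module-one}--\eqref{eq:module-dist3} together with associativity and commutativity of $\oplus$ on~$\mathcal{M}$, without invoking any additional hypothesis on $\mathcal{S}$ or $\mathcal{M}$ such as idempotency or a multiplicative identity in~$\mathcal{M}$.
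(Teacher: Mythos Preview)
Your proposal is correct and follows essentially the same approach as the paper: verify the two identities and then conclude the semiring structure via the bijection with the matrix semiring. The paper's own proof is terser---it simply invokes $(A \oplus B)x = Ax \oplus Bx$ and $ABx = A(Bx)$ as standard matrix identities without unpacking them componentwise---whereas you spell out the semimodule-axiom bookkeeping explicitly, which is a welcome level of detail.
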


\begin{proof}
	Let $A, B \in F$ and $x \in \mathcal{M}^V$ be arbitrary.
	Regarding~\eqref{eq:slf-plus} and~\eqref{eq:slf-mult}, observe that we have
	\begin{gather}
		(A \oplus B)x = Ax \oplus Bx = A(x) \oplus B(x) = (A \oplus B)(x)\text{ and} \\
		ABx = A(Bx) = A(B(x)) = (A \circ B)(x),
	\end{gather}
	respectively, i.e., addition and concatenation of \acp{SLF} are equivalent to addition and multiplication of their respective matrices.
	It follows that $\mathcal F$ is isomorphic to the matrix semiring $(\mathcal{S}^{V \times V}, \oplus, \odot)$ and hence $\mathcal F$ is a semiring as claimed.
\end{proof}

Recall that \ac{MBF-like} algorithms project node states to appropriate equivalent node states.
\acp{SLF} correspond to matrices and (adjacency) matrices correspond to \ac{MBF-like} iterations.
Hence, it is important that \acp{SLF} are well-behaved w.r.t.\ the equivalence classes $\quotient{\mathcal{M}^V}{\sim}$ of node states.
Lemma~\ref{lem:slf-properties} states that this is the case, i.e., that $Ax \sim Ax'$ for all $x' \in [x]$.

\begin{lemma}\label{lem:slf-properties}
	Let $A \in F$ be \iac{SLF}.
	Then we have, for all $x, x' \in \mathcal{M}^V$,
	\begin{equation}
			x \sim x' \quad \Rightarrow \quad Ax \sim Ax'.
	\end{equation}
\end{lemma}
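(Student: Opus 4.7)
The plan is to unfold the definition of $\sim$ on $\mathcal{M}^V$ (componentwise equivalence on $\mathcal{M}$), use linearity of \iac{SLF} to express $(Ax)_v$ as an aggregation of propagated coordinates $a_{vw}x_w$, and then invoke the two defining properties of a congruence relation (Equations~\eqref{eq:mult-eq} and~\eqref{eq:sum-eq}) to conclude equivalence coordinatewise.

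Concretely, assume $x \sim x'$. By Definition~\ref{def:powermodule} this means $x_w \sim x'_w$ for every $w \in V$. Fix an arbitrary $v \in V$; by Definition~\ref{def:slf},
\begin{equation}
	(Ax)_v = \bigoplus_{w \in V} a_{vw} \odot x_w
	\quad\text{and}\quad
	(Ax')_v = \bigoplus_{w \in V} a_{vw} \odot x'_w.
\end{equation}
First I would apply~\eqref{eq:mult-eq} with scalar $s = a_{vw}$ to each summand, obtaining $a_{vw} \odot x_w \sim a_{vw} \odot x'_w$ for every $w \in V$. Then I would finish by a straightforward induction on the (finite) index set $V$ using~\eqref{eq:sum-eq}: if $y_w \sim y'_w$ for all $w$, then $\bigoplus_w y_w \sim \bigoplus_w y'_w$. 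This gives $(Ax)_v \sim (Ax')_v$, and since $v$ was arbitrary, $Ax \sim Ax'$ as required.

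I do not expect a genuine obstacle here. The only thing worth checking is that the aggregation is over a finite set so that iterating~\eqref{eq:sum-eq} is legitimate; this is guaranteed because $V$ is the finite vertex set of the input graph, so the induction terminates in $|V|$ steps. The neutral element $\bot$ satisfies $\bot \sim \bot$ trivially, which handles the base case and also the empty-support situation where some $a_{vw} = \infty$ forces $a_{vw}\odot x_w = \bot$ via~\eqref{eq:module-zero}. Consequently the lemma reduces to a direct bookkeeping argument built entirely from the congruence axioms and the matrix form of \iac{SLF}.
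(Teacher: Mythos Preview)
Your proposal is correct and follows essentially the same route as the paper: first use~\eqref{eq:mult-eq} to obtain $a_{vw}x_w \sim a_{vw}x'_w$ for each $w$, then induct over the finite index set using~\eqref{eq:sum-eq} to conclude $\bigoplus_w a_{vw}x_w \sim \bigoplus_w a_{vw}x'_w$, and finally apply this componentwise. The paper packages the induction as a standalone auxiliary claim~\eqref{eq:linear-combination} about arbitrary linear combinations $\bigoplus_i s_i x_i$, but the content is identical.
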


\begin{proof}
	First, for $k \in \N$, let $x_1, \dots, x_k, x'_1, \dots, x'_k \in \mathcal{M}$ be such that $x_i \sim x'_i$ for all $1 \leq i \leq k$.
	We show that for all $s_1, \dots, s_k \in \mathcal{S}$ it holds that
	\begin{equation}\label{eq:linear-combination}
		\bigoplus_{i=1}^k s_i x_i \sim \bigoplus_{i=1}^k s_i x'_i.
	\end{equation}
	We argue that~\eqref{eq:linear-combination} holds by induction over~$k$.
	For $k = 1$, the claim trivially follows from Equation~\eqref{eq:mult-eq}.
	Regarding $k \geq 2$, suppose the claim holds for $k - 1$.
	Since $x_k \sim x_k'$, we have that $s_k x_k \sim s_k x_k'$ by~\eqref{eq:mult-eq}.
	The induction hypothesis yields $\bigoplus_{i=1}^{k-1} s_i x_i \sim \bigoplus_{i=1}^{k-1} s_i x_i'$.
	Hence,
	\begin{equation}
		\bigoplus_{i=1}^k s_i x_k
			= \left( \bigoplus_{i=1}^{k-1} s_i x_i \right) \oplus s_k x_k
			\stackrel{\eqref{eq:sum-eq}}{\sim}
				\left( \bigoplus_{i=1}^{k-1} s_i x'_i \right) \oplus s_k x'_k
			= \bigoplus_{i=1}^k s_i x'_k.
	\end{equation}
	As for the original claim, let $v \in V$ be arbitrary and note that we have
	\begin{equation}
		(Ax)_v = \bigoplus_{w \in V} a_{vw} x_v
			\stackrel{\eqref{eq:linear-combination}}{\sim}
			\bigoplus_{w \in V} a_{vw} x'_v = (Ax')_v. \qedhere
	\end{equation}
\end{proof}

Due to Lemma~\ref{lem:slf-properties}, each \ac{SLF} $A \in F$ not only defines a function $A\colon \mathcal{M}^V \to \mathcal{M}^V$, but also a function $A\colon \quotient{\mathcal{M}^V}{\sim} \to \quotient{\mathcal{M}^V}{\sim}$ with $A[x] := [Ax]$ ($A[x]$~does not depend on the choice of the representant $x' \in [x]$).
This is important, since \ac{MBF-like} algorithms implicitly operate on~$\quotient{\mathcal{M}^V}{\sim}$ and because they do so using adjacency matrices, which are \acp{SLF}.
As a natural next step, we rule for \acp{SLF} $A,B \in F$ that
\begin{equation}\label{eq:slf-eq}
	A \sim B
		\quad :\Leftrightarrow \quad
		\forall x \in \mathcal{M}^V\colon Ax \sim Bx,
\end{equation}
i.e., that they are equivalent if and only if they yield equivalent results when presented with the same input.
This yields equivalence classes $\quotient{F}{\sim} = \{ [A] \mid A \in F \}$.
This implies, by~\eqref{eq:slf-eq}, that $[A][x] := [Ax]$ is well-defined.
In Theorem~\ref{thm:slf}, we show that the equivalence classes of \acp{SLF} w.r.t.\ summation and concatenation form a semiring~$\quotient{\mathcal F}{\sim}$.
As \ac{MBF-like} algorithms implicitly work on~$\quotient{\mathcal{M}^V}{\sim}$, we obtain with $\quotient{\mathcal F}{\sim}$ precisely the structure that may be used to manipulate the state of \ac{MBF-like} algorithms, which we leverage throughout this paper.

\begin{theorem}\label{thm:slf}
	Each $[A] \in \quotient{F}{\sim}$ defines \iac{SLF} on~$\quotient{\mathcal{M}^V}{\sim}$.
	Furthermore, $\quotient{\mathcal{F}}{\sim} := (\quotient{F}{\sim}, \oplus, \circ)$, where $\oplus$ denotes the addition and $\circ$ the concatenation of functions, is a semiring of \acp{SLF} on $\quotient{\mathcal{M}^V}{\sim}$ with
	\begin{align}
		[A] \oplus [B] &= [A \oplus B]\text{ and} \label{eq:slf-plus2} \\
		[A] \circ [B]  &= [AB]. \label{eq:slf-mult2}
	\end{align}
\end{theorem}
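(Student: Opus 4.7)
The plan is to verify, in order, that (i)~each class $[A]$ gives a well-defined function on $\quotient{\mathcal{M}^V}{\sim}$, (ii)~this function is itself \iac{SLF}, (iii)~the relation $\sim$ on $F$ is a congruence on the semiring $\mathcal{F}$ from Lemma~\ref{lem:slf}, and (iv)~Equations~\eqref{eq:slf-plus2} and~\eqref{eq:slf-mult2} together with the semiring axioms then follow by passing to the quotient. The only genuine content is well-definedness; everything else is lifting structure along the quotient map.

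For~(i), I need to show $[Ax]$ depends neither on the representative of $[x]$ nor on the representative of $[A]$. If $x \sim x'$, Lemma~\ref{lem:slf-properties} gives $Ax \sim Ax'$. If, in addition, $A \sim A'$ in the sense of~\eqref{eq:slf-eq}, then $A'x' \sim Ax'$ by definition. Transitivity of $\sim$ on $\mathcal{M}^V$ (which was established as a congruence relation right after Definition~\ref{def:powermodule}) then yields $Ax \sim A'x'$, so $[A][x] := [Ax]$ is unambiguous. For~(ii), note that $[A]$ is induced by the same matrix $A \in \mathcal{S}^{V\times V}$ acting coordinatewise; the operations on $\quotient{\mathcal{M}^V}{\sim}$ are defined representative-wise by $[x] \oplus [y] = [x \oplus y]$ and $s \odot [x] = [sx]$, which by Lemma~\ref{lem:slf-properties} and Equations~\eqref{eq:mult-eq}/\eqref{eq:sum-eq} descend to the quotient; hence $[A][x]_v = \bigoplus_{w\in V} a_{vw}[x_w]$, matching Definition~\ref{def:slf} on the quotient semimodule.

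For~(iii), I verify the two congruence properties for $\sim$ on $F$. Let $A \sim A'$ and $B \sim B'$, and let $x \in \mathcal{M}^V$. For addition, $(A\oplus B)x = Ax \oplus Bx$ by~\eqref{eq:slf-plus}; since $Ax \sim A'x$ and $Bx \sim B'x$ by the definition of $\sim$ on $F$, Equation~\eqref{eq:sum-eq} gives $Ax \oplus Bx \sim A'x \oplus B'x = (A'\oplus B')x$, so $A \oplus B \sim A' \oplus B'$. For concatenation, $(A\circ B)(x) = A(Bx)$ by~\eqref{eq:slf-mult}; then $A(Bx) \sim A'(Bx)$ by the definition of $\sim$ on $F$, and $A'(Bx) \sim A'(B'x)$ by Lemma~\ref{lem:slf-properties} applied to $Bx \sim B'x$, whence transitivity yields $(A\circ B)(x) \sim (A'\circ B')(x)$ for every $x$, i.e.\ $A\circ B \sim A'\circ B'$. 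This proves that $[A]\oplus[B] := [A\oplus B]$ and $[A]\circ[B] := [A\circ B] = [AB]$ are well-defined on $\quotient{F}{\sim}$, establishing~\eqref{eq:slf-plus2} and~\eqref{eq:slf-mult2}.

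For~(iv), the semiring axioms for $\quotient{\mathcal{F}}{\sim}$ are immediate from the semiring axioms for $\mathcal{F}$ (Lemma~\ref{lem:slf}): associativity, commutativity of~$\oplus$, distributivity, and the existence of neutral elements all lift along the quotient map $A \mapsto [A]$ because that map is a surjective homomorphism by construction of~$\oplus$ and~$\circ$ on $\quotient{F}{\sim}$. The only mild subtlety \dash---and the step I would be most careful about\dash--- is keeping straight the three flavours of $\sim$ in play (on $\mathcal{M}$, on $\mathcal{M}^V$, and on $F$); once those are fixed, the argument is bookkeeping, with Lemma~\ref{lem:slf-properties} doing the real work by ensuring that \acp{SLF} respect the congruence on node states.
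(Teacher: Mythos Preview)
Your proof is correct and covers the same ground as the paper's, with a minor organizational difference worth noting. You take the standard algebraic route: verify that $\sim$ is a congruence on the semiring $\mathcal{F}$ (your step~(iii)), then inherit the semiring structure on the quotient. The paper instead defines $[A]\oplus[B]$ and $[A]\circ[B]$ directly as addition and concatenation of the induced \emph{functions} on $\quotient{\mathcal{M}^V}{\sim}$\dash---which are already well-defined by Lemma~\ref{lem:slf-properties}\dash---and then checks by a one-line computation that these coincide with $[A\oplus B]$ and $[AB]$; linearity of $[A]$ is verified separately by the same kind of representative computation. Your congruence check and the paper's direct identification are two sides of the same coin, and both rely on Lemma~\ref{lem:slf-properties} for the only nontrivial step. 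Your presentation is arguably cleaner about \emph{why} the quotient inherits a semiring structure, while the paper's is slightly shorter; neither gains anything substantive over the other.
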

\begin{proof}
	As argued above, for any $A\in F$, $[A]\in \quotient{F}{\sim}$~is well-defined on $\quotient{\mathcal{M}^V}{\sim}$ by Lemma~\ref{lem:slf-properties}.
	Equations~\eqref{eq:slf-plus2} and~\eqref{eq:slf-mult2} follow from Equations~\eqref{eq:slf-plus} and~\eqref{eq:slf-mult}, respectively:
	\begin{align}
		[A \oplus B][x]
			= [(A \oplus B)x]
			&\stackrel{\eqref{eq:slf-plus}}{=} [(A \oplus B)(x)]
			= ([A] \oplus [B])([x]) \\
		[AB][x]
			= [ABx]
			&\stackrel{\eqref{eq:slf-mult}}{=} [(A \circ B)(x)]
			= [A \circ B]([x]).
	\end{align}
	To see that $[A]$ is linear, let $s \in \mathcal{S}$ and $x,y \in \mathcal{M}^V$ be arbitrary and compute
	\begin{gather}
		[A][x] \oplus [A][y]
			= [Ax] \oplus [Ay]
			= [Ax \oplus Ay]
			= [A(x \oplus y)]
			= [A][x \oplus y]
			= [A]([x] \oplus [y])\text{ and} \\
		[A](s[x])
			= [A(sx)]
			= [s(Ax)]
			= s[Ax]
			= s[A][x].
	\end{gather}
	This implies that $\quotient{\mathcal{F}}{\sim}$ is a semiring of linear functions.
	As each function $[A]$ is represented by multiplication with (any) SLF $A'\in [A]$, $[A]$ is \iac{SLF}.
\end{proof}

The following corollary is a key property used throughout the paper.
It allows us to apply filter steps whenever convenient.
We later use this to simulate \ac{MBF-like} iterations on an implicitly represented graph whose edges correspond to entire paths in the original graph.
This is efficient only because we have the luxury of applying intermediate filtering repeatedly without affecting the output.

\begin{corollary}[$r^V \sim \id$]\label{cor:rV-id}
	For any representative projection $r$ on~$\mathcal M$, we have $r^V \sim \id$, i.e., for any \ac{SLF} $A \in F$ it holds that
	\begin{equation}\label{eq:filter-product}
		r^V A \sim A r^V \sim A.
	\end{equation}
	In particular\dash---as promised in Section~\ref{sec:mbf-definition}\dash---for any \ac{MBF-like} algorithm~$\mathcal A$, we have
	\begin{equation}\label{eq:filter-only-once}
		\mathcal{A}^h(G)
			\stackrel{\eqref{eq:mbf}}{=} r^V A^h x^{(0)}
			\stackrel{\eqref{eq:filter-product}}{=} (r^V A)^h x^{(0)}.
	\end{equation}
\end{corollary}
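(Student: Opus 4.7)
The starting observation is that the statement $r^V \sim \id$ is really just the componentwise version of the defining property of a representative projection. For any $x \in \mathcal{M}^V$ and any $v \in V$, the definition of $r^V$ together with Equation~\eqref{eq:projection} gives $(r^V x)_v = r(x_v) \sim x_v$, so by the componentwise definition of $\sim$ on $\mathcal{M}^V$ (Definition~\ref{def:powermodule}) we obtain $r^V x \sim x$. This is precisely $r^V \sim \id$ in the sense of~\eqref{eq:slf-eq} extended to arbitrary maps on $\mathcal{M}^V$.

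From this, both halves of $r^V A \sim A r^V \sim A$ follow immediately. For $r^V A \sim A$, apply the above to $Ax$ in place of $x$: $(r^V A)(x) = r^V(Ax) \sim Ax = A(x)$ for every $x \in \mathcal{M}^V$. For $A r^V \sim A$, note that $r^V x \sim x$ and invoke Lemma~\ref{lem:slf-properties}, which guarantees that SLFs preserve equivalence, to conclude $(A r^V)(x) = A(r^V x) \sim Ax$.

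The second statement $\mathcal{A}^h(G) = r^V A^h x^{(0)} = (r^V A)^h x^{(0)}$ requires slightly more care, because the outer claim is equality rather than equivalence. The key auxiliary observation is that equivalence upgrades to equality once $r^V$ is applied on top: if $y \sim z$ in $\mathcal{M}^V$, then componentwise $y_v \sim z_v$, so $(r^V y)_v = r(y_v) = r(z_v) = (r^V z)_v$ by Equation~\eqref{eq:eq}, hence $r^V y = r^V z$. I would then induct on $h$ to show that $(r^V A)^h x^{(0)} \sim A^h x^{(0)}$: the base $h = 0$ is trivial, and for the step we write $(r^V A)^h x^{(0)} = r^V A \bigl((r^V A)^{h-1} x^{(0)}\bigr)$, which by the induction hypothesis and Lemma~\ref{lem:slf-properties} is $\sim A \bigl(A^{h-1} x^{(0)}\bigr) = A^h x^{(0)}$, followed by a use of $r^V \sim \id$ from the first part. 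Now applying $r^V$ to both sides of this equivalence collapses it to an equality, using on the left-hand side that $r \circ r = r$ (so $(r^V)^2 = r^V$) absorbs the outermost $r^V$ already present in $(r^V A)^h x^{(0)}$, and this is exactly the desired identity.

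\textbf{Main obstacle.} Conceptually, none of the steps is hard: everything reduces to unwinding the definitions and invoking Lemma~\ref{lem:slf-properties}. The only subtle point is bookkeeping about when $\sim$ can be promoted to $=$, which is exactly when one last application of $r^V$ sits on top; the projection identity $r^2 = r$ is what makes this step transparent in the inductive argument.
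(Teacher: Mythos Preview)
Your proposal is correct and matches the intended reasoning: the paper states this as a corollary without an explicit proof, and your argument\dash---showing $r^Vx\sim x$ componentwise from~\eqref{eq:projection}, deducing $r^VA\sim A$ directly and $Ar^V\sim A$ via Lemma~\ref{lem:slf-properties}, then inducting and collapsing $\sim$ to $=$ by one final application of $r^V$ together with idempotence of~$r$\dash---is exactly the natural way to unpack it from the preceding material. The only point worth noting is the degenerate case $h=0$, where~\eqref{eq:filter-only-once} would read $r^Vx^{(0)}=x^{(0)}$; this is a harmless boundary issue with the statement rather than a gap in your argument, and your induction handles all $h\geq 1$ cleanly.
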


Finally, we stress that both the restriction to \acp{SLF} and the componentwise application of $r$ in $r^V$ are crucial for Corollary~\ref{cor:rV-id}.

\begin{example}[Non-\aclp{SLF} break Corollary~\ref{cor:rV-id}]
	Consider~$V$, $\mathcal{M}$, and~$f$ from Example~\ref{ex:non-simple}.
	If $r(x)=(x_1,\infty)$ for all $x\in \mathcal M$, we have that
	\begin{equation}
		r^Vf\binom{(2,1)}{\bot}=\binom{(1,\infty)}{\bot}\not\sim \binom{(2,\infty)}{\bot}=fr^V\binom{(2,1)}{\bot},
	\end{equation}
	implying that $r^Vf \not\sim fr^V$.
\end{example}

\begin{example}[Non-component-wise filtering breaks Corollary~\ref{cor:rV-id}]
	Consider $V = \{1,2\}$, $\mathcal{S} = \mathcal{S}_{\min,+}$, and $\mathcal{M} = \mathcal{D}$.
	Suppose $f$ is the \ac{SLF} given by $fx := \binom{x_1 \oplus x_2}{\bot}$ and $r^V(x) := \binom{x_1}{\bot}$, i.e., $r^V$~is not a component-wise application of some representative projection $r$ on~$\mathcal M$, but still a representative projection on~$\mathcal{M}^V$.
	Then we have that
	\begin{equation}
		r^V f \binom{(2, \infty)}{(1, \infty)}
			= r^V \binom{(1, \infty)}{\bot}
			= \binom{(1, \infty)}{\bot}
			\not\sim \binom{(2, \infty)}{\bot}
			= f \binom{(2, \infty)}{\bot}
			= fr^V \binom{(2, \infty)}{(1, \infty)},
	\end{equation}
	again implying that $r^Vf \not\sim fr^V$.
\end{example}

\section{A Collection of \acs{MBF-like} Algorithms}
\label{sec:examples}

For the purpose of illustration and to demonstrate the generality of our framework, we show that a variety of standard algorithms are \ac{MBF-like} algorithms;
due to the machinery established above, this is a trivial task in many cases.
In order to provide an unobstructed view on the machinery\dash---and since this section is not central to our contributions\dash---we defer proofs to Appendix~\ref{app:proofs}.

We demonstrate that some more involved distributed algorithms in the Congest model have a straightforward and compact interpretation in our framework in Section~\ref{sec:distributed}.
They compute metric tree embeddings based on the \ac{FRT} distribution;
we present them alongside an improved distributed algorithm based on the other results of this work.

\ac{MBF-like} algorithms are specified by a zero-preserving semimodule $\mathcal M$ over a semiring~$\mathcal{S}$, a representative projection of a congruence relation on~$\mathcal M$, initial states~$x^{(0)}$, and the number of iterations~$h$, compare Definition~\ref{def:mbf}.
While this might look like a lot, typically, a standard semiring and semimodule can be chosen;
the general-purpose choices of $\mathcal{S} = \mathcal{S}_{\min,+}$ and $\mathcal{M} = \mathcal{D}$ (see Definition~\ref{def:distance-map} and Corollary~\ref{cor:distance-map}) or $\mathcal{M} = \mathcal{S}_{\min,+}$ (every semiring is a zero-preserving semimodule over itself) usually are up to the task.
Refer to Sections~\ref{sec:examples-maxmin} and~\ref{sec:examples-allpaths} for examples that require different semirings.
However, even in these cases, the semirings and semimodules specified in Sections~\ref{sec:examples-maxmin} and~\ref{sec:examples-allpaths} can be reused.
Hence, all that is left to do in most cases is to pick an existing semiring and semimodule, choose $h \in \N$, and specify a representative projection~$r$.

\subsection{\acs{MBF-like} Algorithms over the Min-Plus Semiring}
\label{sec:examples-minplus}

We demonstrate that the min-plus semiring $\mathcal{S}_{\min,+}$\dash---a.k.a.\ the tropical semiring\dash---is the semiring of choice to capture many standard distance problems.
Note that we also use $\mathcal{S}_{\min,+}$ in our core result, i.e., for sampling \ac{FRT} trees.
For the sake of completeness, first recall the adjacency matrix $A$ of the weighted graph $G$ in the semiring $\mathcal{S}_{\min,+}$ from Equation~\eqref{eq:minplus-adjacencymatrix} and the distance-map semimodule $\mathcal{D}$ from Definition~\ref{def:distance-map}, consider the initialization $x^{(0)} \in \mathcal{D}^V$ with
\begin{equation}\label{eq:minplus-x0}
	x^{(0)}_{vw} := \begin{cases}
		0      & \text{if $v = w$ and} \\
		\infty & \text{otherwise,}
	\end{cases}
\end{equation}
and observe that the entries of
\begin{equation}\label{eq:minplus-xh}
	x^{(h)} := A^h x^{(0)}
\end{equation}
correspond to the $h$-hop distances in~$G$:
\begin{lemma}\label{lem:minplus-xh}
	For $h \in \N$ and $x^{(h)}$ from Equation~\eqref{eq:minplus-xh}, we have
	\begin{equation}
		x^{(h)}_{vw} = \dist^h(v,w,G).
	\end{equation}
\end{lemma}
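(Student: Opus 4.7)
The statement is essentially the well-known fact that powers of the adjacency matrix in the min-plus semiring encode hop-bounded distances, so the natural approach is induction on $h$. I would cast the argument in terms of distance maps, carefully unpacking how the semimodule product $A \odot$ interacts with $x^{(0)}$.

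\textbf{Base case.} For $h = 0$, Equation~\eqref{eq:minplus-x0} gives $x^{(0)}_{vw} = 0$ if $v = w$ and $x^{(0)}_{vw} = \infty$ otherwise. By definition $\paths^0(v,w,G)$ contains only the empty path (when $v = w$, of weight $0$) and is empty otherwise, whence $\dist^0(v,w,G) = x^{(0)}_{vw}$.

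\textbf{Inductive step.} Assume the claim holds for some $h \geq 0$. Unfolding Definition~\ref{def:distance-map} and Equation~\eqref{eq:minplus-adjacencymatrix},
\begin{equation}
    x^{(h+1)}_{vw}
    = (A x^{(h)})_{vw}
    = \bigoplus_{u \in V} a_{vu} \odot x^{(h)}_{uw}
    = \min_{u \in V} \bigl\{ a_{vu} + \dist^h(u, w, G) \bigr\},
\end{equation}
using the induction hypothesis in the last step. I then argue the two inequalities separately. For ``$\leq$'', every $u \in V$ with $a_{vu} < \infty$ (i.e., $u = v$ or $\{v,u\} \in E$) and every path in $\paths^h(u,w,G)$ realizing $\dist^h(u,w,G)$ combine to a path in $\paths^{h+1}(v,w,G)$ of weight $a_{vu} + \dist^h(u,w,G)$; taking $u = v$ in particular ensures $\dist^{h+1}(v,w,G) \leq \dist^h(v,w,G)$. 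For ``$\geq$'', any $p \in \paths^{h+1}(v,w,G)$ either has $|p| = 0$ (so $v = w$ and weight $0$, matched by $u = v$ since $a_{vv} = 0$ and $\dist^h(v,v,G) = 0$) or decomposes into a first edge $\{v,u\} \in E$ of weight $a_{vu}$ followed by a subpath in $\paths^h(u,w,G)$ of weight $\geq \dist^h(u,w,G)$, so $\weight(p) \geq a_{vu} + \dist^h(u,w,G)$ for some $u$, matching a term of the minimum. Both directions yield $x^{(h+1)}_{vw} = \dist^{h+1}(v,w,G)$.

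\textbf{Obstacles.} There are essentially none; this is a textbook verification. The only minor subtlety is tracking the $u = v$ term of the minimum, which corresponds to the convention $\weight(v,v) := 0$ from Section~\ref{sec:notation} and is what guarantees the monotonicity $\dist^{h+1} \leq \dist^h$ needed for the shorter paths to persist as the hop budget grows. Once this is observed, the induction is immediate.
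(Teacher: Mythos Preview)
Your proof is correct and follows essentially the same approach as the paper: induction on $h$, unpacking $(Ax^{(h)})_{vw}$ as $\min_{u\in V}\{a_{vu}+\dist^h(u,w,G)\}$ via the induction hypothesis, and identifying this with $\dist^{h+1}(v,w,G)$. The paper is slightly more terse at the final identification, simply noting that the resulting expression ``is exactly the definition of $\dist^{h+1}(v,w,G)$,'' whereas you spell out both inequalities; the content is the same.
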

It is well-known that the min-plus semiring can be used for distance computations~\cite{agm-oteotapspp-97,m-sfasdp-02,z-apspubsarmm-02}.
Nevertheless, for the sake of completeness, we prove Lemma~\ref{lem:minplus-xh} in terms of our notation in Appendix~\ref{app:proofs}.

As a first example, we turn our attention to source detection.
It generalizes all examples covered in this section, saving us from proving each one of them correct;
well-established examples like \ac{SSSP} and \ac{APSP} follow.
Source detection was introduced by Lenzen and Peleg~\cite{lp-edsdwlb-13}. Note, however, that we include a maximum considered distance $d$ in the definition.

\begin{example}[Source Detection~\cite{lp-edsdwlb-13}]\label{ex:source-detection}
	Given a weighted graph $G = (V,E,\weight)$, sources $S \subseteq V$, hop and result limits $h,k \in \N$, and a maximum distance $d \in \Rdist$, \emph{$(S,h,d,k)$-source detection} is the following problem:
	For each $v \in V$, determine the $k$ smallest elements of $\{ (\dist^h(v,s,G), s) \mid s \in S, \dist(v,s,G) \leq d \}$ w.r.t.\ lexicographical ordering, or all of them if there are fewer than~$k$.

	Source detection is solved by $h$ iterations of \iac{MBF-like} algorithm with $\mathcal{S} = \mathcal{S}_{\min,+}$, $\mathcal{M} = \mathcal{D}$,
	\begin{equation}\label{eq:source-detection-filter}
		r(x)_v \mapsto \begin{cases}
			x_v    & \text{if $v \in S$, $x_v \leq d$, and $x_v$ is among $k$ smallest entries of $x$ (ties broken by index),} \\
			\infty & \text{otherwise,}
		\end{cases}
	\end{equation}
	and $x^{(0)}_{vv} = 0$ if $v \in S$ and $x^{(0)}_{vw} = \infty$ in all other cases.
\end{example}
Since it may not be obvious that $r$ is a representative projection, we prove it in Appendix~\ref{app:proofs}.

\begin{example}[\acs{SSSP}]
	\emph{\acf{SSSP}} requires to determine the $h$-hop distance to $s \in V$ for all $v \in V$.
	It is solved by \iac{MBF-like} algorithm with $\mathcal{S} = \mathcal{M} = \mathcal{S}_{\min,+}$, $r = \id$, and $x^{(0)}_s = 0$, $x^{(0)}_v = \infty$ for all $v \neq s$.
\end{example}
Equivalently, one may use $(\{s\}, h, \infty, 1)$-source detection, effectively resulting in $\mathcal{M} = \mathcal{S}_{\min,+}$\dash---when only storing the non-$\infty$ entries, only the $s$-entry is relevant, however, the vertex ID of $s$ is stored as well\dash---and $r = \id$, too.

\begin{example}[\acs{kSSP}]\label{ex:k-SSP}
	\emph{\acf{kSSP}} requires to determine, for each node, the $k$ closest nodes in terms of the $h$-hop distance $\dist^h(\cdot,\cdot,G)$.
	It is solved by \iac{MBF-like} algorithm, as it corresponds to $(V, h, \infty, k)$-source detection.
\end{example}

\begin{example}[\acs{APSP}]\label{ex:apsp}
	\emph{\acf{APSP}} is the task of determining the $h$-hop distance between all pairs of nodes.
	It is solved by \iac{MBF-like} algorithm because we can use $(V,h,\infty,n)$-source detection, resulting in $\mathcal{M} = \mathcal{D}$, $r = \id$, and $x^{(0)}$ from Equation~\eqref{eq:minplus-x0}.
\end{example}

\begin{example}[\acs{MSSP}]
	In the \emph{\acf{MSSP}} problem, each node is looking for the $h$-hop distances to all nodes in a designated set $S \subseteq V$ of source nodes.
	This is solved by the \ac{MBF-like} algorithm for $(S,h,\infty,|S|)$-source detection.
\end{example}

\begin{example}[Forest Fires]\label{ex:fire}
	The nodes in a graph $G$ form a distributed sensor network, the edges represent communication channels, and edge weights correspond to distances.
	Our goal is to detect, for each node~$v$, if there is a node $w$ on fire within distance $\dist(v,w,G) \leq d$ for some $d \in \Rdist$, where every node initially knows if it is on fire.
	As a suitable \ac{MBF-like} algorithm, pick $h = n$, $\mathcal{S} = \mathcal{M} = \mathcal{S}_{\min,+}$,
	\begin{equation}
		r(x) \mapsto \begin{cases}
			x      & \text{if $x \leq d$ and} \\
			\infty & \text{otherwise,}
		\end{cases}
	\end{equation}
	and $x^{(0)}_v = 0$ if $v$ is on fire and $x^{(0)}_v = \infty$ otherwise.
\end{example}
Example~\ref{ex:fire} can be handled differently by using $(S,n,d,1)$-source detection, where $S$ are the nodes on fire.
This also reveals the closest node on fire, whereas the solution from Example~\ref{ex:fire} works in anonymous networks.
One can interpret both solutions as instances of \ac{SSSP} with a virtual source $s \notin V$ that is connected to all nodes on fire by an edge of weight~$0$.
This, however, requires a simulation argument and additional reasoning if the closest node on fire is to be determined.

\subsection{\acs{MBF-like} Algorithms over the Max-Min Semiring}
\label{sec:examples-maxmin}

Some problems require using a semiring other than~$\mathcal{S}_{\min,+}$.
As an example, consider the \ac{WPP}, also referred to as the bottleneck shortest path problem:
Given two nodes $v$ and $w$ in a weighted graph, find a $v$-$w$-path maximizing the lightest edge in the path.
More formally, we are interested in the widest-path distance between $v$ and~$w$:
\begin{definition}[Widest-Path Distance]
	Given a weighted graph $G = (V, E, \weight)$, a path $p$ has \emph{width $\width(p) := \min\{ \weight(e) \mid e \in p \}$.}
	The \emph{$h$-hop widest-path distance} between $v,w \in V$ is
	\begin{equation}
		\width^h(v, w, G) := \max_{p \in \paths^h(v,w,G)} \{\width(p)\}.
	\end{equation}
	We abbreviate $\width(v, w, G) := \width^n(v, w, G)$.
\end{definition}
An application of the \ac{WPP} are trust networks:
The nodes of a graph are entities and an edge $\{v,w\}$ of weight $0 < \weight(v,w) \leq 1$ encodes that $v$ and $w$ trust each other with $\weight(v,w)$.
Assuming trust to be transitive, $v$~trusts $w$ with $\max_{p \in \paths(v,w,G)} \min_{e \in p} \weight(e) = \width(v,w,G)$.
The \ac{WPP} requires a semiring supporting the $\max$ and $\min$ operations:

\begin{definition}[Max-Min Semiring]
	We refer to $\mathcal{S}_{\max,\min} := (\Rdist, \max, \min)$ as the \emph{max-min semiring.}
\end{definition}

\begin{lemma}\label{lem:maxmin}
	$\mathcal{S}_{\max,\min}$ is a semiring with neutral elements $0$ and~$\infty$.
\end{lemma}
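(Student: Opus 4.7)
The plan is to verify each of the standard semiring axioms for $\mathcal{S}_{\max,\min}$. I first establish that $(\Rdist, \max)$ is a commutative monoid with neutral element~$0$: commutativity and associativity of $\max$ are immediate from the total order on $\Rdist$, and $\max\{0, x\} = x$ holds for all $x \in \Rdist$ because $x \geq 0$. By the same reasoning, $(\Rdist, \min)$ is a commutative monoid with neutral element~$\infty$: commutativity and associativity of $\min$ are standard, and $\min\{\infty, x\} = x$ since $x \leq \infty$ for all $x \in \Rdist$.

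Next, I would verify that $0$ is a multiplicative annihilator: for any $x \in \Rdist$, $\min\{0, x\} = 0 = \min\{x, 0\}$ because $0 \leq x$, matching the semiring requirement that the additive identity absorb under multiplication. The remaining obligation is distributivity of $\min$ over $\max$. Since $\min$ is commutative, it suffices to show the one-sided identity
\begin{equation}
    \min\{a, \max\{b, c\}\} = \max\{\min\{a, b\}, \min\{a, c\}\}
\end{equation}
for arbitrary $a, b, c \in \Rdist$. By the commutativity of $\max$, I may assume without loss of generality that $b \leq c$. Then $\max\{b, c\} = c$ and, since $\min$ is monotone in each argument, $\min\{a, b\} \leq \min\{a, c\}$, so the right-hand side equals $\min\{a, c\} = \min\{a, \max\{b, c\}\}$.

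This is a routine verification and the only mild obstacle is the case analysis for distributivity, which is dispatched cleanly by the WLOG reduction above. Once all axioms of Definition~\ref{def:semiring} have been checked, the lemma follows immediately.
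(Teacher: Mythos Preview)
Your proof is correct and follows essentially the same approach as the paper: a direct verification of the semiring axioms from Definition~\ref{def:semiring}. Your treatment of distributivity via the WLOG reduction $b \leq c$ is in fact slightly cleaner than the paper's three-case analysis, but the overall structure is the same.
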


\proofinappendix

\begin{corollary}\label{cor:maxmin-modules}
	$\mathcal{S}_{\max,\min}$ is a zero-preserving semimodule over itself.
	Furthermore, we have that $\mathcal{W} := ((\Rdist)^V, \oplus, \odot)$ with, for all $x,y \in (\Rdist)^V$ and $s \in \Rdist$,
	\begin{align}
		(x \oplus y)_v &:= \max\{ x_v, y_v \} \\
		(s \odot x)_v  &:= \min\{ s, x_v \}
	\end{align}
	is a zero-preserving semimodule over $\mathcal{S}_{\max,\min}$ with zero $\bot = (0, \dots, 0)^\T$ by Lemma~\ref{lem:semiring-to-the-k}.
\end{corollary}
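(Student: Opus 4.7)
The plan is to derive both claims as direct consequences of Lemma~\ref{lem:maxmin} together with the general lifting statement from Lemma~\ref{lem:semiring-to-the-k}, so the proof is essentially bookkeeping rather than a new construction.

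First I would observe that, by Lemma~\ref{lem:maxmin}, $\mathcal{S}_{\max,\min}$ is a semiring with additive neutral element $0$ (w.r.t.\ $\oplus = \max$) and multiplicative neutral element $\infty$ (w.r.t.\ $\odot = \min$). Every semiring is a semimodule over itself, interpreting scalar multiplication as the semiring multiplication; all the distributivity and associativity axioms required for a semimodule are then nothing but the semiring axioms already granted by Lemma~\ref{lem:maxmin}. For this to be \emph{zero-preserving} one needs $0 \odot x = 0$ for all $x \in \Rdist$, i.e.\ $\min\{0,x\} = 0$ for all $x \in \R_{\geq 0} \cup \{\infty\}$, which holds because elements of $\Rdist$ are nonnegative. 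This settles the first half of the corollary.

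For $\mathcal{W}$, the plan is simply to invoke Lemma~\ref{lem:semiring-to-the-k}, which asserts that the $V$-fold coordinatewise product of any zero-preserving semimodule is again a zero-preserving semimodule over the same semiring. Applying this to $\mathcal{S}_{\max,\min}$ viewed as a semimodule over itself, the induced operations are precisely $(x \oplus y)_v = \max\{x_v, y_v\}$ and $(s \odot x)_v = \min\{s, x_v\}$, matching the definition of $\mathcal{W}$. The additive neutral element in each coordinate is $0$, so the neutral element of $\mathcal{W}$ w.r.t.\ $\oplus$ is $\bot = (0, \dots, 0)^\T$ as claimed, and zero-preservation is inherited coordinatewise.

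The only mildly delicate point, and the sole place where the proof is not pure citation, is verifying zero-preservation of the underlying semimodule structure; this reduces to the observation $\min\{0,x\} = 0$ on $\Rdist$, which crucially uses $\Rdist = \R_{\geq 0} \cup \{\infty\}$. No calculation beyond this is needed, so I would keep the write-up to two short paragraphs that (i) cite Lemma~\ref{lem:maxmin} and check the $\min\{0,x\}=0$ identity, and (ii) cite Lemma~\ref{lem:semiring-to-the-k} to lift to $\mathcal{W}$ and read off the zero element.
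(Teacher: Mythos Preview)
Your proposal is correct and matches the paper's approach: the corollary is stated without an explicit proof, with the justification ``by Lemma~\ref{lem:semiring-to-the-k}'' built into the statement itself. One minor remark: the annihilator identity $\min\{0,x\}=0$ that you flag as ``the only mildly delicate point'' is already part of the semiring axioms verified in Lemma~\ref{lem:maxmin} (Definition~\ref{def:semiring} requires $0$ to annihilate w.r.t.\ $\odot$), so zero-preservation of $\mathcal{S}_{\max,\min}$ over itself is automatic rather than a separate check; also, Lemma~\ref{lem:semiring-to-the-k} is stated for a semiring $\mathcal{S}$ rather than for an arbitrary semimodule, though this does not affect your application.
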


As adjacency matrix of $G = (V, E, \weight)$ w.r.t.\ $\mathcal{S}_{\max,\min}$ we propose $A \in \mathcal{S}_{\max,\min}^{V \times V}$ with
\begin{equation}\label{eq:maxmin-adjacencymatrix}
	(a_{vw}) := \begin{cases}
		\infty       & \text{if $v = w$,} \\
		\weight(v,w) & \text{if $\{v,w\} \in E$, and} \\
		0            & \text{otherwise.}
	\end{cases}
\end{equation}
This is a straightforward adaptation of the adjacency matrix w.r.t.\ $\mathcal{S}_{\min,+}$ in Equation~\eqref{eq:minplus-adjacencymatrix}.
As an initialization $x^{(0)} \in \mathcal{W}^V$ in which each node knows the trivial path of unbounded width to itself but nothing else is given by
\begin{equation}\label{eq:maxmin-x0}
	x^{(0)}_{vw} := \begin{cases}
		\infty & \text{if $v = w$ and} \\
		0      & \text{otherwise.}
	\end{cases}
\end{equation}
Then $1 \leq h \in \N$ multiplications with~$A$, i.e., $h$~iterations, yield
\begin{equation}\label{eq:maxmin-xh}
	x^{(h)} := A^h x^{(0)}
\end{equation}
which corresponds to the $h$-hop widest-path distance:

\begin{lemma}\label{lem:maxmin-xh}
	Given $x^{(h)}$ from Equation~\eqref{eq:maxmin-xh}, we have
	\begin{equation}
		x^{(h)}_{vw} = \width^h(v, w, G).
	\end{equation}
\end{lemma}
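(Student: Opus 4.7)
My plan is to prove Lemma~\ref{lem:maxmin-xh} by induction on $h$, in complete analogy with the min-plus case of Lemma~\ref{lem:minplus-xh}. The base case $h=0$ is immediate from the initialization~\eqref{eq:maxmin-x0}: for $v=w$ the empty path belongs to $\paths^0(v,v,G)$ and has width $\min\emptyset = \infty$, while for $v\neq w$ the set $\paths^0(v,w,G)$ is empty and, under the convention $\max\emptyset = 0$ natural for $\mathcal{S}_{\max,\min}$, $\width^0(v,w,G) = 0$. This matches $x^{(0)}_{vw}$ exactly.

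For the inductive step, I would fix $h$ and assume $x^{(h)}_{uw} = \width^h(u,w,G)$ for all $u,w\in V$. Unfolding the matrix-vector product in $\mathcal{S}_{\max,\min}$ and using the semimodule operations of $\mathcal W$, I get
\begin{equation}
x^{(h+1)}_{vw} = \bigoplus_{u\in V} a_{vu}\odot x^{(h)}_{uw} = \max_{u\in V}\min\bigl\{a_{vu},\,\width^h(u,w,G)\bigr\}.
\end{equation}
Because of the convention $a_{vv} = \infty$, the $u=v$ term in the outer maximum is just $\width^h(v,w,G)$, which accounts for ``staying at $v$'' and in particular for the empty path when $v=w$. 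For $u\neq v$ with $\{v,u\}\notin E$ we have $a_{vu}=0$, so these terms contribute nothing. The remaining terms range over neighbours $u$ of $v$ with $a_{vu}=\weight(v,u)$.

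It then suffices to verify the combinatorial identity
\begin{equation}
\width^{h+1}(v,w,G) = \max\Bigl\{\width^h(v,w,G),\;\max_{\{v,u\}\in E}\min\bigl\{\weight(v,u),\,\width^h(u,w,G)\bigr\}\Bigr\}.
\end{equation}
The ``$\geq$'' direction is immediate: $\paths^h\subseteq\paths^{h+1}$, and prepending the edge $\{v,u\}$ to any path $p'\in\paths^h(u,w,G)$ gives a path in $\paths^{h+1}(v,w,G)$ of width $\min\{\weight(v,u),\width(p')\}$. For the ``$\leq$'' direction I would pick an arbitrary $p\in\paths^{h+1}(v,w,G)$ and split on whether $p$ is empty (forcing $v=w$, yielding the first term on the right) or starts with a first edge $\{v,u\}$, in which case the suffix lies in $\paths^h(u,w,G)$ and the contribution is bounded by the corresponding term in the second maximum.

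I do not expect a significant obstacle here; the only subtleties are the conventions $\min\emptyset=\infty$ and $\max\emptyset=0$, and the correct interpretation of the diagonal entries $a_{vv}=\infty$ of the adjacency matrix, which effectively encode the ``wait at $v$'' moves that make $\paths^h$ monotone in $h$ and let the recursion close cleanly.
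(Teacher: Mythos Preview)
Your proposal is correct and follows essentially the same approach as the paper: induction on $h$, with the base case read off from the initialization~\eqref{eq:maxmin-x0} and the inductive step obtained by unfolding the $\mathcal{S}_{\max,\min}$ matrix-vector product, isolating the diagonal term $a_{vv}=\infty$, discarding non-edges via $a_{vu}=0$, and matching the result against the recursion for $\width^{h+1}$. Your treatment is in fact slightly more explicit than the paper's, which simply asserts the final identity without spelling out the two directions.
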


\proofinappendix

\begin{example}[\acl{SSWP}]
	\emph{\acf{SSWP}} asks for, given a weighted graph $G = (V, E, \weight)$, a designated source node $s \in V$, and $h \in \N$, the $h$-hop widest-path distance $\width^h(s, v, G)$ for every $v \in V$.
	It is solved by \iac{MBF-like} algorithm with $\mathcal{S} = \mathcal{M} = \mathcal{S}_{\max,\min}$, $r = \id$, and $x^{(0)}_s = \infty$ and $x^{(0)}_v = 0$ for all $v \neq s$.
\end{example}

\begin{example}[\acl{APWP}]\label{ex:apwp}
	\emph{\acf{APWP}} asks for, given $G = (V, E, \weight)$ and $h \in \N$, $\width^h(v, w, G)$ for all $v,w \in V$.
	\ac{APWP} is \ac{MBF-like};
	it is solved by choosing $\mathcal{S} = \mathcal{S}_{\max,\min}$, $\mathcal{M} = \mathcal{W}$, $r = \id$, and $x^{(0)}$ from Equation~\eqref{eq:maxmin-x0} by Lemma~\ref{lem:maxmin-xh}.
\end{example}

\begin{example}[\acl{MSWP}]
	In the \emph{\acf{MSWP}} problem, each node is looking for the $h$-hop widest path distance to all nodes in a designated set $S \subseteq V$ of source nodes.
	This is solved by the same \ac{MBF-like} algorithm as for \ac{APWP} in Example~\ref{ex:apwp} when changing $x^{(0)}$ to $x^{(0)}_{vw} = \infty$ if $v = w \in S$ and $x^{(0)}_{vw} = 0$ otherwise.
\end{example}

\subsection{\acs{MBF-like} Algorithms over the All-Paths Semiring}
\label{sec:examples-allpaths}

Mohri discusses \ac{kSDP}, where each $v \in V$ is required to find the $k$ shortest paths to a designated source node $s \in V$, in the light of his algebraic framework for distance computations~\cite{m-sfasdp-02}.
Our framework captures this application as well, but requires a different semiring than $\mathcal{S}_{\min,+}$:
While $\mathcal{S}_{\min,+}$ suffices for many applications, see Section~\ref{sec:examples-minplus}, it cannot distinguish between different paths of the same length.
This is a problem in the \ac{kSDP}, because there may be multiple paths of the same length among the $k$ shortest.

\begin{observation}
	No semimodule $\mathcal M$ over $\mathcal{S}_{\min,+}$ can overcome this issue:
	The left-distributive law~\eqref{eq:dist-left-module} requires, for all $x \in \mathcal{M}$ and $s, s' \in \mathcal{S}_{\min,+}$, that $sx \oplus s'x = (s \oplus s')x$.
	Consider different paths $\pi \neq \pi'$ ending in the same node with $\weight(\pi) = s = s' = \weight(\pi')$.
	W.r.t.\ $\mathcal{S}_{\min,+}$ and~$\mathcal M$, the left-distributive law yields $sx \oplus s'x = \min\{s, s'\} \odot x$, i.e., propagating $x$ over~$\pi$, over~$\pi'$, or over both and then aggregating \emph{must be indistinguishable} in the case of $s = s'$.
\end{observation}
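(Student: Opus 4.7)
The observation is essentially a direct consequence of the left-distributive law~\eqref{eq:dist-left-module} combined with the fact that $\oplus$ on $\mathcal{S}_{\min,+}$ is idempotent ($s \oplus s = \min\{s,s\} = s$). My plan is to formalize the scenario sketched in the statement and show that no choice of semimodule $\mathcal M$ over $\mathcal{S}_{\min,+}$, no initial state, and no representative projection can escape the resulting algebraic identity.

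First, I would fix a minimal concrete setting: an initial state $x \in \mathcal{M}$ placed at a node $u$, and two distinct paths $\pi \neq \pi'$ from $u$ to a target node $v$ of equal weight $s := \weight(\pi) = \weight(\pi')$, realized by embedding $\pi \cup \pi'$ in some graph. Since in the MBF-like framework propagation across an edge of weight $w$ is scalar multiplication by $w$, iterated application of~\eqref{eq:module-dist3} shows that the cumulative effect of propagating $x$ along a path of weight $s$ is exactly $s \odot x$. Thus, after sufficiently many iterations, the contribution accumulated at $v$ from $u$ via $\pi$ alone is $sx$, the contribution via $\pi'$ alone is also $sx$, and the contribution via both paths is $sx \oplus sx$.

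Next, I would invoke the left-distributive law and idempotence of $\oplus$ in $\mathcal{S}_{\min,+}$ to derive
\begin{equation}
sx \oplus sx \;=\; (s \oplus s) \odot x \;=\; \min\{s,s\} \odot x \;=\; sx,
\end{equation}
so the aggregate at $v$ is identical whether $x$ arrived along one of the two paths or along both. Since this equality holds for every $x \in \mathcal M$ regardless of how $\mathcal M$ is designed, no choice of $\mathcal{M}$, initial vector $x^{(0)}$, or representative projection $r$ can produce distinguishable states at $v$ in the two scenarios: by Lemma~\ref{lem:slf-properties}, equal inputs remain equivalent under every subsequent \ac{SLF}, and filtering only respects the congruence. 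Hence any MBF-like algorithm over $\mathcal{S}_{\min,+}$ produces the same output at $v$ in a graph where $u$ and $v$ are joined by one path of weight $s$ as in a graph where they are joined by two, which rules out solving the \ac{kSDP} for $k \geq 2$.

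There is no real obstacle here: the whole argument is the one-line algebraic identity above, and the only care needed is to verify that ``propagation along a path'' genuinely unfolds as iterated scalar multiplication in the MBF-like framework---which is immediate from~\eqref{eq:module-dist3}---and that no filter can retroactively separate states that the semimodule axioms force to be equal.
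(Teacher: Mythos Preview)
Your argument is correct and is essentially the same as the paper's: the observation is stated as self-proving, and the entire content is the one-line identity $sx \oplus s'x = (s \oplus s')x = \min\{s,s'\} \odot x$ from the left-distributive law, which collapses to $sx$ when $s = s'$. Your additional remarks about unfolding path propagation via~\eqref{eq:module-dist3} and the irrelevance of filtering are sound elaborations, but the paper does not spell these out and treats the algebraic identity itself as the complete justification.
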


This does not mean that the framework of \ac{MBF-like} algorithms cannot be applied, but rather indicates that the toolbox needs a more powerful semiring than~$\mathcal{S}_{\min,+}$.
The motivation of this section is to add such a semiring, the \emph{all-paths semiring~$\mathcal{P}_{\min,+}$,} to the toolbox.
Having established~$\mathcal{P}_{\min,+}$, the advantages of the previously established machinery are available:
pick a semimodule (or use~$\mathcal{P}_{\min,+}$ itself) and define a representative projection.
We demonstrate this for \ac{kSDP} and a variant.

The basic concept of $\mathcal{P}_{\min,+}$ is simple: remember \emph{paths} instead of adding up ``anonymous'' distances.
Instead of storing the sum of the traversed edges' weight, store the string of edges.
We also add the ability to remember multiple paths into the semiring.
This includes enough features in $\mathcal{P}_{\min,+}$;
we do not require dedicated semimodules for \ac{kSDP} and use the fact that $\mathcal{P}_{\min,+}$ is a zero-preserving semimodule over itself.

We begin the technical part with a convenient representation of paths:
Let $P \subset V^+$ denote the set of non-empty, loop-free, directed paths on~$V$, denoted as tuples of nodes.
Furthermore, let $\circ \subseteq P^2$ be the relation of \emph{concatenable} paths defined by
\begin{equation}
	(v_1, \dots, v_k) \circ (w_1, \dots, w_\ell)
		\quad :\Leftrightarrow \quad
		v_k = w_1.
\end{equation}
By abuse of notation, when and if its operands are concatenable, we occasionally use $\circ$ as concatenation operator.
Furthermore, we use $\{ (\pi^1, \pi^2) \mid \pi = \pi^1 \circ \pi^2 \}$ as a shorthand for the rather cumbersome $\{ (\pi^1, \pi^2) \mid \pi^1, \pi^2 \in P \land \pi^1 \circ \pi^2 \land \text{$\pi$ is the concatenation of $\pi^1$ and $\pi^2$} \}$ to iterate over all two-splits of~$\pi$.

As motivated above, the all-paths semiring can store multiple paths.
We represent this using vectors in $(\Rdist)^P$ storing a non-$\infty$ weight for every encountered path and $\infty$ for all paths not encountered so far.
This can be efficiently represented by implicitly leaving out all $\infty$ entries.
\begin{definition}[All-Paths Semiring]
	We call $\mathcal{P}_{\min,+} = ((\Rdist)^P, \oplus, \odot)$ the \emph{all-paths semiring,} where $\oplus$ and $\odot$ are defined, for all $\pi \in P$ and $x,y \in \mathcal{P}_{\min,+}$, by
	\begin{align}
		(x \oplus y)_\pi
			&:= \min\{ x_\pi, y_\pi \}\text{ and} \\
		(x \odot y)_\pi
			&:= \min\{ x_{\pi^1} + y_{\pi^2} \mid \pi = \pi^1 \circ \pi^2 \}.
	\end{align}
	We say that \emph{$x$~contains~$\pi$ (with weight~$x_\pi$)} if and only if $x_\pi < \infty$.
\end{definition}
Summation picks the smallest weight associated to each path in either operand;
multiplication $(x \odot y)_\pi$ finds the lightest estimate for $\pi$ composed of two-splits $\pi = \pi^1 \circ \pi^2$, where $\pi^1$ is picked from $x$ and $\pi^2$ from~$y$.
Observe that $\mathcal{P}_{\min,+}$ supports upper bounds on path lengths;
we do, however, not use this feature.
Intuitively, $\mathcal{P}_{\min,+}$~stores all encountered paths with their exact weights;
in this mindset, summation corresponds to the union and multiplication to the concatenability-obeying Cartesian product of the paths contained in $x$ and~$y$.

\begin{lemma}\label{lem:allpaths}
	$\mathcal{P}_{\min,+}$ is a semiring with neutral elements
	\begin{align}
		0     &:= (\infty, \dots, \infty)^\T\text{ and} \\
		1_\pi &:= \begin{cases}
				0      & \text{if $\pi = (v)$ for some $v \in V$ and} \\
				\infty & \text{otherwise}
			\end{cases}
	\end{align}
	w.r.t.\ $\oplus$ and~$\odot$, respectively.
\end{lemma}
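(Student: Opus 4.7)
The plan is to verify the semiring axioms directly, exploiting the pointwise structure of $\oplus$ and the two-split structure of $\odot$. First I would check that $(\mathcal{P}_{\min,+},\oplus)$ is a commutative monoid: since $\oplus$ is defined coordinatewise as $\min$ on $\Rdist$, commutativity, associativity, and idempotence are inherited, and the all-$\infty$ vector is the neutral element because $\min\{\infty, x_\pi\} = x_\pi$ for every $\pi \in P$.

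Next I would address multiplication. For associativity, I would unfold both sides at an arbitrary $\pi \in P$ and show they both equal
\begin{equation}
\min\{ x_{\pi^1} + y_{\pi^2} + z_{\pi^3} \mid \pi = \pi^1 \circ \pi^2 \circ \pi^3 \},
\end{equation}
where the minimum ranges over three-splits of $\pi$ (with the convention that factors agree on their shared endpoints). Both $((x \odot y) \odot z)_\pi$ and $(x \odot (y \odot z))_\pi$ reorganize into this common expression by swapping the order of the two nested minima and invoking $(a+b)+c = a+(b+c)$ on $\Rdist$. To verify that $1$ is the multiplicative neutral element, I would observe that $(1 \odot x)_\pi = \min\{1_{\pi^1} + x_{\pi^2} \mid \pi = \pi^1 \circ \pi^2\}$; since $1_{\pi^1} < \infty$ only when $\pi^1$ is a single node, the only contributing split is $\pi^1 = (v_1)$ where $v_1$ is the first vertex of $\pi$ and $\pi^2 = \pi$, yielding $0 + x_\pi = x_\pi$. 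The analogous argument with $\pi^2 = (v_k)$ handles $x \odot 1 = x$.

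For distributivity, I would use that $+$ distributes over $\min$ in $\Rdist$: at any $\pi$,
\begin{equation}
(x \odot (y \oplus z))_\pi
= \min_{\pi = \pi^1 \circ \pi^2} \bigl(x_{\pi^1} + \min\{y_{\pi^2}, z_{\pi^2}\}\bigr)
= \min_{\pi = \pi^1 \circ \pi^2} \min\{x_{\pi^1} + y_{\pi^2},\, x_{\pi^1} + z_{\pi^2}\},
\end{equation}
and the latter is exactly $(x \odot y \oplus x \odot z)_\pi$ after swapping the two minima. The right-distributive law is symmetric. Finally, $0$ annihilates since every term of $(0 \odot x)_\pi$ and $(x \odot 0)_\pi$ has an $\infty$ summand.

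The only mild subtlety\dash---really the one place where care pays off\dash---is ensuring that the concatenation relation $\circ$ behaves compatibly with the associativity bookkeeping: a split $\pi = \pi^1 \circ \pi^2 \circ \pi^3$ must be interpreted as two consecutive concatenable splits, but this follows because $\pi \in P$ is loop-free, so the intermediate endpoints are uniquely determined by the positions of the cuts. Once this is observed, the three-fold minimum representation used for associativity is unambiguous and the proof reduces to routine manipulations of $\min$ and $+$ on $\Rdist$.
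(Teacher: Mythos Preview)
Your proposal is correct and follows essentially the same approach as the paper's proof: a direct axiom-by-axiom verification using pointwise $\min$ for $\oplus$, a three-split expansion for associativity of $\odot$, distributivity of $+$ over $\min$ for the distributive laws, and the $\infty$-summand observation for annihilation. Your added remark on the bijection between nested two-splits and three-splits is a helpful clarification that the paper leaves implicit.
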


\proofinappendix

\begin{corollary}
	$\mathcal{P}_{\min,+}$ is a zero-preserving semimodule over itself.
\end{corollary}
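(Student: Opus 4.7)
The plan is to derive the claim directly from Lemma~\ref{lem:allpaths}, combined with the standard observation\dash---already used for Corollary~\ref{cor:maxmin-modules}\dash---that every semiring whose additive identity is absorbing with respect to multiplication is a zero-preserving semimodule over itself, with the semiring's multiplication~$\odot$ playing the role of scalar multiplication. In view of Definition~\ref{def:semimodule} in Appendix~\ref{app:algebra}, associativity of the scalar action, both distributive laws, and the identity $1 \odot x = x$ all reduce to semiring axioms already supplied by Lemma~\ref{lem:allpaths} and so require no additional argument.

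The only non-formal step is verifying the zero-preserving property $0 \odot x = x \odot 0 = 0$ for every $x \in \mathcal{P}_{\min,+}$, where $0 = (\infty,\ldots,\infty)^\T$. For this I would unfold the definition of~$\odot$: for any $\pi \in P$ there is at least one two-split $\pi = \pi^1 \circ \pi^2$ (for instance, splitting off the first vertex), so the set indexing the minimum in $(0 \odot x)_\pi = \min\{0_{\pi^1} + x_{\pi^2} \mid \pi = \pi^1 \circ \pi^2\}$ is non-empty; since $0_{\pi^1} = \infty$ for every $\pi^1$, every summand equals $\infty$, and hence $(0 \odot x)_\pi = \infty = 0_\pi$. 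The identity $x \odot 0 = 0$ follows symmetrically.

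I do not expect a genuine obstacle. The one point worth flagging rather than assuming away is that zero-preservation is not automatic from the semiring axioms alone\dash---it needs the absorption of $0$ under~$\odot$\dash---but this absorption falls out immediately from how $\odot$ is defined on $\mathcal{P}_{\min,+}$, so the whole argument collapses to the short calculation above.
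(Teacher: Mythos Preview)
Your proposal is correct and matches the paper's approach: the corollary is stated without proof, as an immediate consequence of Lemma~\ref{lem:allpaths} together with the general fact that every semiring is a zero-preserving semimodule over itself. One small remark: your caveat that zero-preservation ``is not automatic from the semiring axioms alone'' is unnecessary here, since Definition~\ref{def:semiring} in Appendix~\ref{app:algebra} already requires $0$ to annihilate under~$\odot$; hence the separate verification of $0 \odot x = 0$ is redundant within this paper's framework.
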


Computations on a graph $G = (V, E, \weight)$ w.r.t.\ $\mathcal{P}_{\min,+}$ require\dash---this is a generalization of Equation~\eqref{eq:minplus-adjacencymatrix}\dash---an adjacency matrix $A \in \mathcal{P}_{\min,+}^{V \times V}$ defined by
\begin{equation}\label{eq:allpaths-adjacencymatrix}
	(a_{vw})_\pi := \begin{cases}
		1_\pi        & \text{if $v = w$,} \\
		\weight(v,w) & \text{if $\pi = (v,w)$, and} \\
		\infty       & \text{otherwise.}
	\end{cases}
\end{equation}
On the diagonal, $a_{vv} = 1$ contains exactly the zero-hop paths of weight~$0$;
all non-trivial paths are ``unknown'' in $a_{vv}$, i.e., accounted for with an infinite weight.
An entry $a_{vw}$ with $v \neq w$ contains, if present, only the edge $\{v,w\}$, represented by the path $(v,w)$ of weight $\weight(v,w)$;
all other paths are not contained in~$a_{vw}$.
An initialization where each node $v$ knows only about the zero-hop path $(v)$ is represented by the vector $x^{(0)} \in \mathcal{P}_{\min,+}^V$ with
\begin{equation}\label{eq:allpaths-x0}
	\left( x^{(0)}_v \right)_\pi := \begin{cases}
			0      & \text{if $\pi = (v)$ and} \\
			\infty & \text{otherwise.}
		\end{cases}
\end{equation}
Then $1 \leq h \in \N$ multiplications of $x^{(0)}$ with~$A$, i.e., $h$~iterations, yield $x^{(h)}$ with
\begin{equation}\label{eq:allpaths-xh}
	x^{(h)} := A^h x^{(0)}.
\end{equation}
As expected, $x^{(h)}_v$~contains exactly the $h$-hop paths beginning in $v$ with their according weights:
\begin{lemma}\label{lem:allpaths-xh}
	Let $x^{(h)}$ be defined as in Equation~\eqref{eq:allpaths-x0}, w.r.t.\ the graph $G = (V, E, \weight)$.
	Then for all $v \in V$ and $\pi \in P$
	\begin{equation}
		\left( x^{(h)}_v \right)_\pi = \begin{cases}
				\weight(\pi) & \text{if $\pi \in \paths^h(v, \cdot, G)$ and} \\
				\infty       & \text{otherwise.}
			\end{cases}
	\end{equation}
\end{lemma}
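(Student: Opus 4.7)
The natural approach is induction on $h$. For the base case $h=0$, note that $A^0 = I$ acts as the identity (the matrix with $1$ on the diagonal and $0$ off-diagonal in $\mathcal{P}_{\min,+}$), so $x^{(0)}$ itself must satisfy the claim: indeed, $\paths^0(v,\cdot,G)=\{(v)\}$ with $\weight((v))=0$, matching Equation~\eqref{eq:allpaths-x0}.

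For the inductive step, assume the claim holds for $h$ and compute $x^{(h+1)}=Ax^{(h)}$ componentwise. Fix $v\in V$ and $\pi\in P$. By Definition~\ref{def:slf} and the definition of $\odot$ in $\mathcal{P}_{\min,+}$,
\begin{equation}
\bigl(x^{(h+1)}_v\bigr)_\pi
= \bigoplus_{w\in V}\bigl(a_{vw}\odot x^{(h)}_w\bigr)_\pi
= \min_{w\in V}\,\min_{\pi=\pi^1\circ\pi^2}\bigl\{(a_{vw})_{\pi^1}+(x^{(h)}_w)_{\pi^2}\bigr\}.
\end{equation}
By the definition of $A$ in Equation~\eqref{eq:allpaths-adjacencymatrix}, the inner term is finite in exactly two situations: either $w=v$ and $\pi^1=(v)$ (contributing $0+(x^{(h)}_v)_\pi$), or $\{v,w\}\in E$ and $\pi^1=(v,w)$, in which case $\pi^2$ must start at $w$ and $\pi=(v,w)\circ\pi^2$. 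Invoking the induction hypothesis, the first case yields $\weight(\pi)$ iff $\pi\in\paths^h(v,\cdot,G)$, while the second case yields $\weight(v,w)+\weight(\pi^2)=\weight(\pi)$ iff $\pi^2\in\paths^h(w,\cdot,G)$ and $\pi=(v,w)\circ\pi^2$.

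Combining these cases shows that $(x^{(h+1)}_v)_\pi$ is finite precisely when $\pi$ is obtainable either as an $h$-hop path from $v$ (padded by a zero-hop ``self-step'' at the start) or as a one-hop extension $(v,w)\circ\pi^2$ of an $h$-hop path from a neighbor $w$; in either case, the stored value equals $\weight(\pi)$. Since $\paths^{h+1}(v,\cdot,G)=\paths^h(v,\cdot,G)\cup\bigcup_{\{v,w\}\in E}\{(v,w)\circ\pi^2\mid\pi^2\in\paths^h(w,\cdot,G)\}$, this is exactly the set of $(h+1)$-hop paths starting at $v$, completing the induction.

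The main obstacle is purely notational: carefully unfolding the two-split minimization in $\odot$ and matching the resulting case analysis against the recursive structure of $\paths^{h+1}$. A secondary subtlety is reconciling the loop-free requirement baked into $P\subset V^+$ with the concatenation $(v,w)\circ\pi^2$; since the claim only asserts the value on paths $\pi\in P$, entries indexed by non-simple walks are irrelevant and the argument restricts attention to those $\pi\in\paths^{h+1}(v,\cdot,G)$ matching the path convention of Section~\ref{sec:notation}.
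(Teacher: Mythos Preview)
Your proposal is correct and follows essentially the same approach as the paper: induction on $h$, with the inductive step unfolding $x^{(h+1)}_v=\bigoplus_w a_{vw}\odot x^{(h)}_w$ into the diagonal term (contributing $\paths^h(v,\cdot,G)$) and the edge terms (contributing one-hop extensions), then matching against the recursive decomposition $\paths^{h+1}(v,\cdot,G)=\paths^h(v,\cdot,G)\cup\bigcup_{\{v,w\}\in E}\{(v,w)\circ\pi^2\mid\pi^2\in\paths^h(w,\cdot,G)\}$. Your remark on the loop-free constraint in $P$ is a nice observation that the paper leaves implicit.
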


\proofinappendix

With the all-paths semiring $\mathcal{P}_{\min,+}$ established, we turn to the \ac{kSDP}, our initial motivation for adding $\mathcal{P}_{\min,+}$ to the toolbox of \ac{MBF-like} algorithms in the first place.

\begin{definition}[\aclp{kSDP}~\cite{m-sfasdp-02}]\label{def:ksdp}
	Given a graph $G = (V, W, \weight)$ and a designated source vertex $s \in V$, the \emph{\acf{kSDP}} asks:
	For each node $v \in V$ and considering all $v$-$s$-paths, what are the weights of the $k$ lightest such paths?
	In the \emph{\acf{kDSDP},} the path weights have to be distinct.
\end{definition}

In order to solve the \ac{kSDP}, we require a representative projection that reduces the abundance of paths stored in an unfiltered $x^{(h)}$ to the relevant ones.
Relevant in this case simply means to keep the $k$ shortest $v$-$s$-paths in $x^{(h)}_v$.
In order to formalize this, let $P(v,w,x)$ denote, for $x \in \mathcal{P}_{\min,+}$ and $v,w \in V$, the set of all $v$-$w$-paths contained in~$x$:
\begin{equation}
	P(v,w,x) := \{ \pi \in P \mid \text{$\pi$ is a $v$-$w$-path with $x_{\pi}\neq \infty$} \}.
\end{equation}
Order $P(v,w,x)$ ascendingly w.r.t.\ the weights $x_\pi$, breaking ties using an arbitrary ordering on~$P$.
Then let $P_k(v,w,x)$ denote the set of the first (at most) $k$~entries of that sequence:
\begin{equation}\label{eq:ksdp-pk}
	P_k(v,w,x) := \{ \pi \in P(v,w,x) \mid \text{$x_{\pi}$ is among the $k$ smallest entries of $x$ (ties broken by order)} \}.
\end{equation}
We define the (representative, see below) projection $r\colon \mathcal{P}_{\min,+} \to \mathcal{P}_{\min,+}$ by
\begin{equation}\label{eq:ksdp-filter}
	r(x)_\pi \mapsto \begin{cases}
			x_\pi  & \text{if $\pi \in P_k(v,s,x)$ for some $v \in V$ and} \\
			\infty & \text{otherwise.}
		\end{cases}
\end{equation}
It discards everything except, for each $v \in V$, $k$ shortest $v$-$s$-paths contained in~$x$.
Following the standard approach\dash---Lemma~\ref{lem:congruence-by-r}\dash---we define vectors $x,y \in \mathcal{P}_{\min,+}$ to be equivalent if and only if their entries for $P_k(\cdot,s,x)$ do not differ:
\begin{equation}\label{eq:kdsp-congruence}
	\forall x,y \in \mathcal{P}_{\min,+}\colon
		\quad x \sim y
		\quad :\Leftrightarrow
		\quad r(x) = r(y).
\end{equation}

\begin{lemma}\label{lem:ksdp-r}
	$\sim$ is a congruence relation on $\mathcal{P}_{\min,+}$ with representative projection~$r$.
\end{lemma}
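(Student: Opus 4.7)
I would invoke Lemma~\ref{lem:congruence-by-r}, which reduces the claim to showing that $r$ is an idempotent projection and that both $\oplus$ and $\odot$ respect the $r$-fibres. Idempotency is immediate: for every $v$, $P_k(v,s,r(x)) = P_k(v,s,x)$ with identical weights, so $r^2 = r$. For the two compatibility conditions, I would aim at the sharper identities $r(x \oplus y) = r(r(x) \oplus r(y))$ and $r(s \odot x) = r(s \odot r(x))$; from these, \eqref{eq:congruence-by-r-product} and \eqref{eq:congruence-by-r-sum} follow by substitution.

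The sum case rests on the inclusion $P_k(v,s,x \oplus y) \subseteq P_k(v,s,x) \cup P_k(v,s,y)$ together with weight preservation on the image. The argument is a short counting one: suppose $\pi \in P_k(v,s,x \oplus y)$ attains $(x \oplus y)_\pi = x_\pi$ (symmetric if $=y_\pi$) but $\pi \notin P_k(v,s,x)$. Then at least $k$ other $v$-$s$-paths $\pi'$ strictly precede $\pi$ in $P(v,s,x)$ under the (weight, $P$-order) ordering; since $(x \oplus y)_{\pi'} \leq x_{\pi'}$ for each such $\pi'$, each also strictly precedes $\pi$ in $P(v,s,x \oplus y)$, contradicting $\pi \in P_k(v,s,x \oplus y)$. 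Hence $\pi \in P_k(v,s,x)$ with $(x \oplus y)_\pi = x_\pi = r(x)_\pi$, so $r(x \oplus y)$ is fully determined by $r(x)$ and $r(y)$.

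The product case I expect to be the main obstacle: it follows the same counting template but interacts delicately with the loop-free restriction on $P$. For $\pi \in P_k(v,s,s \odot x)$, I would fix a minimizing split $\pi = \pi^1 \circ \pi^2$ with meeting vertex $u$ and aim to show $\pi^2 \in P_k(u,s,x)$. The subtle step is that if $\pi^2 \notin P_k(u,s,x)$, one wants to replace $\pi^2$ by strictly earlier $\pi'^2 \in P_k(u,s,x)$ inside the split; the resulting $\pi^1 \circ \pi'^2$ need not lie in $P$ when $\pi'^2$ revisits a vertex of $\pi^1$, so the verbatim counting fails. The plan is to exploit that $\pi$ itself is loop-free to control the interference and extract $k$ replacements $\pi'^2$ whose concatenations with $\pi^1$ lie in $P$ and strictly precede $\pi$ in $P(v,s,s \odot x)$, yielding a contradiction and hence $\pi^2 \in P_k(u,s,x)$. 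Once this is secured, $r(x)$ retains $x_{\pi^2}$, so the same split realizes the value on the filtered side, giving $r(s \odot x) = r(s \odot r(x))$ and, via Lemma~\ref{lem:congruence-by-r}, the lemma.
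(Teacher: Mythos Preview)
Your overall route via Lemma~\ref{lem:congruence-by-r} and the auxiliary identities $r(x\oplus y)=r(r(x)\oplus r(y))$ and $r(s\odot x)=r(s\odot r(x))$ is exactly the paper's strategy, and your sum argument matches the paper's. You are also right that the product case is the crux, and the obstacle you isolate\dash---that replacing $\pi^2$ by a better $\pi'^2\in P_k(u,s,x)$ may produce a concatenation $\pi^1\circ\pi'^2\notin P$\dash---is genuine. The paper's own proof simply asserts ``$\pi^2\in P_k(w,s,x)$, otherwise $\pi\notin P_k(v,s,yx)$'' without engaging this point, so it glosses over precisely the step you worry about.

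Your proposed fix, however, cannot work, because with $P$ restricted to loop-free paths the claimed congruence actually fails. Take $V=\{a,b,c,s\}$ and $k=1$; let $y$ have $y_{(a,b)}=0$ and $\infty$ elsewhere, and let $x$ have $x_{(b,a,s)}=5$, $x_{(b,c,s)}=10$, and $\infty$ elsewhere. Then $r(x)$ retains only $(b,a,s)$, so $x':=r(x)$ satisfies $x\sim x'$. But $(y\odot x)_{(a,b,c,s)}=0+10=10$, whereas the only candidate split for $y\odot x'$ is $(a,b)\circ(b,a,s)$, which is not loop-free; hence $y\odot x'$ has no finite entry at all, and $r(y\odot x)\neq r(y\odot x')$, violating~\eqref{eq:congruence-by-r-product}. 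There is no way to ``extract $k$ loop-free replacements'' in this instance. A correct argument therefore needs a modified setup\dash---for example, taking $P$ to consist of walks rather than loop-free paths (so that concatenation is always defined), or restricting the claim to the specific scalars and state vectors $x^{(h)}$ that actually arise in the algorithm\dash---rather than the plan you outline.
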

\proofinappendix

Observe that $r$ is defined to maintain the $k$ shortest $v$-$s$-paths \emph{for all $v \in V$,} potentially storing $k|V|$ paths instead of just~$k$.
Intuitively, one could argue that $r^V x^{(h)}_v$ \emph{only} needs to contain $k$ paths, since they all start in~$v$, which is what the algorithm should actually be doing.
This objection is correct in that this is what actually happens when running the algorithm with initialization~$x^{(0)}$:
By Lemma~\ref{lem:allpaths-xh}, $x^{(h)}_v$~contains the $h$-hop shortest paths \emph{starting in~$v$} and $r$ removes all that do not end in $s$ or are too long.
On the other hand, the objection is flawed.
In order for $r$ to behave correctly w.r.t.\ all $x \in \mathcal{P}_{\min,+}$\dash---especially those less nicely structured than $x^{(h)}_v$ where all paths start at~$v$\dash---we must define $r$ as it is, otherwise the proof of Lemma~\ref{lem:ksdp-r} fails for mixed starting-node inputs.

\begin{example}[\acl{kSDP}]\label{ex:ksdp}
	\ac{kSDP}, compare Definition~\ref{def:ksdp}, is solved by an \ac{MBF-like} algorithm $\mathcal A$ with $\mathcal{S} = \mathcal{M} = \mathcal{P}_{\min,+}$, the representative projection and congruence relation defined in Equations~\eqref{eq:ksdp-filter} and~\eqref{eq:kdsp-congruence}, the choices of $A$ and $x^{(0)}$ from Equations~\eqref{eq:allpaths-adjacencymatrix} and~\eqref{eq:allpaths-x0}, and $h = \spd(G)$ iterations.
\end{example}

By Lemma~\ref{lem:allpaths-xh} and due to $h = \spd(G)$, $x^{(h)}_v$~contains all paths that start in~$v$, associated with their weights.
Since $\mathcal{A}^h(G) = r^V x^{(h)}$, by definition of $r$ in Equation~\eqref{eq:ksdp-filter}, $(r^V x^{(h)})_v = r(x^{(h)}_v)$ contains the subset of those paths that have the $k$ smallest weights and start in~$v$, i.e., precisely what \ac{kSDP} asks for.

We remark that solving a generalization of \ac{kSDP} looking for the $k$ shortest $h$-hop distances is straightforward using $h$ iterations.
Furthermore, note that our approach reveals the actual paths along with their weights.

\begin{example}[\acl{kDSDP}]
	\ac{kDSDP} from Definition~\ref{def:ksdp} can be solved analogously to \ac{kSDP} in Example~\ref{ex:ksdp}.
\end{example}
The only adjustment that needs to be made is the definition of $P_k(v,w,x)$ in Equation~\eqref{eq:ksdp-pk}.
For each of the $k$ smallest weights in~$x$, the modified $P_k(v,w,x)$ contains only one representative:
the path contained in $x$ of that weight that is first w.r.t.\ lexicographically ordering by nodes.
This results in
\begin{align}
	P'_k(v,w,x) &:= \{ \pi \in P(v,w,x) \mid \text{$x_\pi$ is among the $k$ smallest weights in $x$} \}\text{ and} \label{eq:kdsdp-pk-begin} \\
	P_k(v,w,x)  &:= \{ \pi \in P'_k(v,w,x) \mid \text{$\pi$ is lexicographically smallest in $\{\pi' \mid x_{\pi'} = x_\pi\}$} \}. \label{eq:kdsdp-pk-end}
\end{align}
The proof of Lemma~\ref{lem:ksdp-r} works without modification when replacing~\eqref{eq:ksdp-pk} with~\eqref{eq:kdsdp-pk-begin}--\eqref{eq:kdsdp-pk-end}.

\subsection{\acs{MBF-like} Algorithms over the Boolean Semiring}
\label{sec:examples-bool}

A well-known semiring is the Boolean semiring $\mathcal{B} = (\{0,1\}, \lor, \land)$ and by Lemma~\ref{lem:semiring-to-the-k}, $\mathcal{B}^V$~is a zero-preserving semimodule over~$\mathcal{B}$.
It can be used to check for connectivity in a graph\footnote{%
	For this problem, we drop the assumption that graphs are connected.
} using the adjacency matrix
\begin{equation}
	(a_{vw}) := \begin{cases}
		1 & \text{if $v=w$ or $\{v,w\} \in E$ and} \\
		0 & \text{otherwise}
	\end{cases}
\end{equation}
together with initial values
\begin{equation}\label{eq:bool-x0}
	x^{(0)}_{vw} := \begin{cases}
		1 & \text{if $v=w$ and} \\
		0 & \text{otherwise}
	\end{cases}
\end{equation}
indicating that each node $v \in V$ is connected to itself.
An inductive argument reveals that
\begin{equation}
	\left( A^h x^{(0)} \right)_{vw} = 1
		\quad \Leftrightarrow \quad
		\paths^h(v,w,G) \neq \emptyset.
\end{equation}

\begin{example}[Connectivity]
	Given a graph, we want to check which pairs of nodes are connected by paths of at most $h$ hops.
	This is solved by \iac{MBF-like} algorithm using $\mathcal{S} = \mathcal{B}$, $\mathcal{M} = \mathcal{B}^V$, $r = \id$, and $x^{(0)}$ from Equation~\eqref{eq:bool-x0}.
	This example easily generalizes to single-source and multi-source connectivity variants.
\end{example}

\section{The Simulated Graph}
\label{sec:h}

In order to determine a tree embedding of the graph~$G$, we need to determine its \ac{LE} lists (compare Section~\ref{sec:frt}).
These are the result of \iac{MBF-like} algorithm using $\mathcal{S}_{\min,+}$ and~$\mathcal D$;
its filter $r$ ensures that $|r(x^{(i)})_v| \in \bigO(\log n)$ w.h.p.\ for all~$i$, i.e., that intermediate results are small.
This allows for performing an iteration with $\bigOT(m)$ work.
However, doing so requires $\spd(G)$ iterations, which in general can be as large as $n - 1$, but we aim for polylogarithmic time.

To resolve this problem, we reduce the \ac{SPD}, accepting a slight increase in stretch.
The first step is to use Cohen's $(d, 1 / \polylog n)$-hop set~\cite{c-ptnlwasusp-00}:
a small number of additional (weighted) edges for~$G$, such that for all $v,w \in V$, $\dist^d(v,w,G') \leq (1 + \hat\epsilon) \dist(v,w,G)$, where $G'$ is $G$ augmented with the additional edges and $\hat\epsilon \in 1 / \polylog n$.
Her algorithm is sufficiently efficient in terms of depth, work, and number of additional edges.
Yet, our problem is not solved:
The $d$-hop distances in $G'$ only \emph{approximate} distances (compare Observation~\ref{obs:hop-set-spd}), but constructing \ac{FRT} trees critically depends on the triangle inequality and thus on the use of \emph{exact} distances.

In this section, we resolve this issue.
After augmenting $G$ with the hop set, we embed it into a complete graph $H$ on the same node set so that $\spd(H) \in \bigO(\log^2 n)$, keeping the stretch limited.
Where hop sets preserve distances \emph{exactly} and ensure the existence of \emph{approximately} shortest paths with few hops, $H$~preserves distances \emph{approximately} but guarantees that we obtain \emph{exact} shortest paths with few hops.
Note that explicitly constructing $H$ causes $\bigOmega(n^2)$ work;
we circumnavigate this obstacle in Section~\ref{sec:oracle} with the help of the machinery developed in Section~\ref{sec:mbf}.

Since our construction requires to first add the hop set to~$G$, assume for the sake of presentation that $G$ already contains a $(d, \hat\epsilon)$-hop set for fixed $\hat\epsilon \in \R_{>0}$ and $d \in \N$ throughout this section.
We begin our construction of $H$ by sampling levels for the vertices~$V$:
Every vertex starts at level~$0$.
In step $\lambda \geq 1$, each vertex in level $\lambda - 1$ is raised to level $\lambda$ with probability~$\frac12$.
We continue until the first step $\Lambda + 1$ where no node is sampled.
$\level(v)$~refers to the level of $v \in V$ and we define the level of an edge $e \in E$ as $\level(e) := \min\{ \level(v) \mid v \in e \}$, the minimal level of its incident vertices.

\begin{lemma}\label{lem:levels}
	W.h.p., $\Lambda \in \bigO(\log n)$.
\end{lemma}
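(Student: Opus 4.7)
The plan is to observe that the sampling process is equivalent to assigning each node an independent geometric random variable, then apply a union bound.

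First I would note that for a fixed node $v \in V$, the event $\level(v) \geq \lambda$ requires $v$ to survive $\lambda$ independent promotions, each of which occurs with probability $1/2$. Hence
\begin{equation}
\Prob[\level(v) \geq \lambda] = 2^{-\lambda}.
\end{equation}
Since $\Lambda = \max_{v \in V} \level(v)$, a union bound over the $n$ vertices yields
\begin{equation}
\Prob[\Lambda \geq \lambda] \leq \sum_{v \in V} \Prob[\level(v) \geq \lambda] \leq n \cdot 2^{-\lambda}.
\end{equation}

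Next I would choose $\lambda = (c+1)\log_2 n$ for an arbitrary constant $c \geq 1$, giving $\Prob[\Lambda \geq (c+1)\log_2 n] \leq n^{-c}$. This shows $\Lambda \in \bigO(\log n)$ w.h.p., as required. There is no substantial obstacle; the only subtlety is observing that the stopping condition (``first step where no node is sampled'') is implicit in taking the maximum over all $n$ nodes of their individual geometric random variables, since $\Lambda + 1$ being the first empty level is equivalent to $\Lambda = \max_v \level(v)$.
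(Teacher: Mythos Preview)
Your proof is correct and essentially identical to the paper's: both compute $\Prob[\level(v)\geq \lambda]=2^{-\lambda}$ for a single node and then take a union bound over all $n$ nodes (the paper invokes its Lemma~\ref{lem:whp}, which is exactly this union bound packaged as a lemma). The only cosmetic difference is that you write the union bound explicitly and pick $\lambda=(c+1)\log_2 n$, whereas the paper phrases it as each node having $\level(v)<c\log n$ w.h.p.\ and then appeals to Lemma~\ref{lem:whp}.
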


\begin{proof}
	For $c \in \R_{\geq 1}$, $v \in V$ has $\level(v) < c \log n$ with probability $1 - (\frac12)^{c \log n} = 1 - n^{-c}$, i.e., w.h.p.
	Lemma~\ref{lem:whp} yields that all nodes have a level of less than $c \log n$ w.h.p.\ and the claim follows.
\end{proof}

The idea is to use the levels in the following way.
We devise a complete graph $H$ on~$V$.
An edge of $H$ of level $\lambda$ is weighted with the $d$-hop distance between its endpoints in~$G$\dash---a $(1 + \hat\epsilon)$-approximation of their exact distance because $G$ contains a $(d, \hat\epsilon)$-hop set by assumption\dash---multiplied with a penalty of $(1 + \hat\epsilon)^{\Lambda - \lambda}$.
This way, high-level edges are ``more attractive'' for shortest paths, because they receive smaller penalties.

\begin{definition}[Simulated graph~$H$]
	Let $G = (V, E, \weight)$ be a graph that contains a $(d, \hat\epsilon)$-hop set with levels sampled as above.
	We define the complete graph $H$ as
	\begin{gather}
		H := \left( V, \binom{V}{2}, \weight_\Lambda \right) \\
		\weight_\Lambda(\{v,w\}) \mapsto (1 + \hat\epsilon)^{\Lambda - \level(v,w)} \dist^d(v, w, G).
	\end{gather}
\end{definition}

We formalize the notion of high-level edges being ``more attractive'' than low-level paths:
In~$H$, any min-hop shortest path between two nodes of level $\lambda$ is exclusively comprised of edges of level $\lambda$ or higher;
no min-hop shortest path's level locally decreases.
Therefore, all min-hop shortest paths can be split into two subpaths, the first of monotonically increasing and the second of monotonically decreasing level.

\begin{lemma}\label{lem:high-level-path}
	Consider $v,w \in V$, $\lambda = \level(v,w)$, and $p \in \mhspaths(v, w, H)$.
	Then all edges of $p$ have level at least~$\lambda$.
\end{lemma}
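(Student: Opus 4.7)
The plan is to argue by contradiction: assuming some edge of $p$ has level strictly less than $\lambda$, I construct a $v$-$w$-path in $H$ with strictly fewer hops and no larger weight, contradicting $p \in \mhspaths(v,w,H)$.

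First I locate a ``valley.'' Write $p = v_0, v_1, \dots, v_k$ with $v_0 = v$ and $v_k = w$, and note that both endpoints have level at least $\lambda$ by definition of $\lambda = \min\{\level(v), \level(w)\}$. If some edge $\{v_\ell, v_{\ell+1}\}$ has level below $\lambda$, then at least one of its endpoints has level below $\lambda$, so there is an \emph{internal} index $\ell$ with $\level(v_\ell) < \lambda$. Set $a$ to be the largest index $< \ell$ with $\level(v_a) \geq \lambda$ and $b$ the smallest index $> \ell$ with $\level(v_b) \geq \lambda$; these exist because the endpoints have level at least $\lambda$, and $b - a \geq 2$.

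Next I compare the weight of the subpath $v_a, v_{a+1}, \dots, v_b$ to the weight of the direct edge $\{v_a, v_b\}$ in $H$. Every edge $\{v_i, v_{i+1}\}$ of the subpath has at least one endpoint of level less than $\lambda$, so (since levels are integers) its level is at most $\lambda - 1$, hence its penalty factor is at least $(1+\hat\epsilon)^{\Lambda - \lambda + 1}$. Combining this with $\dist^d \geq \dist$ and the triangle inequality of the metric $\dist(\cdot,\cdot,G)$, the subpath weight is at least
\begin{equation}
  (1+\hat\epsilon)^{\Lambda-\lambda+1} \sum_{i=a}^{b-1} \dist^d(v_i, v_{i+1}, G)
    \;\geq\; (1+\hat\epsilon)^{\Lambda-\lambda+1}\, \dist(v_a, v_b, G).
\end{equation}
The hop-set property gives $\dist(v_a, v_b, G) \geq \dist^d(v_a, v_b, G)/(1+\hat\epsilon)$, so the subpath weight is at least $(1+\hat\epsilon)^{\Lambda-\lambda}\,\dist^d(v_a, v_b, G)$. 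On the other hand, $\level(\{v_a, v_b\}) \geq \lambda$, so the weight of the direct edge in $H$ is at most $(1+\hat\epsilon)^{\Lambda-\lambda}\,\dist^d(v_a, v_b, G)$, i.e., no larger than the subpath weight.

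Replacing the subpath by this single edge therefore yields a $v$-$w$-path of at most the same weight as $p$ but with strictly fewer hops, contradicting $p \in \mhspaths(v,w,H)$. I expect the only subtle step to be the double-counting-free bookkeeping showing that the extra factor of $(1+\hat\epsilon)$ coming from the hop-set stretch is exactly absorbed by the fact that low-level edges incur one additional penalty factor of $(1+\hat\epsilon)$ relative to $\lambda$; the integrality of levels is what makes this bound tight.
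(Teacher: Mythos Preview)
Your proof is correct and follows essentially the same approach as the paper: both isolate a maximal low-level subpath between two nodes of level at least $\lambda$, bound its weight below by $(1+\hat\epsilon)^{\Lambda-\lambda+1}\dist(\cdot,\cdot,G)$ via the triangle inequality and $\dist^d\geq\dist$, and then use the hop-set property to show the single shortcut edge in $H$ is no heavier, yielding a hop-count contradiction. Your exposition spells out the absorption of the extra $(1+\hat\epsilon)$ factor more explicitly, but the argument is the same.
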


\begin{proof}
	The case $\lambda = 0$ is trivial.
	Consider $1 \leq \lambda \leq \Lambda$ and, for the sake of contradiction, let $q$ be a non-trivial maximal subpath of $p$ containing only edges of level strictly less than~$\lambda$.
	Observe that $q \in \mhspaths(v', w', H)$ for some $v',w'\in V$ with $\level(v'),\level(w') \geq \lambda$.
	We have
	\begin{equation}
		\weight_\Lambda(q)
			\geq (1 + \hat\epsilon)^{\Lambda - (\lambda - 1)} \dist(v', w', G).
	\end{equation}
	However, the edge $e = \{v',w'\}$ has level $\level(v',w')\geq \lambda$ and weight
	\begin{equation}
		\weight_\Lambda(e)
			\leq (1 + \hat\epsilon)^{\Lambda - \lambda} \dist^d(v', w', G)
			\leq (1 + \hat\epsilon)^{\Lambda - (\lambda - 1)} \dist(v', w', G)
			\leq \weight_\Lambda(q)
	\end{equation}
	by construction.
	Since $|q|$ is maximal and $\level(v'), \level(w') \geq \lambda$, $q$~can only be a single edge of level $\lambda$ or higher, contradicting the assumption.
\end{proof}

Since edge levels in min-hop shortest paths are first monotonically increasing and then monotonically decreasing the next step is to limit the number of hops spent on each level.

\begin{lemma}\label{lem:prefix}
	Consider vertices $v$ and $w$ of $H$ with $\level(v), \level(w) \geq \lambda$.
	Then w.h.p., one of the following statements holds:
	\begin{gather}
		\hop(v, w, H) \in \bigO(\log n)\text{ or} \label{eq:log-hops} \\
		\forall p \in \mhspaths(v, w, H)~\exists e \in p\colon \level(e) \geq \lambda+1.\label{eq:higher-level}
	\end{gather}
\end{lemma}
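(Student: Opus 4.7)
I prove the contrapositive: conditional on $\level(v), \level(w) \geq \lambda$, the joint failure of~\eqref{eq:log-hops} and~\eqref{eq:higher-level} has probability $n^{-\Omega(c)}$, so a union bound over the $\bigO(n^2)$ pairs proves the lemma w.h.p. Assume both fail, so there is a min-hop shortest $v$-$w$-path $p = (v = u_0, u_1, \ldots, u_\ell = w)$ in~$H$ with $\ell > c \log n$ and no edge of~$p$ at level $\geq \lambda + 1$. By Lemma~\ref{lem:high-level-path}, every edge of~$p$ has level $\geq \lambda$; combined with our assumption, every edge of~$p$ has level exactly~$\lambda$, and in particular every vertex of~$p$ lies in $V_\lambda := \{u \in V : \level(u) \geq \lambda\}$.

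\textbf{Shortcut step.} The key combinatorial claim is that at most one vertex of~$p$ can belong to~$V_{\lambda+1}$. Suppose $u_i, u_j$ with $i < j$ are both in~$V_{\lambda+1}$. If $j = i + 1$, then the edge $\{u_i, u_j\} \in p$ has level $\geq \lambda + 1$, contradicting the assumption. If $j \geq i + 2$, the $H$-edge $\{u_i, u_j\}$ has level $\geq \lambda + 1$ and
\begin{equation*}
    \weight_\Lambda(\{u_i, u_j\}) \leq (1+\hat\epsilon)^{\Lambda - \lambda - 1} \dist^d(u_i, u_j, G) \leq (1+\hat\epsilon)^{\Lambda - \lambda} \sum_{k=i}^{j-1} \dist^d(u_k, u_{k+1}, G) = \weight_\Lambda(p[i{:}j]),
\end{equation*}
where the second inequality chains the hop-set property $\dist^d \leq (1+\hat\epsilon) \dist$, the triangle inequality in~$G$, and $\dist \leq \dist^d$. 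Replacing the sub-path $p[i{:}j]$ by this single edge yields a $v$-$w$-path of weight at most $\weight_\Lambda(p)$ with strictly fewer hops, contradicting the min-hop property of~$p$. Hence at least $\ell \geq c \log n$ of the $\ell + 1$ vertices of~$p$ lie in $V_\lambda \setminus V_{\lambda+1}$.

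\textbf{Probability and main obstacle.} The levels are iid geometric, so conditional on a vertex being in $V_\lambda$ it is further in $V_{\lambda+1}$ independently with probability~$1/2$. For any fixed set of $\ell + 1 > c \log n$ vertices in~$V_\lambda$, the probability that fewer than two of them lie in~$V_{\lambda+1}$ is therefore at most $(\ell+1) 2^{-\ell} = n^{-\Omega(c)}$. The main obstacle is that the interior of~$p$ depends adaptively on the random levels, ruling out a direct union bound over the $n^{\ell-1}$ possible interior sequences. I would handle this by conditioning on the level assignment at levels strictly greater than~$\lambda$---which fixes~$V_{\lambda+1}$, the value of~$\Lambda$, and the weights of all $H$-edges whose both endpoints are at level $\geq \lambda+1$---and treating the decision ``$u \in V_\lambda \setminus V_{\lambda+1}$'' for each $u \notin V_{\lambda+1}$ as an independent Bernoulli$(1/2)$ flip. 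Under this conditioning the bad event becomes a structural property of the random sub-graph of~$H$ spanned by $V_{\lambda+1}$ together with the freshly sampled level-exactly-$\lambda$ vertices, which can be controlled using the Bernoulli concentration above together with a union bound over only the $\bigO(n^2)$ pairs.
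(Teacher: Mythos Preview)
Your shortcut argument is correct and elegant: if two vertices of a min-hop shortest level-$\lambda$ path were both in $V_{\lambda+1}$, the direct $H$-edge between them would shortcut the path without increasing weight. However, the probability step has a genuine gap, and your proposed conditioning does not close it.

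First, a minor point: conditional on $u \notin V_{\lambda+1}$, the event $u \in V_\lambda$ is \emph{not} Bernoulli$(1/2)$; its probability is $2^{-(\lambda+1)}/(1-2^{-(\lambda+1)})$, which equals $1$ for $\lambda=0$ and tends to $0$ for large~$\lambda$.

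More seriously, conditioning on $V_{\lambda+1}$ (and higher levels) does not decouple the path $p$ from the remaining randomness. The edge weights of $H$ depend on the levels of \emph{all} nodes, including those with level $\leq \lambda$; hence the identity of the min-hop shortest path $p$ in $H$---in particular which $\ell+1$ vertices it traverses---still depends on the very coin flips you want to treat as fresh. Your last sentence asserts that ``the bad event becomes a structural property\dots which can be controlled using the Bernoulli concentration above together with a union bound over only the $\bigO(n^2)$ pairs,'' but this is where the argument would need to live, and no mechanism is given.

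The paper resolves the adaptivity by conditioning in the \emph{opposite} direction. It fixes $V_\lambda$ and introduces an auxiliary complete graph $H_\lambda$ on $V_\lambda$ whose edges uniformly carry the level-$\lambda$ penalty $(1+\hat\epsilon)^{\Lambda-\lambda}\dist^d(\cdot,\cdot,G)$. A single $p \in \mhspaths(v,w,H_\lambda)$ is then chosen; crucially, $H_\lambda$ and hence $p$ are fully determined by $V_\lambda$, so $p$ is independent of the subsequent sampling of $V_{\lambda+1}$. Each edge of $p$ now has both endpoints promoted to $V_{\lambda+1}$ with probability $1/4$, independently across every other edge, giving the $n^{-c}$ bound when $|p|$ is large. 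Finally, a short deterministic case analysis (comparing $\omega_\Lambda(q)$ to $\omega_\lambda(q)$ and $\omega_\lambda(p)$) transfers the conclusion from this one surrogate path in $H_\lambda$ to every $q \in \mhspaths(v,w,H)$. The idea you are missing is precisely this surrogate path in a graph that does not see the level-$(\lambda{+}1)$ coins.
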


\begin{proof}
	Condition on the event $\mathcal{E}_{V_\lambda}$ that $V_\lambda \subseteq V$ is the set of nodes with level $\lambda$ or higher (with level $\lambda + 1$ not yet sampled).
	Let $H_\lambda := (V_\lambda, \binom{V_\lambda}{2}, \weight_\lambda)$ with $\weight_\lambda(\{v,w\}) \mapsto (1 + \hat\epsilon)^{\Lambda - \lambda} \dist^d(v, w, G)$ denote the subgraph of $H$ spanned by $V_\lambda$ and capped at level~$\lambda$.

	Consider $p \in \mhspaths(v, w, H_\lambda)$.
	Observe that $\Prob[\level(u) \geq \lambda + 1 \mid \mathcal{E}_{V_\lambda}] = \frac12$ independently for all $u \in V_\lambda$, and hence $\Prob[\level(e) \geq \lambda + 1 \mid \mathcal{E}_{V_\lambda}] = \frac14$ for all $e \in p$.
	This probability holds independently for every other edge of~$p$.
	If $|p| \geq 2c \log_{4/3} n$ for some choice of $c \in \R_{\geq 1}$, the probability that $p$ contains no edge of level $\lambda + 1$ or higher is bounded from above by $(\frac34)^{|p|/2} \leq (\frac34)^{c \log_{4/3} n} = n^{-c}$, so $p$ contains such an edge w.h.p.

	Fix a single arbitrary $p \in \mhspaths(v, w, H_\lambda)$.
	Let $\mathcal{E}_p$ denote the event that $p$ fulfills $|p| \in \bigO(\log n)$ or contains an edge of level $\lambda + 1$ or higher;
	as argued above, $\mathcal{E}_p$~occurs w.h.p.
	Note that we cannot directly apply the union bound to deduce a similar statement for all $q \in \mhspaths(v, w, H_\lambda)$:
	There are more than polynomially many $v$-$w$-paths.
	Instead, we we argue that if $\mathcal{E}_p$ holds, it follows that all $q \in \mhspaths(v, w, H)$ must behave as claimed.

	To show that all $q \in \mhspaths(v, w, H)$ fulfill~\eqref{eq:log-hops} or~\eqref{eq:higher-level} under the assumption that $\mathcal{E}_p$ holds, first recall that $q$ only uses edges of level $\lambda$ or higher by Lemma~\ref{lem:high-level-path}.
	Furthermore, observe that $\weight_\Lambda(q) \leq \weight_\Lambda(p)$.
	If $q$ contains an edge of level $\lambda + 1$ or higher, \eqref{eq:higher-level} holds for~$q$.
	Otherwise, we have $\weight_\lambda(q) = \weight_\Lambda(q)$, and distinguish two cases:
	\begin{description}
	\item [Case~1 ($|p| \in \bigO(\log n)$):]
		We have
		\begin{equation}
			\weight_\Lambda(p)
				\leq \weight_\lambda(p)
				\leq \weight_\lambda(q)
				= \weight_\Lambda(q),
		\end{equation}
		so $\weight_\Lambda(q) = \weight_\Lambda(p)$ and $|q| \leq |p| \in \bigO(\log n)$ follows from $q \in \mhspaths(v, w, H)$.

	\item [Case~2 ($p$~contains an edge of level $\lambda + 1$ or higher):]
		This yields $\weight_\Lambda(p) < \weight_\lambda(p)$, implying
		\begin{equation}
			\weight_\Lambda(p)
				< \weight_\lambda(p)
				\leq \weight_\lambda(q)
				= \weight_\Lambda(q),
		\end{equation}
		which contradicts $q \in \mhspaths(v, w, H)$.
	\end{description}
	So far, we condition on~$\mathcal{E}_{V_\lambda}$.
	In order to remove this restriction, let $\mathcal{E}_{vw}$ denote the event that~\eqref{eq:log-hops} or~\eqref{eq:higher-level} holds for $v,w \in V$.
	The above case distinction shows that $\Prob[\mathcal{E}_{vw} \mid \mathcal{E}_{V_\lambda}] \geq 1 - n^{-c}$ for an arbitrary $c \in \R_{\geq 1}$.
	We conclude that
	\begin{align}
		\Prob[\mathcal{E}_{vw} \mid \level(v,w) \geq \lambda]
			& = \sum_{V_\lambda \subseteq V}
				\Prob[\mathcal{E}_{V_\lambda} \mid \level(v,w) \geq \lambda]
				\Prob[\mathcal{E}_{vw} \mid \mathcal{E}_{V_\lambda}] \\
			& = \sum_{\{v,w\} \subseteq V_\lambda \subseteq V}
				\Prob[\mathcal{E}_{V_\lambda} \mid \level(v,w) \geq \lambda]
				\Prob[\mathcal{E}_{vw} \mid \mathcal{E}_{V_\lambda}] \\
			& \geq \sum_{\{v,w\} \subseteq V_\lambda \subseteq V}
				\Prob[\mathcal{E}_{V_\lambda} \mid \level(v,w) \geq \lambda]
				(1 - n^{-c}) \\
			& = (1 - n^{-c})
				\sum_{\{v,w\} \subseteq V_\lambda \subseteq V}
				\Prob[\mathcal{E}_{V_\lambda} \mid \level(v,w) \geq \lambda] \\
			& = 1 - n^{-c},
	\end{align}
	which is the statement of the lemma.
\end{proof}

We argue above that any min-hop shortest path in $H$ traverses every level at most twice, Lemma~\ref{lem:prefix} states that each such traversal, w.h.p., only has a logarithmic number of hops, and Lemma~\ref{lem:levels} asserts that, w.h.p., there are only logarithmically many levels.
Together, this means that min-hop shortest paths in $H$ have $\bigO(\log^2 n)$ hops w.h.p.
Additionally, our construction limits the stretch of shortest paths in $H$ as compared to $G$ by $(1 + \hat\epsilon)^{\Lambda + 1}$, i.e., by $(1 + \hat\epsilon)^{\bigO(\log n)}$ w.h.p.

\begin{theorem}\label{thm:h}
	W.h.p., $\spd(H) \in \bigO(\log^2 n)$ and, for all $v,w \in V$,
	\begin{equation}\label{eq:h-stretch}
		\dist(v, w, G) \leq \dist(v, w, H) \leq (1 + \hat\epsilon)^{\bigO(\log n)} \dist(v, w, G).
	\end{equation}
\end{theorem}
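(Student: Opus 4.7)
The plan is to prove the two parts of the theorem separately. The stretch bounds are essentially immediate from the definition of $\weight_\Lambda$ and Lemma~\ref{lem:levels}, while the SPD bound requires combining Lemmas~\ref{lem:levels}--\ref{lem:prefix} via a careful decomposition of min-hop shortest paths by edge level, applied to a union-bound argument.

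First I would handle the stretch. For the lower bound, each edge of $H$ satisfies $\weight_\Lambda(\{v,w\}) = (1+\hat\epsilon)^{\Lambda-\level(v,w)}\dist^d(v,w,G) \geq \dist^d(v,w,G) \geq \dist(v,w,G)$, so any $H$-path $v = u_0, u_1, \dots, u_k = w$ has weight at least $\sum_i \dist(u_i, u_{i+1}, G) \geq \dist(v,w,G)$ by the triangle inequality in~$G$. For the upper bound, I would bound $\dist(v,w,H)$ by the weight of the single direct edge $\{v,w\}$, which the hop-set assumption bounds as $(1+\hat\epsilon)^{\Lambda-\level(v,w)}\dist^d(v,w,G) \leq (1+\hat\epsilon)^{\Lambda+1}\dist(v,w,G)$. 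Combining with $\Lambda \in \bigO(\log n)$ w.h.p.\ from Lemma~\ref{lem:levels} yields~\eqref{eq:h-stretch}.

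For the SPD bound, fix an arbitrary pair $v,w \in V$ with $\lambda := \level(v,w)$ and a min-hop shortest path $p \in \mhspaths(v,w,H)$. The idea is to cut $p$ into sub-paths indexed by level. For each $\mu \geq \lambda$ such that some vertex of $p$ has level at least $\mu$, let $v_\mu$ and $w_\mu$ denote the first and last such vertices on $p$. Since sub-paths of min-hop shortest paths are themselves min-hop shortest paths, Lemma~\ref{lem:high-level-path} applied to $(v_\mu, w_\mu)$ forces every edge of the sub-path between them to have level at least $\mu$; equivalently, the vertices on $p$ of level $\geq \mu$ form a contiguous block. Letting $\mu^*$ be the largest level attained by any edge of $p$, I then split $p$ into the $2(\mu^*-\lambda)+1$ consecutive pieces $v_\mu \to v_{\mu+1}$ and $w_{\mu+1} \to w_\mu$ for $\mu = \lambda, \dots, \mu^*-1$, plus the innermost segment $v_{\mu^*} \to w_{\mu^*}$.

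By construction each such piece has all its edges of a single level $\mu$ (the interior vertices lie strictly in the level-$\mu$ layer, and the endpoints have level $\geq \mu$). Applying Lemma~\ref{lem:prefix} to the endpoints, the path between them contains no edge of level $\geq \mu+1$, which rules out~\eqref{eq:higher-level}; hence~\eqref{eq:log-hops} gives $\bigO(\log n)$ hops w.h.p. Summing over at most $2\Lambda+1$ pieces and invoking Lemma~\ref{lem:levels} yields $\hop(v,w,H) \in \bigO(\log^2 n)$ w.h.p. Since the argument invokes Lemma~\ref{lem:prefix} on only $\bigO(n^2 \log n)$ candidate pairs $(v_\mu, v_{\mu+1})$ across all $v,w,\mu$, a union bound (Lemma~\ref{lem:whp}) makes the bound hold simultaneously for every choice of $v,w$. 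The main subtlety, and the step I expect to require the most care, is justifying the decomposition: namely, using Lemma~\ref{lem:high-level-path} to argue that the level-$\geq\mu$ vertices form a contiguous segment of $p$ (so that the pieces are well-defined and monotone in level), and then verifying that each piece contains no higher-level edge so that Lemma~\ref{lem:prefix} pins down its hop count.
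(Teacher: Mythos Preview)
Your proposal is correct and follows essentially the same approach as the paper's proof: decompose a min-hop shortest path into maximal single-level segments via Lemma~\ref{lem:high-level-path}, bound each segment by $\bigO(\log n)$ hops using Lemma~\ref{lem:prefix}, and invoke Lemma~\ref{lem:levels} for the number of levels. The only point to tighten is the union bound: since the endpoints $(v_\mu,v_{\mu+1})$ depend on the random levels, you should union-bound over \emph{all} pairs in $V^2$ at each level (as the paper does by first conditioning on a global event $\mathcal{E} = \mathcal{E}_{\log}\cap \mathcal{E}_0\cap\dots\cap\mathcal{E}_\Lambda$) rather than over the specific pairs arising from each path.
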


\begin{proof}
	Fix a level~$\lambda$.
	Any fixed pair of vertices of level $\lambda$ or higher fulfills, w.h.p.,~\eqref{eq:log-hops} or~\eqref{eq:higher-level} by Lemma~\ref{lem:prefix}.
	Since there are at most $\binom{n}{2}$ such pairs, w.h.p., all of them fulfill~\eqref{eq:log-hops} or~\eqref{eq:higher-level} by Lemma~\ref{lem:whp}.

	Let $\mathcal{E}_{\log}$ denote the event that there is no higher level than $\Lambda \in \bigO(\log n)$, which holds w.h.p.\ by Lemma~\ref{lem:levels}.
	Furthermore, let $\mathcal{E}_\lambda$ denote the event hat all pairs of vertices of level $\lambda$ or higher fulfill~\eqref{eq:log-hops} or~\eqref{eq:higher-level}, which holds w.h.p.\ as argued above.
	Then $\mathcal{E} := \mathcal{E}_{\log} \cap \mathcal{E}_0 \cap \dots \cap \mathcal{E}_\Lambda$ holds w.h.p.\ by Lemma~\ref{lem:whp}.

	Condition on~$\mathcal{E}$;
	in particular, no min-hop shortest path whose edges all have the same level has more than $\bigO(\log n)$ hops.
	Consider some min-hop shortest path $p$ in~$H$.
	By Lemma~\ref{lem:high-level-path}, $p$~has two parts:
	The edge level monotonically increases in the first and monotonically decreases in the second part.
	Hence, $p$~can be split up into at most $2 \Lambda - 1$ segments, in each of which all edges have the same level.
	As this holds for all min-hop shortest paths, we conclude that $\spd(H) \in \bigO(\Lambda \log n) \subseteq \bigO(\log^2 n)$ w.h.p., as claimed.

	As for Inequality~\eqref{eq:h-stretch}, recall that $H$ is constructed from $G = (V, E, \weight)$, and that $G$ contains a $(d, \hat\epsilon)$-hop set.
	For all $v,w \in V$, we have
	\begin{equation}
		\dist(v,w,H)
			\leq \weight_\Lambda(v, w)
			\leq (1 + \hat\epsilon)^\Lambda \dist^d(v, w, G)
			\leq (1 + \hat\epsilon)^{\Lambda + 1} \dist(v, w, G)
	\end{equation}
	by construction of~$H$.
	Recalling that $\Lambda \in \bigO(\log n)$ due to $\mathcal E$ completes the proof.
\end{proof}

We use Cohen's construction to obtain a $(d, \hat\epsilon)$-hop set with $\hat\epsilon \in 1 / \polylog n$, where the exponent of $\polylog n$ is under our control~\cite{c-ptnlwasusp-00}.
A sufficiently large exponent yields $(1 + \hat\epsilon)^{\bigO(\log n)} \subseteq e^{\hat\epsilon \bigO(\log n)} \subseteq e^{1 / \polylog n} = 1 + 1 / \polylog n$, upper-bounding~\eqref{eq:h-stretch} by
\begin{equation}\label{eq:h-stretch-o1}
	\dist(v, w, G)
		\leq \dist(v, w, H)
		\in (1 + 1 / \polylog n) \dist(v, w, G)
		\subseteq (1 + \bigo(1)) \dist(v, w, G).
\end{equation}

To wrap things up:
Given a weighted graph~$G$, we augment $G$ with a $(d, 1 / \polylog n)$-hop set.
After that, the $d$-hop distances in $G$ approximate the actual distances in~$G$, but these approximations may violate the triangle inequality.
We fix this by embedding into~$H$, using geometrically sampled node levels and an exponential penalty on the edge weights with decreasing levels.
Since $H$ is a complete graph, explicitly constructing it is prohibitively costly in terms of work.
The next section shows how to avoid this issue by efficiently simulating \ac{MBF-like} algorithms on~$H$.

\section{An Oracle for \acs{MBF-like} Queries}
\label{sec:oracle}

Given a weighted graph $G$ and $\hat\epsilon \in 1 / \polylog n$, Section~\ref{sec:h} introduces a complete graph $H$ that $(1 + \bigo(1))$-approximates the distances of $G$ and w.h.p.\ has a polylogarithmic \ac{SPD}, using a $(d, \hat\epsilon)$-hop set.
$H$~would solve our problem, but we cannot explicitly write $H$ into memory, as this requires an unacceptable $\bigOmega(n^2)$ work.

Instead, we dedicate this section to an \emph{oracle that answers \ac{MBF-like} queries,} i.e., an oracle that, given a weighted graph~$G$, \iac{MBF-like} algorithm $\mathcal A$ and a number of iterations~$h$, returns $\mathcal{A}^h(H)$.
Note that while the oracle can answer distance queries in polylogarithmic depth (when, e.g., queried by \ac{SSSP}, \ac{kSSP}, or \ac{APSP}), \ac{MBF-like} queries are more general (compare Section~\ref{sec:examples}) and allow for more work-efficient algorithms (like in Section~\ref{sec:frt}).
The properties of \ac{MBF-like} algorithms discussed in Section~\ref{sec:mbf} allow the oracle to internally work on $G$ and simulate iterations of $\mathcal A$ on $H$ using~$d$, i.e., polylogarithmically many, iterations on~$G$.

Throughout this section, we denote by $A_G$ and $A_H$ the adjacency matrices of $G$ and~$H$, respectively.
Furthermore, we fix the semiring to be~$\mathcal{S}_{\min,+}$, since we explicitly calculate distances;
generalizations to other semirings are possible but require appropriate generalizations of adjacency matrices and hence obstruct presentation.

We establish this section's results in two steps:
Section~\ref{sec:oracle-decompose} derives a representation of $A_H$ in terms of~$A_G$, which is then used to efficiently implement the oracle in Section~\ref{sec:oracle-implementation}.
The oracle is used to approximate the metric of $G$ in Section~\ref{sec:metric} and to construct \iac{FRT} tree using in Section~\ref{sec:frt}, both with polylogarithmic depth.

\subsection{Decomposing \texorpdfstring{$H$}{H}}
\label{sec:oracle-decompose}

The idea is to simulate each iteration of \iac{MBF-like} algorithm $\mathcal A$ on $H$ using $d$ iterations on~$G$.
This is done for each level $\lambda \in \{0, \dots, \Lambda\}$ in parallel.
For level~$\lambda$, we run $\mathcal A$ for $d$ iterations on $G$ with edge weights scaled up by $(1 + \hat\epsilon)^{\Lambda - \lambda}$, where the initial vector is obtained by discarding all information at nodes of level smaller than~$\lambda$.
Afterwards, we again discard everything stored at vertices with a level smaller than~$\lambda$.
Since $(A_G^d)_{vw} = \dist^d(v,w,G)$, this ensures that we propagate information between nodes $v,w \in V$ with $\level(v,w) = \lambda$ with the corresponding edge weight, while discarding any exchange between nodes with $\level(v,w) < \lambda$ (which is handled by the respective parallel run).
While we also propagate information between $v$ and $w$ if $\level(v,w) > \lambda$\dash---over too long a distance because edge weights are scaled by $(1 + \hat\epsilon)^{\Lambda - \lambda} > (1 + \hat\epsilon)^{\Lambda - \level(v,w)}$\dash---the parallel run for $\level(v,w)$ correctly propagates values.
Therefore, aggregating the results of all levels (i.e., applying~$\oplus$, the source-wise minimum) and applying $r^V$ completes the simulation of an iteration of $\mathcal A$ on~$H$.

This approach resolves two complexity issues.
First, we multiply (polylogarithmically often) with~$A_G$, which\dash---as opposed to the dense~$A_H$\dash---has $\bigO(m)$ non-$\infty$ entries only.
Second, Corollary~\ref{cor:rV-id} shows that we are free to filter using $r^V$ at any time, keeping the entries of intermediate state vectors small.

We formalize the above intuition.
Recall that
\begin{equation}
	(A_H)_{vw}
		= \weight_\Lambda(v,w)
		= (1 + \hat\epsilon)^{\Lambda - \level(v,w)} \dist^d (v, w, G)
		= (1 + \hat\epsilon)^{\Lambda - \level(v,w)}(A_G^d)_{vw}.
\end{equation}
For $\lambda \in \{0, \dots, \Lambda\}$, denote by $P_\lambda$ the $\mathcal{M}^V$-projection to coordinates $V_\lambda := \{ v \in V \mid \level(v) \geq \lambda \}$:
\begin{equation}
	(P_\lambda x)_v := \begin{cases}
		x_v  & \text{if $\level(v) \geq \lambda$ and} \\
		\bot & \text{otherwise.}
	\end{cases}
\end{equation}
Observe that $P_\lambda$ is \iac{SLF} on~$\mathcal{M}^V$, where $(P_\lambda)_{vw} = 0$ if $v = w \in V_\lambda$ and $(P_\lambda)_{vw} = \infty$ otherwise.
This gives us the tools to decompose $A_H$ as motivated above.

\begin{lemma}\label{lem:decompose}
	With $(A_\lambda)_{vw} := (1 + \hat\epsilon)^{\Lambda - \lambda} (A_G)_{vw}$ (w.r.t.\ multiplication in~$\R$, not~$\odot$), we have
	\begin{equation}\label{eq:decompose}
		A_H = \bigoplus_{\lambda = 0}^\Lambda P_\lambda A_\lambda^d P_\lambda.
	\end{equation}
\end{lemma}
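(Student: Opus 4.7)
The plan is to prove the identity entrywise. For fixed $v, w \in V$, I will compute $(P_\lambda A_\lambda^d P_\lambda)_{vw}$ for each $\lambda \in \{0, \dots, \Lambda\}$, then take the $\oplus$-sum (i.e.\ the minimum) over $\lambda$ and compare with $(A_H)_{vw} = (1+\hat\epsilon)^{\Lambda-\level(v,w)} \dist^d(v,w,G)$.

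First I would handle the projections. Since $P_\lambda$ is the $\mathcal{S}_{\min,+}$-matrix with $(P_\lambda)_{uu} = 0$ if $\level(u) \geq \lambda$ and $\infty$ off-diagonal or for low-level rows, sandwiching any matrix $B$ as $P_\lambda B P_\lambda$ simply zeroes out (i.e.\ sets to $\infty$) every entry whose row or column index lies outside $V_\lambda$, leaving the surviving entries unchanged. Hence $(P_\lambda A_\lambda^d P_\lambda)_{vw} = (A_\lambda^d)_{vw}$ if $v,w \in V_\lambda$, and $\infty$ otherwise.

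Next I would factor the scalar out of the $d$-fold min-plus power. Because scaling every real edge weight by the positive constant $c_\lambda := (1+\hat\epsilon)^{\Lambda-\lambda}$ multiplies the weight of every path in $G$ by the same $c_\lambda$, and the min-plus product expands exactly as a minimum over $d$-hop path weights, one has $(A_\lambda^d)_{vw} = c_\lambda \cdot (A_G^d)_{vw} = c_\lambda \dist^d(v,w,G)$. (This is the standard observation that a positive scalar commutes out of a distance product; a one-line induction on $d$ using $a + b$ being the $\odot$ of $\mathcal{S}_{\min,+}$ will make it airtight.)

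Combining the two observations, the $\lambda$-th summand on the right-hand side equals $c_\lambda \dist^d(v,w,G)$ when $v,w \in V_\lambda$ and $\infty$ otherwise. Taking the $\oplus$-sum ranges $\lambda$ over $\{0, 1, \dots, \level(v,w)\}$, and since $c_\lambda = (1+\hat\epsilon)^{\Lambda-\lambda}$ is strictly decreasing in $\lambda$, the minimum is attained at $\lambda = \level(v,w)$, giving $c_{\level(v,w)}\dist^d(v,w,G) = (A_H)_{vw}$. The diagonal case $v = w$ is automatic since $\dist^d(v,v,G) = 0$ and $\level(v) \geq 0$, so both sides are $0$. There is no real obstacle here; the only thing to be careful about is bookkeeping of which $\lambda$'s contribute, which is controlled by the definition $\level(v,w) = \min\{\level(v),\level(w)\}$ and the fact that both projections in $P_\lambda \cdots P_\lambda$ must pass.
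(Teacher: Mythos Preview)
Your proposal is correct and follows essentially the same approach as the paper: compute $(P_\lambda A_\lambda^d P_\lambda)_{vw}$ entrywise, observe it equals $(1+\hat\epsilon)^{\Lambda-\lambda}\dist^d(v,w,G)$ when $v,w\in V_\lambda$ and $\infty$ otherwise, then minimize over $\lambda$ to hit $\lambda=\level(v,w)$. The paper carries out the projection sandwich via two explicit min-plus matrix multiplications rather than your conceptual ``zero out rows/columns outside $V_\lambda$'' observation, but the substance is identical.
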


\begin{proof}
	Since $(A^d_G)_{vw} = \dist^d(v, w, G)$, it holds that $(A_\lambda^d)_{vw} = (1 + \hat\epsilon)^{\Lambda - \lambda} \dist^d(v, w, G)$.
	Therefore,
	\begin{equation}
		(A_\lambda^d P_\lambda)_{vw}
			= \min_{u \in V} \left\{ (A_\lambda^d)_{vu} + (P_\lambda)_{uw}\right\}
			= \begin{cases}
				(1 + \hat\epsilon)^{\Lambda - \lambda} \dist^d(v, w, G) & \text{if $w \in V_\lambda$ and} \\
				\infty                                                  & \text{otherwise,}
			\end{cases}
	\end{equation}
	and hence
	\begin{equation}
		(P_\lambda A_\lambda^d P_\lambda)_{vw}
			= \min_{u \in V} \left\{ (P_\lambda)_{vu} + (A_\lambda^d P_\lambda)_{uw} \right\}
			= \begin{cases}
				(1 + \hat\epsilon)^{\Lambda - \lambda} \dist^d(v, w, G) & \text{if $v, w \in V_\lambda$ and} \\
				\infty                                                  & \text{otherwise.}
			\end{cases}
	\end{equation}
	We conclude that
	\begin{align}
		\left( \bigoplus_{\lambda = 0}^\Lambda P_\lambda A_\lambda^d P_\lambda \right)_{vw}
			& = \min_{\lambda = 0}^{\level(v,w)}
				\left\{ (1 + \hat\epsilon)^{\Lambda - \lambda} \dist^d(v, w, G) \right\} \\
			& = (1 + \hat\epsilon)^{\Lambda - \level(v,w)} \dist^d(v, w, G) \\
			& = (A_H)_{vw}. \qedhere
	\end{align}
\end{proof}

Having decomposed~$A_H$, we analyze $\mathcal{A}^h(H)$ in that regard, taking the freedom to apply filters intermediately.
For all $h \in \N$, we have
\begin{equation}\label{eq:decompose-intermediate}
	A_H^h
		\stackrel{\eqref{eq:decompose}}{=}
			\left( \bigoplus_{\lambda = 0}^\Lambda P_\lambda A_\lambda^d P_\lambda \right)^h
		\stackrel{\eqref{eq:filter-product}}{\sim}
			\left( r^V \left( \bigoplus_{\lambda = 0}^\Lambda P_\lambda(r^V A_\lambda)^d P_\lambda \right)\right)^h r^V,
\end{equation}
and hence
\begin{equation}\label{eq:computation}
	\mathcal{A}^h(H)
		= r^V A_H^h x^{(0)}
		\stackrel{\eqref{eq:eq}, \eqref{eq:decompose-intermediate}}{=} \left( r^V \left(
				\bigoplus_{\lambda = 0}^\Lambda P_\lambda(r^V A_\lambda)^d P_\lambda
			\right)\right)^h r^V x^{(0)}.
\end{equation}
Observe that we can choose $h = \spd(H) \in \bigO(\log^2 n)$ w.h.p.\ by Theorem~\ref{thm:h} and recall that $d \in \polylog n$.
Overall, this allows us to determine $\mathcal{A}(H)$ with polylogarithmic depth and $\bigOT(m)$ work, provided we can implement the individual steps, see below, at this complexity.

\subsection{Implementing the Oracle}
\label{sec:oracle-implementation}

The oracle determines iterations of $\mathcal A$ on $H$ using iterations on $G$ while only introducing a polylogarithmic overhead w.r.t.\ iterations in~$G$.
With the decomposition from Lemma~\ref{lem:decompose} at hand, it can be implemented as follows.

Given a state vector $x^{(i)} \in \mathcal{M}^V$, simulate one iteration of $\mathcal A$ on $H$ for edges of level~$\lambda$, i.e., determine $y_\lambda := P_\lambda (r^V A_\lambda)^d P_\lambda x^{(i)}$ by
\begin{enumerate*}
\item
	discarding entries at nodes of a level smaller than~$\lambda$,

\item
	running $d$ iterations of $\mathcal A$ with distances stretched by $(1 + \hat\epsilon)^{\Lambda - \lambda}$ on~$G$, applying the filter after each iteration, and

\item
	again discarding entries at nodes with levels smaller than~$\lambda$.
\end{enumerate*}
After running this procedure in parallel for all $0 \leq \lambda \leq \Lambda$, perform the $\oplus$-operation and apply the filter, i.e., for each node $v \in V$ determine $x^{(i+1)}_v = r(\bigoplus_{\lambda = 0}^\Lambda y_\lambda)_v$.

The efficiency of the above procedure depends on the semimodule $\mathcal{M}$ and the filter used by the \ac{MBF-like} algorithm.
Since our core results as well as many examples work with $\mathcal{M} = \mathcal{D}$, as specified Definition~\ref{def:distance-map}, we fix $\mathcal{M} = \mathcal{D}$ for Theorem~\ref{thm:oracle};
see Remark~\ref{rem:oracle} for how to generalize Theorem~\ref{thm:oracle} to arbitrary semimodules.
Nevertheless, we do not give such a general statement as it obstructs presentation and is not required for our results in the following sections.

\begin{theorem}[Oracle]\label{thm:oracle}
	Consider the zero-preserving semimodule~$\mathcal{D}$ (see Definition~\ref{def:distance-map}) over $\mathcal{S}_{\min,+}$;
	suppose $x \in \mathcal{D}$ is represented as list of index--distance pairs, where all $\infty$-distances are dropped (compare Lemma~\ref{lem:aggregation}).
	If for each intermediate state vector $y = (r^V A_\lambda)^f x^{(i)}$ (corresponding to $f$ iterations w.r.t.\ $A_\lambda$ starting at state~$x^{(i)}$), for non-negative integers $f \leq d$, $i \leq h$, and $\lambda \leq \Lambda$, we can compute $r^V A_\lambda y$ and $r^V y$ with depth $D$ and work~$W$, we can w.h.p.
	\begin{enumerate}
	\item
		determine $\mathcal{A}^h(H)$ using $\bigO((d + \log n) W h \log n) \subseteq \bigOT(dWh)$ work and a depth bounded by $\bigO((dD + \log n)h) \subseteq \bigOT(dDh)$, i.e., we can
	\item
		calculate $\mathcal{A}(H)$ using $\bigO((d + \log n) W \log^3 n) \subseteq \bigOT(dW)$ work and $\bigO((dD + \log n) \log^2 n) \subseteq \bigOT(dD)$ depth.
	\end{enumerate}
\end{theorem}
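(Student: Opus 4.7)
The plan is to directly evaluate the expression from Equation~\eqref{eq:computation}, namely
\begin{equation*}
\mathcal{A}^h(H) \;=\; \left( r^V \bigoplus_{\lambda=0}^\Lambda P_\lambda (r^V A_\lambda)^d P_\lambda \right)^{h} r^V x^{(0)},
\end{equation*}
by iterating $h$ times the operator in parentheses. Each outer iteration is carried out as follows: starting from a current state $x^{(i)}$, in parallel for every level $\lambda \in \{0, \dots, \Lambda\}$, we (a)~apply $P_\lambda$, (b)~perform $d$ rounds of the inner recursion $z \mapsto r^V A_\lambda z$, (c)~apply $P_\lambda$ again to obtain $y_\lambda$, then (d)~compute $y := \bigoplus_{\lambda=0}^\Lambda y_\lambda$, and finally (e)~apply $r^V$ to get $x^{(i+1)} = r^V y$. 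Correctness of this schedule is immediate from Corollary~\ref{cor:rV-id}, which legitimizes the intermediate filter applications inside the inner loop. The projections $P_\lambda$ require only a pass over each node's list, filtering by its (precomputed) level, so their cost is dominated by the cost of $r^V$ itself.

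For the resource analysis of one outer iteration, observe that step~(b) performs $d$ repetitions of $r^V A_\lambda$ on intermediate vectors of the form $(r^V A_\lambda)^f x^{(i)}$, exactly matching the preconditions of the theorem; hence each level costs depth $dD$ and work $dW$. Running the $\Lambda + 1 \in \bigO(\log n)$ levels in parallel yields depth $\bigO(dD)$ and work $\bigO(dW \log n)$ for step~(b), and therefore also for steps~(a)--(c) together w.h.p.\ (using Lemma~\ref{lem:levels} to bound~$\Lambda$). For step~(d), each $y_\lambda$ is the output of a filter step, whose cost was assumed to be $W$, so $|y_\lambda| \in \bigO(W)$ entries; Lemma~\ref{lem:aggregation} then computes $y$ with depth $\bigO(\log n)$ and work $\bigO(\Lambda W \log n) \subseteq \bigO(W \log^2 n)$. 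Step~(e) is one application of $r^V$ to~$y$, fitting into the assumed bounds $D$ and $W$ up to lower-order factors (as the aggregated vector has size only a logarithmic factor larger than a single level's intermediate state, and the filter maps again to a representative of size at most $\bigO(W)$). Summing yields per-iteration depth $\bigO(dD + \log n)$ and work $\bigO((d + \log n)\,W \log n)$.

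Multiplying by $h$ outer iterations gives part~(1): total depth $\bigO((dD + \log n)\,h)$ and total work $\bigO((d + \log n)\,W h \log n)$. For part~(2), note that $\mathcal{A}$ reaches a fixed point after $\spd(H)$ iterations, and by Theorem~\ref{thm:h}, w.h.p.\ $\spd(H) \in \bigO(\log^2 n)$; substituting $h = \bigO(\log^2 n)$ into the bounds from part~(1) yields depth $\bigO((dD + \log n) \log^2 n)$ and work $\bigO((d + \log n)\,W \log^3 n)$ as claimed.

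The main obstacle I expect is the bookkeeping in step~(d)/(e): one must argue that the aggregated vector $y = \bigoplus_\lambda y_\lambda$ is small enough that the final $r^V$ step can still be charged to the assumed bound $W$ (or only a polylog blow-up thereof). This rests on the fact that each $y_\lambda$ has itself been filtered before projection, so its representation inherits the size bound implicit in the oracle assumption; the $\Lambda + 1$ pieces merged together contribute only a logarithmic overhead, which is absorbed into the $\polylog n$ factors of the weak asymptotic notation. All other steps are routine consequences of Lemma~\ref{lem:decompose}, Corollary~\ref{cor:rV-id}, and Lemma~\ref{lem:aggregation}.
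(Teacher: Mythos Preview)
Your proposal is correct and follows essentially the same route as the paper's own proof: both evaluate Equation~\eqref{eq:computation} by running the $\Lambda+1$ level computations in parallel, charging each inner step to the assumed $(D,W)$ bound, aggregating via Lemma~\ref{lem:aggregation}, and invoking Lemma~\ref{lem:levels} and Theorem~\ref{thm:h} for the w.h.p.\ bounds on $\Lambda$ and $\spd(H)$. The bookkeeping concern you flag for step~(d)/(e) is handled just as loosely in the paper (it simply says ``by assumption''), so your treatment is no less rigorous than the original.
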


\begin{proof}
	By Equation~\eqref{eq:computation}, we have to compute
	\begin{equation}
		\mathcal{A}^h(H) = \left( r^V \left( \bigoplus_{\lambda = 0}^\Lambda P_\lambda (r^V A_\lambda)^d P_\lambda \right) \right)^h r^V x^{(0)}.
	\end{equation}
	Computing $r^V x^{(0)}$ requires work $W$ and depth $D$ by assumption.
	Concerning~$P_\lambda$, note that we can evaluate $(P_\lambda y)_{v \in V}$ lazily, i.e., determine whether $(P_\lambda y)_v$ evaluates to $\bot$ or to $y_v$ only if it is accessed.
	Thus, work and depth can increase by at most a constant factor due to all applications of~$P_\lambda$, $0 \leq \lambda \leq \Lambda$.
	Together with the assumption, this means that $(r^V A_\lambda P_\lambda)y$ can be determined in $\bigO(W)$ work and $\bigO(D)$ depth;
	hence, $(r^V A_\lambda)^d P_\lambda y$ requires $\bigO(dW)$ work and $\bigO(dD)$ depth.

	The set of summands of $\bigoplus_{\lambda = 0}^\Lambda P_\lambda (r^V A_\lambda)^d P_\lambda y$ can be determined using $\bigO(\Lambda dW)$ work and the same depth, since this is independent for each~$\lambda$.
	Performing the aggregation is possible in $\bigO(\log n)$ depth and an overhead of factor $\bigO(\log n)$ in work as compared to writing the lists by Lemma~\ref{lem:aggregation}.
	As each list can be determined with work $W$ by assumption, their total length is at most~$\Lambda W$, so we arrive at $\bigO((d + \log n)\Lambda W)$ work and $\bigO(dD + \log n)$ depth.
	Determining $r^V (\bigoplus_{\lambda = 0}^\Lambda P_\lambda (r^V A_\lambda)^d P_\lambda y)$ requires an extra $W$ work and $D$ depth by assumption, which is dominated by the depth and work accumulated so far.

	Repeating this $h$ times to determine $\mathcal{A}^h(H)$ yields $\bigO((d + \log n)\Lambda Wh)$ work and $\bigO((dD + \log n)h)$ depth.
	By Lemma~\ref{lem:levels}, w.h.p.\ $\Lambda \in \bigO(\log n)$ and we arrive at $\bigO((d + \log n)Wh\log n)$ work and $\bigO((dD + \log n)h)$ depth, which is the first claim.
	Recalling that by Theorem~\ref{thm:h} w.h.p.\ $\spd(H) \in \bigO(\log^2 n)$ yields the second claim.
\end{proof}

\begin{remark}[Generalization to other Semimodules]\label{rem:oracle}
	It is possible to generalize Theorem~\ref{thm:oracle} to other semimodules.
	This can be done directly for a specific semimodule or, more generally, by parameterizing Theorem~\ref{thm:oracle} with the work $W_\oplus(W,\Lambda)$ and depth $D_\oplus(W,\Lambda)$ required for the aggregation step, i.e., to determine $r^V \bigoplus_{\lambda=0}^\Lambda y_\lambda$ from $y_\lambda = P_\lambda (r^V A_\lambda)^d P_\lambda x^{(i)}$.
	For this approach, $W_\oplus(W,\Lambda)$ and $D_\oplus(W,\Lambda)$ may not only depend on~$\Lambda$, the number of aggregated elements, but also on~$W$, since the work to determine each $y$ bounds the size of its representation from above (we do this in the proof of Theorem~\ref{thm:oracle}).
	As an example, observe that in the case of $\mathcal{M} = \mathcal{D}$ we have $W_\oplus(W,\Lambda) \in \bigO(\Lambda W \log n)$ and $D_\oplus(W,\Lambda) \in \bigO(\log n)$ by Lemma~\ref{lem:aggregation}.
\end{remark}

\section{Approximate Metric Construction}
\label{sec:metric}

As a consequence of the machinery in Section~\ref{sec:oracle}, observe that we can determine a $(1 + \bigo(1))$-approximate metric on an arbitrary graph by querying the oracle with \ac{APSP} on $H$ using polylogarithmic depth and $\bigOT(nm^{1+\epsilon})$ work.
This is much more work-efficient on sparse graphs than the naive approach using $\bigO(n^3 \log n)$ work (squaring the adjacency matrix $\lceil \log_2 n \rceil$ times) for obtaining $\dist(\cdot,\cdot,G)$ exactly.
Furthermore, this section serves as an example on how to apply Theorem~\ref{thm:oracle}.

\begin{theorem}[$(1 + \bigo(1))$-Approximate Metric]\label{thm:apsp}
	Given a weighted graph $G = (V, E, \weight)$ and a constant $\epsilon > 0$, we can w.h.p.\ compute, using $\bigOT(n(m+n^{1+\epsilon}))$ work and $\polylog n$ depth, a metric on $V$ offering constant-time query access\dash---e.g.\ represented as $V \times V$ matrix over $\Rdist$\dash---that $(1 + 1 / \polylog n)$-approximates $\dist(\cdot, \cdot, G)$.
\end{theorem}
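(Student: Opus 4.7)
The plan is to combine the three main tools developed so far: Cohen's hop-set preprocessing, the embedding into the complete graph $H$ of Section~\ref{sec:h}, and the oracle of Theorem~\ref{thm:oracle}, instantiated with \ac{APSP} as the \ac{MBF-like} query (Example~\ref{ex:apsp}). Concretely, I would first augment $G$ with a $(d, \hat\epsilon)$-hop set using Cohen's construction, choosing $d \in \polylog n$ and $\hat\epsilon \in 1/\polylog n$ with a large enough exponent on $\log n$ that $(1+\hat\epsilon)^{\bigO(\log n)} \leq 1 + 1/\polylog n$; this preprocessing can be done in $\polylog n$ depth and $\bigOT(m^{1+\epsilon})$ work and adds at most $\bigOT(n^{1+\epsilon})$ edges, so the augmented graph $G'$ has $m + \tilde O(n^{1+\epsilon})$ edges. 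On top of $G'$ I sample node levels and obtain the complete graph $H$; by Theorem~\ref{thm:h} and the bound~\eqref{eq:h-stretch-o1}, w.h.p.\ $\spd(H) \in \bigO(\log^2 n)$ and $\dist(v,w,G) \leq \dist(v,w,H) \leq (1+1/\polylog n)\dist(v,w,G)$ for all $v,w \in V$.

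Next I would query the oracle with the \ac{APSP} algorithm from Example~\ref{ex:apsp}, which is \ac{MBF-like} with $\mathcal{M} = \mathcal{D}$, $r = \id$, and initial values $x^{(0)}$ from~\eqref{eq:minplus-x0}. Since $r=\id$ and $\mathcal{M}=\mathcal{D}$, intermediate states may store up to $n$ non-$\infty$ index--distance pairs per node. To apply Theorem~\ref{thm:oracle} I need to bound the cost $W, D$ of one multiplication by $A_\lambda$ (followed by $r^V$, which here is trivial) on such a state $y$: for every edge $\{v,w\}$ of $G'$, we copy and uniformly shift the (at most $n$) entries of $y_w$ and aggregate at $v$ via Lemma~\ref{lem:aggregation}. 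This gives $W \in \bigOT(n(m+n^{1+\epsilon}))$ work and $D \in \bigO(\log n)$ depth per matrix--vector product; the assumption of Theorem~\ref{thm:oracle} is satisfied because all intermediate states have the same size bound.

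Plugging these $W$ and $D$ into the second bound of Theorem~\ref{thm:oracle}, together with $d \in \polylog n$ from Cohen's construction, yields $\mathcal{A}(H)$ using $\bigOT(dW) \subseteq \bigOT(n(m+n^{1+\epsilon}))$ work and $\bigOT(dD) \subseteq \polylog n$ depth, w.h.p. The output state vector has entry $x_{vw} = \dist(v,w,H)$ for all $v,w \in V$, since for \ac{APSP} with $r = \id$ Lemma~\ref{lem:minplus-xh} (applied to $H$ with $h \geq \spd(H)$) identifies the result with exact pairwise distances in $H$. Collecting these values into a $V\times V$ matrix gives constant-time query access, and because $\dist(\cdot,\cdot,H)$ is the shortest-path metric of the graph~$H$, it is automatically a metric; the stretch bound from Theorem~\ref{thm:h} provides the $(1+1/\polylog n)$-approximation.

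The only obstacle I anticipate is the bookkeeping around the precondition of Theorem~\ref{thm:oracle} that the \emph{same} work and depth bounds $W, D$ hold for all intermediate states $(r^V A_\lambda)^f x^{(i)}$: since we use $r = \id$, the list sizes at each node are monotonically bounded by $n$, so $W$ and $D$ remain valid throughout, and absorbing Cohen's preprocessing cost $\bigOT(m^{1+\epsilon})$ into the final bound $\bigOT(n(m+n^{1+\epsilon}))$ is routine. Everything else follows mechanically from the tools of Sections~\ref{sec:mbf}--\ref{sec:oracle}.
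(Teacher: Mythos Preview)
Your proposal is correct and follows essentially the same approach as the paper: augment $G$ with Cohen's hop set, embed into $H$, and invoke the oracle (Theorem~\ref{thm:oracle}) with the \ac{APSP} algorithm of Example~\ref{ex:apsp}, bounding $W \in \bigOT(n(m+n^{1+\epsilon}))$ and $D \in \bigO(\log n)$ per iteration via Lemma~\ref{lem:aggregation} and then applying Theorem~\ref{thm:h} for the stretch guarantee. Your additional remarks\dash---that the list sizes stay bounded by $n$ under $r=\id$ so the oracle's precondition is met uniformly, and that $\dist(\cdot,\cdot,H)$ is automatically a metric\dash---are correct and make explicit points the paper leaves implicit.
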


\begin{proof}
	First augment $G$ with a $(d, 1 / \polylog n)$-hop set using $\bigOT(m^{1 + \epsilon})$ work and $\polylog n$ depth with $d \in \polylog n$ using Cohen's hop-set construction~\cite{c-ptnlwasusp-00}.
	The resulting graph has $\bigOT(m+n^{1+\epsilon})$ edges.
	An iteration of \ac{APSP}, compare Example~\ref{ex:apsp}, incurs $\bigO(\log n)$ depth and $\bigO(\delta_v n \log n)$ work at a node $v$ of degree $\delta_v$ by Lemma~\ref{lem:aggregation}.
	Hence, $D \in \bigO(\log n)$ depth and $W \in \bigO(\sum_{v \in V} \delta_v n \log n) \subseteq \bigOT(n(m+n^{1+\epsilon}))$ work suffice for an entire iteration;
	the trivial filter $r^V = \id$ does not induce any overhead.
	By Theorem~\ref{thm:oracle}, we can w.h.p.\ simulate $\spd(H)$ iterations of \ac{APSP} on $H$ using $\bigOT(n(m+n^{1+\epsilon}))$ work and $\bigOT(1)$ depth.
	Due to Theorem~\ref{thm:h} and Equation~\eqref{eq:h-stretch-o1}, this yields a metric which $(1 + 1 / \polylog n)$-approximates $\dist(\cdot, \cdot, G)$.
\end{proof}

Using the sparsifier of Baswana and Sen~\cite{bs-sltracsswg-07}, we can obtain a metric with a different work--approximation trade-off.
Note that this is near-optimal in terms of work due to the trivial lower bound of $\bigOmega(n^2)$ for writing down the solution.

\begin{theorem}[$\bigO(1)$-Approximate Metric]\label{thm:apsp2}
	For a weighted graph $G = (V, E, \weight)$ and a constant $\epsilon > 0$, we can w.h.p.\ compute a metric that $\bigO(1)$-approximates $\dist(\cdot,\cdot,G)$ using $\bigOT(n^{2+\epsilon})$ work and $\polylog n$ depth.
\end{theorem}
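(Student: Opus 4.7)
The plan is to reduce the problem to Theorem~\ref{thm:apsp} by first sparsifying $G$ using the Baswana--Sen spanner construction. Given the constant $\epsilon > 0$, I would fix an integer $k$ with $1/k \le \epsilon$ (so $k$ is a constant depending only on $\epsilon$) and compute a $(2k-1)$-spanner $G' = (V, E', \weight|_{E'})$ of $G$ with $|E'| \in \bigOT(n^{1+1/k}) \subseteq \bigOT(n^{1+\epsilon})$. This requires invoking a parallel variant of Baswana--Sen; the construction is naturally parallelizable since it proceeds in $k$ clustering phases, each of which can be implemented in $\polylog n$ depth and $\bigOT(m)$ work (well within our budget of $\bigOT(n^{2+\epsilon})$ work and $\polylog n$ depth). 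By the spanner property, $\dist(v,w,G) \leq \dist(v,w,G') \leq (2k-1)\,\dist(v,w,G)$ for all $v,w \in V$.

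Next, I would apply Theorem~\ref{thm:apsp} to $G'$ with parameter $\epsilon' := \epsilon$, producing a metric $\mu$ on $V$ that $(1 + 1/\polylog n)$-approximates $\dist(\cdot,\cdot,G')$ and is stored as a $V \times V$ matrix. The work incurred is
\begin{equation}
\bigOT\!\bigl(n(|E'| + n^{1+\epsilon})\bigr) \;\subseteq\; \bigOT\!\bigl(n \cdot n^{1+\epsilon}\bigr) \;=\; \bigOT(n^{2+\epsilon}),
\end{equation}
and the depth remains $\polylog n$, as claimed.

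It remains to verify the approximation factor. For any $v,w \in V$, chaining the spanner guarantee with the guarantee from Theorem~\ref{thm:apsp} yields
\begin{equation}
\dist(v,w,G) \;\leq\; \dist(v,w,G') \;\leq\; \mu(v,w) \;\leq\; (1 + 1/\polylog n)\,\dist(v,w,G') \;\leq\; (2k-1)(1+o(1))\,\dist(v,w,G).
\end{equation}
Since $k$ is a constant, the stretch is $\bigO(1)$. Note that $\mu$ is already a metric (inherited from Theorem~\ref{thm:apsp}) and admits constant-time query access, which completes the argument.

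The only nonroutine obstacle is confirming a parallel implementation of Baswana--Sen with the required $\polylog n$ depth and $\bigOT(m)$ work; this is standard in the literature and the remaining steps are a direct composition of Theorem~\ref{thm:apsp} with the spanner's stretch bound. Everything else is bookkeeping of constants in $\epsilon$ versus $1/k$.
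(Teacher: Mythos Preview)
Your proposal is correct and follows essentially the same approach as the paper: sparsify via a Baswana--Sen $(2k-1)$-spanner, apply Theorem~\ref{thm:apsp} to the spanner, and chain the stretch bounds. The only difference is cosmetic: the paper chooses $k = \lceil 1/(\sqrt{1+\epsilon}-1) \rceil$ and $\epsilon' = \sqrt{1+\epsilon}-1$, whereas your simpler choice of $1/k \le \epsilon$ and $\epsilon' = \epsilon$ works just as well for the stated $\bigOT(n^{2+\epsilon})$ bound.
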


\begin{proof}
	Baswana and Sen show how to compute a $(2k - 1)$-spanner of $G = (V, E, \weight)$, i.e., $E' \subseteq E$ such that $G' := (V, E', \weight)$ fulfills, for all $v,w \in V$,
	\begin{equation}
		\dist(v, w, G) \leq \dist(v, w, G') \leq (2k - 1) \dist(v, w, G),
	\end{equation}
	using $\bigOT(1)$ depth and $\bigOT(m)$ work with $|E'| \in \bigO(kn^{1 + 1/k})$ in expectation~\cite{bs-sltracsswg-07}.
	W.l.o.g., $k \in \bigO(\log n)$ because $kn^{1/k} = k2^{\log n / k}$ starts growing beyond that point.
	This results in $\bigOT(n^{1 + 1/k})$ edges in expectation.
	Furthermore, the algorithm of Baswana and Sen uses $\bigOT(n^{1 + 1/k})$ edges w.h.p.

	We compute an $\bigO(1)$-approximate metric of as follows.
	\begin{enumerate*}
	\item
		Compute a $(2k - 1)$-spanner for $k = \lceil 1 / (\sqrt{1+\epsilon} - 1) \rceil$.
		This is possible within the given bounds of work and depth, and w.h.p.\ yields $|E'| \in \bigOT(n^{1+1/k}) = \bigOT(n^{\sqrt{1+\epsilon}})$ edges and a stretch that is constant w.r.t.\ $n$ and~$m$.

	\item
		Apply Theorem~\ref{thm:apsp} to $G' := (V, E', \weight)$ and $\epsilon' := \sqrt{1 + \epsilon} - 1$.
		This induces $\bigOT(1)$ depth and $\bigOT(n^{2 + \epsilon})$ work.
	\end{enumerate*}
	By construction, the resulting metric has stretch $(2k - 1)(1 + \bigo(1)) \subseteq \bigO(1)$.
\end{proof}

Blelloch et~al.~\cite{bgt-pptekmbabnd-12} show how to construct \iac{FRT} tree from a metric using $\bigO(n^2)$ work and $\bigO(\log^2 n)$ depth.
Combining this with Theorem~\ref{thm:apsp2} enables us to w.h.p.\ construct \iac{FRT} tree from a graph $G$ using polylogarithmic depth and $\bigOT(n^{2+\epsilon})$ work.
While this does not yield the same \ac{FRT} tree as when directly embedding $G$ since we ``embed an approximation of~$\dist(\cdot,\cdot,G)$,'' it has the same expected asymptotic stretch of $\bigO(\log n)$ due to the constant-factor approximation provided by Theorem~\ref{thm:apsp2}.
This can, however, be done more efficiently on sparse graphs:
Constructing \ac{FRT} trees is \iac{MBF-like} algorithm and solving the problem directly\dash---using the oracle\dash---reduces the work to $\bigOT(m^{1+\epsilon})$;
this is the goal of Section~\ref{sec:frt}.

\section{\acs{FRT} Construction}
\label{sec:frt}

Given a weighted graph~$G$, determining a metric that $\bigO(1)$-approximates $\dist(\cdot,\cdot,G)$---using polylogarithmic depth and $\bigOmegaT(n^{2+\epsilon})$ work\dash---is straightforward, see Theorem~\ref{thm:apsp2};
the oracle is queried with the \ac{MBF-like} \ac{APSP} algorithm, implicitly enjoying the benefits of the \acs{SPD}-reducing sampling technique of Section~\ref{sec:h}.
In this section, we show that collecting the information required to construct \ac{FRT} trees\dash---\ac{LE} lists\dash---is \iac{MBF-like} algorithm, i.e., a query that can be directly answered by the oracle.
Since collecting \ac{LE} lists is more work-efficient than \ac{APSP}, this leads to our main result:
w.h.p.\ sampling from the \ac{FRT} distribution using polylogarithmic depth and $\bigOT(m^{1+\epsilon})$ work.

We begin with a formal definition of metric (tree) embeddings in general and the \ac{FRT} embedding in particular in Section~\ref{sec:frt-embedding}, proceed to show that the underlying algorithm is \ac{MBF-like} (Section~\ref{sec:frt-mbf}) and that all intermediate steps are sufficiently efficient in terms of depth and work (Section~\ref{sec:frt-efficient}), and present our main results in Section~\ref{sec:frt-result}.
Section~\ref{sec:frt-datastructure} describes how to retrieve the original paths in $G$ that correspond to the edges of the sampled \ac{FRT} tree.

\subsection{Metric Tree Embeddings}
\label{sec:frt-embedding}

We use this section to introduce the (distribution over) metric tree embeddings of Fakcharoenphol, Rao, and Talwar, referred to as \ac{FRT} embedding, which has expected stretch $\bigO(\log n)$~\cite{frt-tbaamtm-04}.

\begin{definition}[Metric Embedding]\label{def:embedding}
	Let $G = (V, E, \weight)$ be a graph.
	A \emph{metric embedding of stretch $\alpha$ of~$G$} is a graph $G' = (V', E', \weight')$, such that $V \subseteq V'$ and
	\begin{equation}\label{eq:embedding-stretch}
		\forall v,w \in V\colon \quad
			\dist(v, w, G) \leq \dist(v, w, G') \leq \alpha \dist(v, w, G),
	\end{equation}
	for some $\alpha \in \R_{\geq 1}$.
	If $G'$ is a tree, we refer to it as \emph{metric tree embedding.}
	For a random distribution of metric embeddings~$G'$, we require $\dist(v,w,G) \leq \dist(v,w,G')$ and define the \emph{expected stretch} as
	\begin{equation}\label{eq:embedding-stretch-expected}
		\alpha := \max_{v \neq w \in V} \E\left[ \frac{\dist(v,w,G')}{\dist(v,w,G)} \right].
	\end{equation}
\end{definition}

We show how to efficiently sample from the \ac{FRT} distribution for the graph $H$ introduced in Section~\ref{sec:h}.
As $H$ is an embedding of $G$ with a stretch in $1 + \bigo(1)$, this results in a tree embedding of $G$ of stretch $\bigO(\log n)$.
Khan et~al.~\cite{kkmpt-edaapte-12} show that a suitable representation of (a~tree sampled from the distribution of) the \acs{FRT} embedding~\cite{frt-tbaamtm-04} can be constructed as follows.
\begin{enumerate}
\item\label{item:frt-random-begin}
	Choose $\beta \in [1,2)$ uniformly at random.

\item\label{item:frt-random-end}
	Choose uniformly at random a total order of the nodes (i.e., a uniformly random permutation).
	In the following, $v < w$ means that $v$ is smaller than $w$ w.r.t.\ to this order.

\item\label{item:frt-tree-begin}
	Determine for each node $v \in V$ its \emph{\acf{LE} list:}
	This is the list obtained by deleting from $\{ (\dist(v,w,H), w) \mid w \in V \}$ all pairs $(\dist(v,w,H), w)$ for which there is some $u \in V$ with $\dist(v,u,H) \leq \dist(v,w,H)$ and $u < w$.
	Essentially, $v$~learns, for every distance~$d$, the smallest node within distance at most~$d$, i.e., $\min\{ w \in V \mid \dist(v,w,G) \leq d \}$.

\item\label{item:frt-tree-end}
	Denote by $\weight_{\min} := \min_{e \in E}\{\weight(e)\}$ and $\weight_{\max} := \max_{e \in E}\{\weight(e)\}$ the minimum and maximum edge weight, respectively;
	recall that $\weight_{\max}/\weight_{\min} \in \poly n$ by assumption.
	From the \ac{LE} lists, determine for each $v \in V$ and distance $\beta 2^i \in [\weight_{\min}/2, 2\weight_{\max}]$, $i \in \Z$, the node $v_i := \min \{w \in V \mid \dist(v,w,H) \leq \beta 2^i\}$.
	W.l.o.g., we assume that $i \in \{0, \dots, k\}$ for $k \in \bigO(\log n)$ (otherwise, we shift the indices of the nodes $v_i$ accordingly).
	Hence, for each $v \in V$, we obtain a sequence of nodes $(v_0, v_1,\dots, v_k)$.
	$(v_0, v_1, \dots, v_k)$ is the leaf corresponding to $v = v_0$ of the tree embedding, $(v_1, \dots, v_k)$ is its parent, and so on; the root is~$(v_k)$.
	The edge from $(v_i, \dots, v_k)$ to $(v_{i+1}, \dots, v_k)$ has weight $\beta 2^i$.
\end{enumerate}
We refer to~\cite{gl-nodte-14} for a more detailed summary.

The above procedure implicitly specifies a random distribution over tree embeddings with expected stretch $\bigO(\log n)$~\cite{frt-tbaamtm-04}, which we call the \emph{\ac{FRT} distribution.}
We refer to following the procedure \ref{item:frt-random-begin}--\ref{item:frt-tree-end} as \emph{sampling} from the \ac{FRT} distribution.
Once the randomness is fixed, i.e., steps \ref{item:frt-random-begin}--\ref{item:frt-random-end} are completed, the tree resulting from steps~\ref{item:frt-tree-begin}--\ref{item:frt-tree-end} is unique;
we refer to them as \emph{constructing \iac{FRT} tree}.

The next lemma shows that step~\ref{item:frt-tree-end}, i.e., constructing the \ac{FRT} tree from the \ac{LE} lists, is easy.

\begin{lemma}\label{lem:tree-explicit}
	Given \ac{LE} lists of length $\bigO(\log n)$ for all vertices, the corresponding \ac{FRT} tree can be determined using $\bigO(n \log^3 n)$ work and $\bigO(\log^2 n)$ depth.
\end{lemma}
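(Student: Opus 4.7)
The plan proceeds in two main phases. First, for each vertex $v$ I would extract the sequence $(v_0, v_1, \dots, v_k)$ of ancestor-identifying nodes (recall $k \in \bigO(\log n)$) directly from the LE list of~$v$. Since the LE list is sorted by distance and the corresponding node identifiers are weakly monotone, $v_i = \min\{w \in V \mid \dist(v, w, H) \leq \beta 2^i\}$ is precisely the identifier stored in the entry of the LE list whose distance is the largest one not exceeding $\beta 2^i$. This is a standard binary search on a length-$\bigO(\log n)$ list, taking $\bigO(\log\log n)$ work per query. Running these $k+1 \in \bigO(\log n)$ queries in parallel for all $n$ vertices uses $\bigO(n \log^2 n)$ work and $\bigO(\log\log n)$ depth. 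After this step we have $n$ ancestor sequences, one per vertex, totaling $\bigO(n\log n)$ entries.

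The second phase is to turn these sequences into the explicit tree. The key observation is that an internal node of the FRT tree is in one-to-one correspondence with a distinct \emph{suffix} $(v_i, v_{i+1}, \dots, v_k)$, and two leaves share an ancestor at level~$i$ if and only if their length-$(k-i+1)$ suffixes from position~$i$ onwards coincide. I would therefore collect all $N := n(k+1) \in \bigO(n \log n)$ such suffixes, each of length $\bigO(\log n)$, and sort them lexicographically using a parallel comparison sort (e.g.\ Cole's merge sort) of depth $\bigO(\log N) \subseteq \bigO(\log n)$ and $\bigO(N \log N) \subseteq \bigO(n \log^2 n)$ comparisons. Since comparing two tuples of length $\bigO(\log n)$ costs $\bigO(\log n)$ work and $\bigO(\log\log n)$ depth, this step contributes $\bigO(n \log^3 n)$ work and $\bigO(\log^2 n)$ depth, matching the claimed bound.

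A subsequent parallel scan over the sorted list identifies maximal runs of identical suffixes and assigns a unique identifier to each distinct suffix via parallel prefix sum, using $\bigO(N \log N) \subseteq \bigO(n \log^2 n)$ work and $\bigO(\log n)$ depth. Each such identifier is a tree node. Along with each suffix I keep a back-pointer $(v, i)$ to the vertex and position it originates from; the parent of the suffix starting at position~$i$ in the sequence of $v$ is the suffix starting at position $i+1$ in the same sequence, whose identifier can be looked up in $\bigO(1)$ using the back-pointers. Edge weights are $\beta 2^i$ by construction, and the root is the unique suffix of length~$1$. All remaining bookkeeping fits within the claimed $\bigO(n \log^3 n)$ work and $\bigO(\log^2 n)$ depth.

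I expect the main obstacle to be the careful handling of the sort over variable-length tuples so that total comparison cost is bounded as claimed, and making the parent lookup strictly constant-work per suffix; both are routine but benefit from storing suffixes as pairs (source vertex, starting index) rather than materializing each tuple explicitly, so that comparisons and parent lookups can be carried out by indexing into the $n$ already-computed ancestor sequences.
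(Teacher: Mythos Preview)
Your proposal is correct and follows essentially the same approach as the paper: compute the ancestor sequence $(v_0,\ldots,v_k)$ for each vertex from its LE list, collect all $\bigO(n\log n)$ suffixes, sort them lexicographically at cost $\bigO(n\log^3 n)$ work and $\bigO(\log^2 n)$ depth, and then deduplicate by a scan. The only cosmetic difference is in the first phase: you locate each $v_i$ by binary-searching the threshold $\beta 2^i$ in the sorted LE list, whereas the paper instead iterates over the LE entries and, for each entry $(d,u)$ with successor distance $d'$, writes $v_i=u$ for every $i$ with $d\le \beta 2^i<d'$; both yield the same sequences within the stated bounds.
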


\begin{proof}
	Determining $\weight_{\max}$, $\weight_{\min}$, and the range of indices $i$ is straightforward at this complexity, as is sorting of each node's list in ascending order w.r.t.\ distance.
	Note that in each resulting list of distance/node pairs, the nodes are strictly decreasing in terms of the random order on the nodes, and each list ends with an entry for the minimal node.
	For each node $v$ and entry $(d,u)$ in its list in parallel, we determine the values of $i\in \{0,\ldots,k\}$ such that $u$ is the smallest node within distance $\beta 2^i$ of $v$.
	This is done by reading the distance value $d'$ of the next entry of the list (using $d'=\beta 2^k+1$ if $(d,u)$ is the last entry) and writing to memory $v_i=u$ for each $i$ satisfying that $d\leq \beta 2^i <d'$.
	Since $\weight_{\max}/\weight_{\min}\in \poly n$, this has depth $\bigO(\log n)$ and a total work of $\bigO(n\log^2 n)$.

	Observe that we computed the list $(v_0,\ldots,v_k)$ for each $v\in V$.
	Recall that the parents of the leaf $(v_0, \dots, v_k)$ are determined by its $k$ suffixes.
	It remains to remove duplicates wherever nodes share a parent.
	To this end, we sort the list (possibly with duplicates) of $(k+1)n\in \bigO(n\log n)$ suffixes\dash---each with $\bigO(\log n)$ entries\dash---lexicographically, requiring $\bigO(n\log^3 n)$ work and depth $\bigO(\log^2 n)$, as comparing two suffixes requires depth and work $\bigO(\log n)$.
	Then duplicates can be removed by comparing each key to its successor in the sorted sequence, taking another $\bigO(n\log^2 n)$ work and $\bigO(\log n)$ depth.

	Note that tree edges and their weights are encoded implicitly, as the parent of each node is given by removing the first node from the list, and the level of a node (and thus the edge to its parent) is given by the length of the list representing it.
	If required, it is thus trivial to determine, e.g., an adjacency list with $\bigO(n\log^2 n)$ work and depth $\bigO(\log^2 n)$.
	Overall, we spent $\bigO(n\log^3 n)$ work at $\bigO(\log^2 n)$ depth.
\end{proof}

\subsection{Computing \acs{LE} Lists is \acs{MBF-like}}
\label{sec:frt-mbf}

Picking $\beta$ is trivial and choosing a random order of the nodes can be done w.h.p.\ by assigning to each node a string of $\bigO(\log n)$ uniformly and independently chosen random bits.
Hence, in the following, we assume this step to be completed, w.l.o.g.\ resulting in a random assignment of the vertex IDs $\{1, \dots, n\}$.
It remains to establish how to efficiently compute \ac{LE} lists.

We establish that \ac{LE} lists can be computed by \iac{MBF-like} algorithm, compare Definition~\ref{def:mbf}, using the parameters in Definition~\ref{def:le};
the claim that Equations~\eqref{eq:le-r-definition} and~\eqref{eq:le-sim-definition} define a representative projection and a congruence relation is shown in Lemma~\ref{lem:frt-mbflike}.
\begin{definition}\label{def:le}
	For constructing \ac{LE} lists, use the semiring $\mathcal{S} = \mathcal{S}_{\min,+}$ and the distance map $\mathcal{M} = \mathcal{D}$ from Definition~\ref{def:distance-map} as zero-preserving semimodule.
	For all $x \in \mathcal{D}$, define
	\begin{gather}
		r(x)_v := \begin{cases}
			\infty & \text{$\exists w < v\colon\ x_w \leq x_v$ and} \\
			x_v    & \text{otherwise, and}
		\end{cases} \label{eq:le-r-definition} \\
		x \sim y \quad :\Leftrightarrow \quad r(x) = r(y) \label{eq:le-sim-definition}
	\end{gather}
	as representative projection and congruence relation, respectively.
	As initialization $x^{(0)} \in \mathcal{D}^V$ use
	\begin{equation}\label{eq:le-x0}
		x^{(0)}_{vw} := \begin{cases}
				0      & \text{if $v=w$ and} \\
				\infty & \text{otherwise.}
			\end{cases}
	\end{equation}
\end{definition}

Hence, $r(x)$~is the \ac{LE} list of $v \in V$ if $x_w = \dist(v,w,H)$ for all $w \in V$ and we consider two lists equivalent if and only if they result in the same \ac{LE} list.
This allows us to prepare the proof that retrieving \ac{LE} lists can be done by \iac{MBF-like} algorithm in the following lemma.
It states that filtering keeps the relevant information:
If a node--distance pair is dominated by an entry in a distance map, the filtered distance map also contains a\dash---possibly different\dash---dominating entry.

\begin{lemma}
	Consider arbitrary $x,y \in \mathcal{D}$, $v \in V$, and $s \in \Rdist$.
	Then
	\begin{equation}\label{eq:le-filtering}
		\exists w < v\colon x_w \leq s \quad \Leftrightarrow \quad
			\exists w < v\colon r(x)_w \leq s
	\end{equation}
\end{lemma}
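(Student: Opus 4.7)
\medskip
\noindent\textbf{Proof proposal.}
The plan is to prove the two implications separately; the reverse direction is immediate, and the forward direction follows from picking a ``minimal'' witness.

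For the direction $(\Leftarrow)$, suppose $w < v$ satisfies $r(x)_w \leq s$. Since $s \in \Rdist$ (in particular, $s$ may be $\infty$, but if $r(x)_w = \infty$ we still need $\infty \leq s$, which forces $s = \infty$, and then \emph{any} witness $w' < v$ with $x_{w'} \leq s = \infty$ works, e.g.\ $w' = w$ itself; otherwise $r(x)_w = x_w$ by definition~\eqref{eq:le-r-definition}, giving $x_w \leq s$ directly). In either case the same index $w$ witnesses the right-hand side of~\eqref{eq:le-filtering}.

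For the forward direction $(\Rightarrow)$, assume the set $W := \{ w < v \mid x_w \leq s \}$ is non-empty. Among all indices achieving the minimum value $\min_{w \in W} x_w$, let $w^\star$ be the one with the smallest index. I claim that $r(x)_{w^\star} = x_{w^\star} \leq s$, which will finish the proof. Suppose for contradiction that $r(x)_{w^\star} = \infty$; then by~\eqref{eq:le-r-definition} there is some $u < w^\star$ with $x_u \leq x_{w^\star}$. But $u < w^\star < v$ and $x_u \leq x_{w^\star} \leq s$ show $u \in W$, contradicting either the minimality of $x_{w^\star}$ (if $x_u < x_{w^\star}$) or the minimality of the index $w^\star$ among minimizers (if $x_u = x_{w^\star}$).

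The main obstacle, if any, is simply being careful that the witness on the right-hand side need not be the same index $w$ as on the left-hand side: filtering may erase $w$ itself but, by the above argument, only because some index $u < w$ with $x_u \leq x_w \leq s$ survives (or is in turn shadowed by a still smaller surviving index). Choosing the lexicographically-minimal pair $(x_w, w)$ inside $W$ pins down an index that cannot be erased by $r$, which is exactly what~\eqref{eq:le-filtering} needs.
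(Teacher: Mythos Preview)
Your proof is correct and follows essentially the same idea as the paper's: both exploit that among the witnesses $w < v$ with $x_w \leq s$, there must be one that survives the filter $r$. The paper reaches this by an inductive descent (if $w$ is filtered out by some $u < w$, repeat the argument on $u$; finiteness of $V$ terminates), whereas you pick the lexicographically minimal pair $(x_w,w)$ directly---a slightly cleaner phrasing of the same argument. (Minor slip: in your $(\Leftarrow)$ paragraph you write that $w$ ``witnesses the right-hand side'' when you mean the left-hand side.)
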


\begin{proof}
	Observe that the necessity ``$\Leftarrow$'' is trivial.
	As for sufficiency~``$\Rightarrow$,'' suppose that there is $w < v$ such that $x_w \leq s$.
	If $r(x)_w = x_w$, we are done.
	Otherwise, there must be some $u < w < v$ satisfying $x_u \leq x_w \leq x_v$.
	Since $|V|$ is finite, an inductive repetition of the argument yields that there is some $w' < v$ with $r(x)_{w'} = x_{w'} \leq s$.
\end{proof}

Equipped with this lemma, we can prove that $\sim$ is a congruence relation on $\mathcal{D}$ with representative projection~$r$.
We say that a node--distance pair \emph{$(v, d)$ dominates $(v', d')$} if and only if $v < v'$ and $d \leq d'$;
in the context of $x \in \mathcal{D}$, we say that $x_w$ dominates $x_v$ if and only if $(w, x_w)$ dominates~$(v, x_v)$.

\begin{lemma}\label{lem:frt-mbflike}
	The equivalence relation $\sim$ from Equation~\eqref{eq:le-sim-definition} of Definition~\ref{def:le} is a congruence relation.
	The function $r$ from Equation~\eqref{eq:le-r-definition} Definition~\ref{def:le} is a representative projection w.r.t.~$\sim$.
\end{lemma}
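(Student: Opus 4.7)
The plan is to invoke Lemma~\ref{lem:congruence-by-r} by verifying its three hypotheses: that $r$ is a projection, that $r(x)=r(x')$ implies $r(sx)=r(sx')$ for every $s\in\mathcal{S}_{\min,+}$, and that $r(x)=r(x')\land r(y)=r(y')$ implies $r(x\oplus y)=r(x'\oplus y')$.

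For idempotency, I argue componentwise: if $r(x)_v=\infty$ then $r(r(x))_v=\infty$ trivially, and if $r(x)_v=x_v<\infty$ then by definition no $w<v$ has $x_w\leq x_v$, so in particular no $w<v$ has $r(x)_w\leq x_v$ (since $r(x)_w\in\{x_w,\infty\}$), giving $r(r(x))_v=x_v$. For scalar multiplication, I observe that under the convention $s+\infty=\infty$, the inequality $s+x_w\leq s+x_v$ is equivalent to $x_w\leq x_v$ for any $s\in\Rdist$. Consequently $r(sx)_v=s\odot r(x)_v$ coordinatewise, so $r(x)=r(x')$ instantly yields $r(sx)=s\odot r(x)=s\odot r(x')=r(sx')$.

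The hard part is the additive condition, which I will establish in the stronger form $r(x\oplus y)=r(r(x)\oplus r(y))$; applying this twice (once to $(x,y)$ and once to $(x',y')$) closes the argument. Write $z:=x\oplus y$ and $z':=r(x)\oplus r(y)$, and observe that $z\leq z'$ componentwise. The key auxiliary fact is the following consequence of~\eqref{eq:le-filtering}: for every $v$ and every $s\in\Rdist$,
\begin{equation}
\exists w<v\colon z_w\leq s
\quad\Longleftrightarrow\quad
\exists w<v\colon x_w\leq s\lor y_w\leq s
\quad\Longleftrightarrow\quad
\exists w<v\colon r(x)_w\leq s\lor r(y)_w\leq s
\quad\Longleftrightarrow\quad
\exists w<v\colon z'_w\leq s,
\end{equation}
where the middle equivalence is a twofold application of~\eqref{eq:le-filtering}. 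Using this with $s=z_v$ I then split into two cases. If $r(z)_v=z_v\neq\infty$, no $w<v$ has $z_w\leq z_v$; WLOG $z_v=x_v\leq y_v$, so $x_v<x_w$ for all $w<v$ and hence $r(x)_v=x_v$, forcing $z'_v=z_v$; combined with the displayed equivalence at $s=z_v=z'_v$ this gives $r(z')_v=z'_v=r(z)_v$. If $r(z)_v=\infty$, some $w<v$ satisfies $z_w\leq z_v\leq z'_v$, hence by the equivalence some $w<v$ has $z'_w\leq z'_v$, so $r(z')_v=\infty=r(z)_v$.

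The main obstacle I anticipate is precisely this additive step, because $z$ and $z'$ need not be equal as elements of $\mathcal{D}$; the rescue is that they agree on every ``non-dominated'' coordinate, and Lemma~\eqref{eq:le-filtering} guarantees that the set of dominated coordinates is the same in both. Once all three hypotheses are in place, Lemma~\ref{lem:congruence-by-r} yields the claim.
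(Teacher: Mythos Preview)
Your proposal is correct and follows essentially the same route as the paper: invoke Lemma~\ref{lem:congruence-by-r}, dispatch the scalar case by the order-preservation of $s+\cdot$ (you phrase it as the clean identity $r(sx)=s\odot r(x)$, the paper cites~\eqref{eq:le-filtering}, but the content is identical), and for the additive case prove the stronger identity $r(x\oplus y)=r(r(x)\oplus r(y))$ via the equivalence chain you display and the same two-case split on whether $(x\oplus y)_v$ is dominated. The only cosmetic wrinkle is that in your Case~2 the sentence ``some $w<v$ satisfies $z_w\le z_v$'' tacitly assumes $v$ is not the minimal node when $z_v=\infty$; that edge case is trivial (then $z'_v=\infty$ as well), so this is not a genuine gap.
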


\begin{proof}
	Trivially, $r$~is a projection, i.e., $r^2(x) = r(x)$ for all $x \in \mathcal{D}$.
	By Lemma~\ref{lem:congruence-by-r}, it hence suffices to show that~\eqref{eq:congruence-by-r-product} and~\eqref{eq:congruence-by-r-sum} hold.
	In order to do that, let $s \in \mathcal{S}_{\min,+}$ be arbitrary, and $x,x',y,y' \in \mathcal{D}$ such that $r(x) = r(x')$ and $r(y) = r(y')$.
	As we have $x_v \leq x_w \Leftrightarrow s + x_v \leq s + x_w$ for all $v,w \in V$, \eqref{eq:congruence-by-r-product}~immediately follows from \eqref{eq:le-filtering}.

	Regarding~\eqref{eq:congruence-by-r-sum}, we show that
	\begin{equation}\label{eq:frt-mbflike-1}
		r(x \oplus y) = r(r(x) \oplus r(y))
	\end{equation}
	which implies~\eqref{eq:congruence-by-r-sum} due to $r(x \oplus y) = r(r(x) \oplus r(y)) = r(r(x') \oplus r(y')) = r(x' \oplus y')$.
	Let $v \in V$ be an arbitrary vertex and observe that $(x \oplus y)_v$ is dominated if and only if
	\begin{align}
			&\exists w < v\colon\quad
				(x \oplus y)_w \leq (x \oplus y)_v \\
		\Leftrightarrow\quad
			&\exists w < v\colon\quad
				\min\{x_w, y_w\} \leq (x \oplus y)_v \\
		\Leftrightarrow\quad
			&\exists w < v\colon\quad
				x_w \leq (x \oplus y)_v \lor y_w \leq (x \oplus y)_v \\
		\stackrel{\eqref{eq:le-filtering}}{\Leftrightarrow}\quad
			&\exists w < v\colon\quad
				r(x)_w \leq (x \oplus y)_v \lor r(y)_w \leq (x \oplus y)_v.\label{eq:frt-mbflike-2}
	\end{align}
	In order to show~\eqref{eq:frt-mbflike-1}, we distinguish two cases.
	\begin{description}
	\item [Case~1 ($(x \oplus y)_v$ is dominated):]
		By Definition~\ref{def:le}, we have $r(x \oplus y)_v = \infty$.
		Additionally, we know that $(r(x) \oplus r(y))_v=\min\{r(x)_v,r(y)_v\}\geq \min\{x_v,y_v\}=(x\oplus y)_v$ must be dominated due to~\eqref{eq:frt-mbflike-2}, and hence $r(r(x) \oplus r(y))_v = \infty = r(x \oplus y)_v$.

	\item [Case~2 ($(x \oplus y)_v$ is not dominated):]
		This means that by Definition~\ref{def:le}, $r(x \oplus y)_v = (x \oplus y)_v = \min\{ x_v, y_v \}$.
		Furthermore, the negation of~\eqref{eq:frt-mbflike-2} holds, i.e., $\forall w<v\colon \min\{ r(x)_w, r(y)_w \} > (x \oplus y)_v = \min\{x_v, y_v\}$.
		Assuming w.l.o.g.\ that $x_v\leq y_v$ (the other case is symmetric), we have that $x_v = (x \oplus y)_v = r(x \oplus y)_v$ and that $x_v = r(x)_v = (r(x) \oplus r(y))_v$, where $x_v = r(x)_v$ is implied by~\eqref{eq:le-filtering} because $r(x)_w \geq \min\{r(x)_w, r(y)_w\} > \min\{x_v, y_v\} = x_v$ for any $w < v$.
		It follows that
		\begin{equation}
			r(r(x) \oplus r(y))_v
				= r(r(x))_v
				= r(x)_v
				= x_v
				= r(x \oplus y)_v.
		\end{equation}
	\end{description}
	Altogether, this shows~\eqref{eq:frt-mbflike-1} and, as demonstrated above, implies~\eqref{eq:congruence-by-r-sum}.
\end{proof}

Having established that determining \ac{LE} lists can be done by \iac{MBF-like} algorithm allows us to apply the machinery developed in Sections~\ref{sec:mbf}--\ref{sec:oracle}.
Next, we establish that \ac{LE} list computations can be performed efficiently, which we show by bounding the length of LE lists.

\subsection{Computing \acs{LE} Lists is Efficient}
\label{sec:frt-efficient}

Our course of action is to show that \ac{LE} list computations are efficient using Theorem~\ref{thm:oracle}, i.e., the oracle theorem.
The purpose of this section is to prepare the lemmas required to apply Theorem~\ref{thm:oracle}.
We stress that the key challenge is to perform each iteration in polylogarithmic depth;
this allows us to determine $\mathcal{A}(H)$ in polylogarithmic depth due to $\spd(H) \in \bigO(\log^2 n)$.
To this end, we first establish the length of intermediate \ac{LE} lists to be logarithmic w.h.p.\ (Lemma~\ref{lem:lists-short}).
This permits to apply $r^V$ and determine the matrix-vector multiplication with~$A_{\lambda}$\dash---the scaled version of~$A_G$, the adjacency matrix of $G$ from Section~\ref{sec:oracle}\dash---in a sufficiently efficient manner (Lemmas~\ref{lem:compute-r(x)} and~\ref{lem:computation-cheap}).
Section~\ref{sec:frt-result} plugs these results into Theorem~\ref{thm:oracle} to establish our main result.

We remark that \ac{LE} lists are known to have length $\bigO(\log n)$ w.h.p.\ throughout intermediate computations~\cite{gl-nodte-14,kkmpt-edaapte-12}, assuming that \ac{LE} lists are assembled using $h$-hop distances.
Lemma~\ref{lem:lists-short}, while using the same key argument, is more general since it makes no assumption about $x$ except for its independence of the random node order;
we need the more general statement due to our decomposition of~$A_H$.

Recall that by $|x|$ we denote the number of non-$\infty$ entries of $x \in \mathcal{D}$ and that we only need to keep the non-$\infty$ entries in memory.
Lemma~\ref{lem:lists-short} shows that any \ac{LE} list $r(x) \in \mathcal{D}$ has length $|r(x)| \in \bigO(\log n)$ w.h.p., provided that $x$ does not depend on the random node ordering.
Observe that, in fact, the lemma is quite powerful, as it suffices that there is \emph{any} $y \in [x]$ that does not depend on the random node ordering:
as $r(x)=r(y)$, then $|r(x)|=|r(y)|\in \bigO(\log n)$ w.h.p.

\begin{lemma}\label{lem:lists-short}
	Let $x \in \mathcal{D}$ be arbitrary but independent of the random order of the nodes.
	Then $|r(x)| \in \bigO(\log n)$ w.h.p.
\end{lemma}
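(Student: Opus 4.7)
The plan is to reduce the claim to the classical analysis of records (left-to-right minima) in a uniformly random permutation. First, I would enumerate the vertices with finite $x$-value as $\sigma(1), \sigma(2), \dots, \sigma(n')$ in ascending order of $x_{\sigma(j)}$, breaking ties by a rule that depends only on $x$ (for instance, a canonical ordering of $V$ fixed in advance and not involving the random node order). Since $x$ is independent of the random node order by assumption, so is $\sigma$. Vertices with $x_v = \infty$ satisfy $r(x)_v = \infty$ regardless of filtering, so they do not enter $|r(x)|$, and it suffices to bound how many of the $\sigma(j)$ survive the filter.

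The key domination step is the following: if $\sigma(j)$ survives, i.e., $r(x)_{\sigma(j)} = x_{\sigma(j)}$, then by Equation~\eqref{eq:le-r-definition} there is no $w < \sigma(j)$ with $x_w \leq x_{\sigma(j)}$; in particular, since $x_{\sigma(i)} \leq x_{\sigma(j)}$ for every $i \leq j$, the random-order ID of $\sigma(j)$ must be strictly smaller than that of each $\sigma(i)$ with $i < j$. Letting $X_j \in \{0,1\}$ indicate this ``prefix minimum'' event, we obtain $|r(x)| \leq \sum_{j=1}^{n'} X_j$. Ties in $x$ do not break the argument: whenever several vertices share an $x$-value, only the one with smallest random ID can survive, and this still implies the prefix-minimum property above (it is actually a strictly stronger condition than prefix-minimality when ties occur, so our bound is not tight, but that is fine).

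Conditioned on $\sigma$, the random IDs assigned to $\sigma(1), \sigma(2), \dots$ form a uniformly random permutation of $n'$ distinct values. By the classical record analysis, the $X_j$ are then mutually independent Bernoulli random variables with $\E[X_j] = 1/j$, so $\mu := \E\bigl[\sum_{j=1}^{n'} X_j\bigr] \leq \sum_{j=1}^{n} 1/j \leq 1 + \ln n$. A standard Chernoff bound for sums of independent Bernoullis then yields $\Prob\bigl[\sum_{j=1}^{n'} X_j \geq (1+\delta)\mu\bigr] \leq e^{-\Omega(\delta\mu)}$ for $\delta \geq 1$, and picking $\delta$ as a sufficiently large constant gives $|r(x)| \in \bigO(\log n)$ with probability at least $1 - n^{-c}$ for any prescribed constant~$c$, which is w.h.p.\ as required.

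The only subtle point, and the main thing to get right, is that $\sigma$ must be independent of the random node IDs; this forces tie-breaking in the sort to use only data derived from $x$ (or from the graph), never from the random order. Once this is arranged, the independence of the $X_j$ and the Chernoff estimate are entirely standard, and no further algorithmic argument is needed.
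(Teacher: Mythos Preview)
Your proposal is correct and follows essentially the same approach as the paper: sort the finite entries of $x$ by value with tie-breaking independent of the random order, observe that survivors are (a subset of) the prefix minima of the induced random permutation, and bound the number of such records via $\E[X_j]=1/j$ and Chernoff. Your treatment is in fact slightly more careful than the paper's in two places---you note the inequality $|r(x)|\le\sum_j X_j$ rather than claiming equality (which is the right thing in the presence of ties), and you invoke full mutual independence of record indicators (a classical fact) where the paper only uses sequential independence together with a tailored Chernoff bound---but these are cosmetic differences, not a different route.
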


\begin{proof}
	Order the non-$\infty$ values of $x$ by ascending distance, breaking ties independently of the random node order.
	Denote for $i \in \{ 1, \dots, |x| \}$ by $v_i \in V$ the $i$-th node w.r.t.\ this order, i.e., $x_{v_i}$~is the $i$-th smallest entry in~$x$.
	Furthermore, denote by $X_i$ the indicator variable which is~$1$ if $v_i < v_j$ for all $j \in \{ 1, \dots, i-1 \}$ and $0$ otherwise.
	Clearly, $\E[X_i] = 1/i$, implying for $X := \sum_{i=1}^{|x|} X_i$ that
	\begin{equation}
		\E[X]
			= \sum_{i=1}^{|x|} \frac{1}{i}
			\leq \sum_{i=1}^{n} \frac{1}{i}
			\in \bigTheta(\log n).
	\end{equation}
	Observe that $X_i$ is independent of $\{ X_1, \dots, X_{i-1} \}$, as whether $v_i < v_j$ for all $j < i$ is independent of the internal order of the set $\{ v_1, \dots, v_{i-1} \}$.
	This is sufficient to apply Chernoff's bound\dash---we detail on this in Lemma~\ref{lem:chernoff} for the sake of self-containment\dash---yielding that $X \in \Theta(\log n)$ w.h.p.
	As $\Prob[X = k] = \Prob[|r(x)| = k]$, this concludes the proof.
\end{proof}

Hence, filtered, possibly intermediate \ac{LE} lists $r(x)$ w.h.p.\ comprise $\bigO(\log n)$ entries.
We proceed to show that under these circumstances, $r(x)$~can be computed efficiently.

\begin{lemma}\label{lem:compute-r(x)}
	Let $x \in \mathcal{D}$ be arbitrary.
	Then $r(x)$ can be computed using $\bigO(|r(x)| \log n)$ depth and $\bigO(|r(x)| |x|)$ work.
\end{lemma}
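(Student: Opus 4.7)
The plan is to reduce the computation of $r(x)$ to a sequence of exactly $|r(x)|$ parallel min-reductions, each ranging over at most $|x|$ entries.

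First, unfold~\eqref{eq:le-r-definition}: $v$ contributes a non-$\infty$ entry to $r(x)$ if and only if $x_v \neq \infty$ and $x_w > x_v$ for every $w < v$ with $x_w \neq \infty$. Enumerating the survivors in ascending ID order as $v_1 < v_2 < \cdots < v_k$ (with $k = |r(x)|$), the definition forces $x_{v_1} > x_{v_2} > \cdots > x_{v_k}$. The key observation is that $v_k$ coincides with the lexicographically smallest pair $(x_v, v)$ over the non-$\infty$ entries of $x$, with $x_v$ as primary and $v$ as secondary key: this pair is a survivor, because any $w < v_k$ with $x_w \neq \infty$ must satisfy $x_w > x_{v_k}$ (otherwise $(x_w, w)$ would be lex-smaller), and no survivor with larger ID can exist since it would be dominated by $(x_{v_k}, v_k)$.

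Given this characterization, I extract survivors iteratively from largest to smallest ID. Maintain a candidate set, initialized to $S_0 := \{(x_v, v) : x_v \neq \infty\}$. In iteration $i$, determine the lex-smallest pair $(x_{v'}, v')$ in $S_i$ by a parallel min-reduce, emit $v'$ as a survivor, and set $S_{i+1} := \{(x_v, v) \in S_i : v < v'\}$; terminate once $S_i = \emptyset$. The same characterization, restricted to the subproblem on $\{v : v < v'\}$, shows that iteration $i$ yields $v_{k-i+1}$, so termination occurs after exactly $k = |r(x)|$ iterations and the output equals the support of $r(x)$.

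Each iteration performs one min-reduce on at most $|x|$ pairs under lexicographic comparison, which is implementable as a balanced binary tree of pairwise comparisons with $\bigO(\log |x|) \subseteq \bigO(\log n)$ depth and $\bigO(|x|)$ work; computing $S_{i+1}$ merely attaches a constant-cost predicate $v < v'$ to each input of the next reduction and is absorbed into it. Summed over the $|r(x)|$ iterations, this yields depth $\bigO(|r(x)| \log n)$ and work $\bigO(|r(x)| \cdot |x|)$, as claimed. The only nontrivial ingredient is the combinatorial characterization of $v_k$ as the lex-min pair; the rest is a standard parallel reduction.
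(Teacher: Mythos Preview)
Your proof is correct, but it takes a route dual to the paper's. The paper extracts the survivors $v_1 < v_2 < \cdots < v_k$ in order of \emph{increasing ID}: each iteration finds the minimum-ID non-discarded node~$v$, outputs it, and then discards every node~$w$ with $x_v \leq x_w$. You instead extract them in order of \emph{increasing distance} (equivalently, decreasing ID): each iteration finds the lexicographically smallest pair $(x_v,v)$, outputs~$v$, and then restricts to IDs strictly below~$v$. Both orderings yield exactly $|r(x)|$ iterations, each realizable by a single $\bigO(\log n)$-depth, $\bigO(|x|)$-work reduction over the current candidate set, so the complexity bounds coincide. Your discard step is arguably tidier, since it is a pure ID threshold and does not need to touch the distance values again; the paper's version, on the other hand, outputs the list already in the natural ID order. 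Either argument is fine.
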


\begin{proof}
	We use one iteration per non-$\infty$ entry of~$r(x)$.
	In each iteration, the smallest non-dominated entry of $x_v$ is copied to $r(x)_v$ and all entries of $x$ dominated by $x_v$ are marked as dominated.
	This yields $|r(x)|$ iterations as follows:
	\begin{enumerate}
	\item
		Initialize $r(x) \gets \bot$.
		Construct a tournament tree on the non-$\infty$ elements of $x$ and identify its leaves with their indices $v \in V$ ($\bigO(\log n)$ depth and $\bigO(|x|)$ work).

	\item\label{enum:rx-iteration}
		Find the element with the smallest node index $v$ w.r.t.\ the random node order whose corresponding leaf is not marked as \emph{discarded} ($\bigO(\log n)$ depth and $\bigO(|x|)$ work).
		Set $r(x)_v \gets x_v$.

	\item
		Mark each leaf $w$ for which $x_v \leq x_w$, including~$v$, as \emph{discarded} ($\bigO(1)$ depth and $\bigO(|x|)$ work).

	\item
		If there are non-\emph{discarded} leaves ($\bigO(\log n)$ depth and $\bigO(|x|)$ work), continue at step~\ref{enum:rx-iteration}.
	\end{enumerate}
	Note that for each $w \neq v$ for which the corresponding node is \emph{discarded,} we have $r(x)_w = \infty$.
	On the other hand, by construction we have for all $v$ for which we stored $r(x)_v=x_v$ that there is no $w \in V$ satisfying both $x_w \leq x_v$ and $w < v$.
	Thus, the computed list is indeed $r(x)$.

	The depth and work bounds follow from the above bounds on the complexities of the individual steps and by observing that in each iteration, we add a distinct index--value pair (with non-$\infty$ value) to the list that after termination equals~$r(x)$.
\end{proof}

Based on Lemmas~\ref{lem:lists-short} and~\ref{lem:compute-r(x)}, Lemma~\ref{lem:computation-cheap} establishes that w.h.p.\ each of the intermediate results can be computed efficiently.
Any such intermediate result is of the form $r^V A_\mu y$ with
\begin{equation}\label{eq:computation-cheap}
	y = (r^V A_\mu)^f P_\mu
		\underbrace{\left( r^V \left(
			\bigoplus_{\lambda=0}^\Lambda P_\lambda (r^V A_\lambda)^d P_\lambda
		\right) \right)^h r^V x^{(0)}}_{x^{(h)}},
\end{equation}
where $x^{(h)} = r^V A_H^h x^{(0)}$ is the intermediate result of $h$ iterations on~$H$, $\mu \in \{ 0, \dots, \Lambda \}$ is a level, and $(r^V A_\mu)^f P_\mu$ represents another $f$ iterations in $G$ with edge weights stretched according to level~$\mu$.
The oracle uses this to simulate the $(h+1)$-th iteration on~$H$.

\begin{lemma}\label{lem:computation-cheap}
	Suppose $x^{(0)} \in \mathcal{D}^V$ is given by $(x_v)_w = 0$ for $v = w$ and $(x_v)_w = \infty$ everywhere else ($x^{(0)}_v$~is the $v$-th unit vector).
	For arbitrary $d, f, h, \mu \in \N$ with $\mu \leq \Lambda$, suppose that $y$ is defined as in~\eqref{eq:computation-cheap}.
	Then w.h.p., $r^V y$~and $r^V A_\mu y$ can be computed using $W \in \bigO(m \log^2 n)$ work and $D \in \bigO(\log^2 n)$ depth.
\end{lemma}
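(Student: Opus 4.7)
The plan rests on a single observation: although $y$ literally involves many applications of $r^V$ (which \emph{do} depend on the random node ordering), one can peel all of them off up to the congruence $\sim$, leaving a vector that is independent of the node ordering. Once this is established, Lemma~\ref{lem:lists-short} bounds the length of each (filtered) list, and Lemmas~\ref{lem:aggregation} and~\ref{lem:compute-r(x)} give the complexity bounds essentially for free.

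\textbf{Step~1 (reduce to an ordering-independent expression).} By Corollary~\ref{cor:rV-id}, each interior $r^V$ can be dropped without changing the equivalence class. Applied to the inner expression, and using Lemma~\ref{lem:decompose} to re-identify $\bigoplus_{\lambda=0}^\Lambda P_\lambda A_\lambda^d P_\lambda = A_H$, this gives
\[
y \;\sim\; A_\mu^f\, P_\mu\, A_H^h\, x^{(0)} \;=:\; y''.
\]
Every ingredient of $y''$ -- the matrices $A_G$, the hop-set, the level assignment (random but independent of the node permutation), the $P_\lambda$-projections, and the unit-vector initialization $x^{(0)}$ -- is independent of the random node ordering. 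Hence each coordinate $y''_v\in\mathcal{D}$ is itself independent of that ordering, and Lemma~\ref{lem:lists-short} yields $|r(y''_v)|\in\bigO(\log n)$ w.h.p. Since $r(y_v)=r(y''_v)$, the same bound holds for $|r(y_v)|$.

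\textbf{Step~2 (bound the list sizes of $y$ itself).} Inspecting the definition of $y$, for $f\geq 1$ the outermost operation is $r^V A_\mu$, so $y_v = r(\cdot)$ is already filtered and $|y_v| = |r(y_v)|\in\bigO(\log n)$ w.h.p. For $f=0$ we have $y = P_\mu x^{(h)}$, and $x^{(h)}$ is itself of the form $r^V(\cdot)$, so the same argument (applied with the exponent ``$f=0,\mu$'' replaced by ``$h$ on $H$'') gives $|x^{(h)}_v|\in\bigO(\log n)$ w.h.p., whence $|y_v|\leq|x^{(h)}_v|\in\bigO(\log n)$ w.h.p.

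\textbf{Step~3 (efficient computation).} Computing $r^V y$ is now trivial: each $y_v$ has at most $\bigO(\log n)$ non-$\infty$ entries, so by Lemma~\ref{lem:compute-r(x)} we spend $\bigO(\log^2 n)$ work and depth per node, total $\bigO(n\log^2 n)\subseteq\bigO(m\log^2 n)$ work at depth $\bigO(\log^2 n)$. For $r^V A_\mu y$, we first form, for every edge $\{v,w\}\in E$ (and the diagonal $v=w$), the scaled list $(a_\mu)_{vw}\odot y_w$, which merely shifts the $\bigO(\log n)$ entries of $y_w$ by a scalar; this costs $\bigO(m\log n)$ work and $\bigO(1)$ depth. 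For each $v$, aggregating the $\deg_G(v)+1$ resulting lists via Lemma~\ref{lem:aggregation} takes $\bigO((\deg_G(v)+1)\log^2 n)$ work and $\bigO(\log n)$ depth, giving $\bigO(m\log^2 n)$ total work. Finally, applying $r$ to each $(A_\mu y)_v$ -- a list of $\bigO(\deg_G(v)\log n)$ entries whose filtered version has $\bigO(\log n)$ entries w.h.p.\ by Step~1 applied to the shifted ``one more iteration'' expression -- costs $\bigO(\deg_G(v)\log^2 n)$ work and $\bigO(\log^2 n)$ depth by Lemma~\ref{lem:compute-r(x)}, summing to $\bigO(m\log^2 n)$ work. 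All bounds combine to $W\in\bigO(m\log^2 n)$ and $D\in\bigO(\log^2 n)$, as claimed.

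The only delicate point is Step~1: one must notice that the filters $r^V$, though individually order-dependent, can be dropped through $\sim$, and that the resulting vector $y''$ retains no dependence on the random permutation. Everything else is a routine application of the tools already developed in Sections~\ref{sec:mbf} and~\ref{sec:oracle}.
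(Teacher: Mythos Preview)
Your proof is correct and follows essentially the same approach as the paper. The only cosmetic difference is that the paper observes directly that $y = r^V y'$ as an \emph{equality} (not merely $y \sim y'$), since the outermost $r^V$ in the definition of $y$ is retained and $r^V$ maps equivalent inputs to identical outputs; this collapses your Steps~1 and~2 into a single line and makes the bound $|y_v|\in\bigO(\log n)$ immediate without the case distinction on $f$.
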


\begin{proof}
	By~\eqref{eq:filter-product} and~\eqref{eq:decompose-intermediate}, we may remove the intermediate filtering steps from~\eqref{eq:computation-cheap}, obtaining
	\begin{equation}
		y = r^V A_\mu^f P_\mu \left( \bigoplus_{\lambda=0}^\Lambda P_\lambda A_\lambda^d P_\lambda \right)^h x^{(0)}
			= r^V \underbrace{A_\mu^f P_\mu ~ A_H^h x^{(0)}}_{:= y'}.
	\end{equation}
	The key observation is that\dash---since the random order of $V$ only plays a role for $r$ and we removed all intermediate applications of~$r^V$\dash---$y'$~does not depend on that order.
	Hence, we may apply Lemma~\ref{lem:lists-short} which yields that for each $v \in V$, $|y_v| = |r(y'_v)| \in \bigO(\log n)$ w.h.p.
	Condition on $|y_v| \in \bigO(\log n)$ for all $v \in V$ in the following, which happens w.h.p.\ by Lemma~\ref{lem:whp}.

	As all $(r^V y)_v = r(y_v)$ can be computed in parallel, $r^V y$ can be computed using a depth of $\bigO(\max_{v \in V} |r(y_v)| \log n) \subseteq \bigO(\log^2 n)$ and $\bigO(\sum_{v \in V} |r(y_v)| |y_v|) \subseteq \bigO(n \log^2 n)$ work by Lemma~\ref{lem:compute-r(x)}.

	Regarding the second claim, i.e., the computation of $r^V A_\mu y$, we first compute each $(A_\mu y)_v$ in parallel for all $v \in V$.
	By Lemma~\ref{lem:aggregation} and because $|y_v| \in \bigO(\log n)$, this can be done using $\bigO(\log n)$ depth and work
	\begin{equation}
		\bigO\left( \sum_{v \in V} \sum_{\substack{w \in V \\ \{v,w\} \in E}} |y_w| \log n \right)
			\subseteq \bigO\left( \sum_{\{v,w\} \in E} \log^2 n \right)
			= \bigO(m \log^2 n).
	\end{equation}
	Here we use that propagation w.r.t.~$\mathcal{D}$\dash---uniformly increasing weights\dash---requires, due to $|y_v| \in \bigO(\log n)$, no more than $\bigO(1)$ depth and $\bigO(m \log n)$ work and is thus dominated by aggregation.
	To bound the cost of computing $r^V A_\mu y$ from $A_\mu y$ observe that we have
	\begin{equation}\label{eq:computation-cheap-list-bound}
		|(A_\mu y)_v|
			\in \bigO\left( \sum_{\substack{w \in V \\ \{v,w\} \in E}} |y_w| \right).
	\end{equation}
	Hence, by Lemma~\ref{lem:compute-r(x)} and due to conditioning on $|y_v| \in \bigO(\log n)$, we can compute $r^V A_\mu y$ in parallel for all $v \in V$ using $\bigO(\log^2 n)$ depth and
	\begin{equation}
		\bigO\left( \sum_{v \in V} |(A_\mu y)_v| \log n \right)
			\stackrel{\eqref{eq:computation-cheap-list-bound}}{\subseteq}
				\bigO\left( \sum_{v \in V} \sum_{\substack{w \in V \\ \{v,w\} \in E}} |y_w| \log n \right)
			\subseteq \bigO(m \log^2 n)
	\end{equation}
	work.
	Since all operations are possible using depth $D\in \bigO(\log^2 n)$ and work $W\in \bigO(m\log^2 n)$, and we condition only on an event that occurs w.h.p., this concludes the proof.
\end{proof}

\subsection{Metric Tree Embedding in Polylogarithmic Time and Near-Linear Work}
\label{sec:frt-result}

Determining \ac{LE} lists on $H$ yields a probabilistic tree embedding of $G$ with expected stretch $\bigO(\log n)$ (Section~\ref{sec:frt-embedding}), is the result of \iac{MBF-like} algorithm (Section~\ref{sec:frt-mbf}), and each iteration of this algorithm is efficient (Theorem~\ref{thm:oracle} and Section~\ref{sec:frt-efficient}).
We assemble these pieces in Theorem~\ref{thm:embedding}, which relies on $G$ containing a suitable hop set.
Corollaries~\ref{cor:embedding} and~\ref{cor:embedding-spanner} remove this assumption by invoking known algorithms to establish this property first.
Note that Theorem~\ref{thm:embedding} serves as a blueprint yielding improved tree embedding algorithms when provided with improved hop-set constructions.

\begin{theorem}\label{thm:embedding}
	Suppose we are given the weighted incidence list of a graph $G = (V, E, \weight)$ satisfying for some $\alpha \in \R_{\geq 1}$ and $d \in \N$ that $\dist(v,w,G) \leq \alpha \dist^d(v,w,G)$ for all $v,w \in V$.
	Then, w.h.p., we can sample a tree embedding of $G$ of expected stretch $\bigO(\alpha^{\bigO(\log n)} \log n)$ with depth $\bigO(d \log^4 n) \subset \bigOT(d)$ and work $\bigO(m(d + \log n) \log^5 n) \subset \bigOT(md)$.
\end{theorem}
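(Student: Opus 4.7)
The plan is to assemble four ingredients: the implicit construction of $H$ from Section~\ref{sec:h}, the MBF-like formulation of LE list computation from Section~\ref{sec:frt-mbf}, the oracle of Theorem~\ref{thm:oracle}, and the explicit tree-from-LE-list construction of Lemma~\ref{lem:tree-explicit}. First I would pick $\beta \in [1, 2)$ uniformly at random and generate a uniformly random total order on $V$ by drawing independent random strings of $\bigO(\log n)$ bits at each node; this fixes the randomness of steps~\ref{item:frt-random-begin}--\ref{item:frt-random-end} of the \ac{FRT} procedure at $\bigO(\log n)$ depth and $\bigO(n \log n)$ work. Since $G$ already satisfies the hop-set-style guarantee with $\hat\epsilon := \alpha - 1$, I can sample the vertex levels $\lambda(v)$ as in Section~\ref{sec:h} in $\bigO(\log n)$ depth and $\bigO(n \log n)$ work, implicitly defining $H$. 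By Theorem~\ref{thm:h}, w.h.p.\ $\spd(H) \in \bigO(\log^2 n)$ and $\dist(v,w,H) \leq \alpha^{\bigO(\log n)} \dist(v,w,G)$ for all $v,w \in V$.

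Next I would invoke the oracle of Theorem~\ref{thm:oracle} with the MBF-like algorithm of Definition~\ref{def:le}, whose fixed point $\mathcal{A}(H) = r^V A_H^{\spd(H)} x^{(0)}$ yields exactly the \ac{LE} lists of $H$ required in step~\ref{item:frt-tree-begin} of the \ac{FRT} construction: for each $v \in V$, $(x^{(\spd(H))}_v)_w = \dist(v,w,H)$, and applying $r$ retains precisely the pairs $(w, \dist(v,w,H))$ that are not dominated by a smaller-index vertex. To plug into Theorem~\ref{thm:oracle}, I need per-iteration bounds on the complexities of evaluating $r^V y$ and $r^V A_\lambda y$ for arbitrary intermediate state vectors $y$ arising during the simulation; this is precisely what Lemma~\ref{lem:computation-cheap} provides, giving $D \in \bigO(\log^2 n)$ and $W \in \bigO(m \log^2 n)$ w.h.p. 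Substituting into Theorem~\ref{thm:oracle}(2) yields w.h.p.\ depth $\bigO((dD + \log n) \log^2 n) \subseteq \bigO(d \log^4 n)$ and work $\bigO((d + \log n) W \log^3 n) \subseteq \bigO(m(d + \log n) \log^5 n)$ for producing all \ac{LE} lists of $H$.

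Given the \ac{LE} lists, Lemma~\ref{lem:tree-explicit} constructs the \ac{FRT} tree in $\bigO(\log^2 n)$ depth and $\bigO(n \log^3 n)$ work, which is subsumed by the oracle's cost. For the stretch, \ac{FRT}~\cite{frt-tbaamtm-04} yields expected stretch $\bigO(\log n)$ relative to distances in the metric on which it is applied, here $\dist(\cdot,\cdot,H)$; composing with $\dist(v,w,G) \leq \dist(v,w,H) \leq \alpha^{\bigO(\log n)} \dist(v,w,G)$ from Theorem~\ref{thm:h} and linearity of expectation gives the claimed expected stretch $\bigO(\alpha^{\bigO(\log n)} \log n)$ with respect to $G$.

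The main subtlety, and essentially the only non-routine step, is verifying that the oracle's preconditions are actually met by the LE-list algorithm on $H$. Concretely, one must observe that each state vector arising inside the oracle has the form~\eqref{eq:computation-cheap}, so that Lemma~\ref{lem:computation-cheap}'s w.h.p.\ bounds of $\bigO(\log n)$ on LE-list length apply uniformly to all intermediate $y$; this in turn rests on the fact that $y$, prior to any application of $r^V$, does not depend on the random node order, which is exactly the setting in which Lemma~\ref{lem:lists-short} applies. Once this is observed, the remaining work is bookkeeping: everything happens w.h.p., and Lemma~\ref{lem:whp} closes the argument by ensuring that all polynomially many w.h.p.\ events hold simultaneously.
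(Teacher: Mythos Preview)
Your proposal is correct and follows essentially the same route as the paper's proof: invoke Lemma~\ref{lem:computation-cheap} to obtain $D \in \bigO(\log^2 n)$ and $W \in \bigO(m\log^2 n)$, plug these into Theorem~\ref{thm:oracle} to compute the \ac{LE} lists of $H$, combine the \ac{FRT} stretch guarantee with Theorem~\ref{thm:h}, and finish with Lemma~\ref{lem:tree-explicit}. Your write-up is in fact more explicit than the paper's about the randomness setup and the role of Lemma~\ref{lem:lists-short} in ensuring the oracle's per-iteration hypotheses hold, but the underlying argument is the same.
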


\begin{proof}
	By Lemma~\ref{lem:computation-cheap}, we can apply Theorem~\ref{thm:oracle} with $D \in \bigO(\log^2 n)$ and $W \in \bigO(m \log^2 n)$, showing that we can compute the \ac{LE} lists of $H$ using depth $\bigO(d \log^4 n)$ and work $\bigO(m(d + \log n) \log^5 n)$.
	As shown in~\cite{frt-tbaamtm-04}, the \ac{FRT} tree $T$ represented by these lists has expected stretch $\bigO(\log n)$ w.r.t.\ the distance metric of~$H$.
	By Theorem~\ref{thm:h}, w.h.p.\ $\dist(v,w,G) \leq \dist(v,w,H) \leq \alpha^{\bigO(\log n)} \dist(v,w,G)$ and hence
	\begin{equation}
		\dist(v,w,G)
			\leq \dist(v,w,T)
			\in \bigO\left( \alpha^{\bigO(\log n)} \log n ~ \dist(v,w,G) \right)
	\end{equation}
	in expectation (compare Definition~\ref{def:embedding}).
	Observe that by Lemma~\ref{lem:tree-explicit}, explicitly constructing the \ac{FRT} tree is possible within the stated bounds.
\end{proof}

As stated above, we require $G$ to contain a $(d, 1 / \polylog n)$-hop set with $d \in \polylog n$ in order to achieve polylogarithmic depth.
We also need to determine such a hop set using $\polylog n$ depth and near-linear work in~$m$, and that it does not significantly increase the problem size by adding too many edges.
Cohen's hop sets~\cite{c-ptnlwasusp-00} meet all these requirements, yielding the following corollary.

\begin{corollary}\label{cor:embedding}
	Given the weighted incidence list of a graph $G$ and an arbitrary constant $\epsilon > 0$, we can w.h.p.\ sample a tree embedding of expected stretch $\bigO(\log n)$ using depth $\polylog n$ and work $\bigOT(m^{1 + \epsilon})$.
\end{corollary}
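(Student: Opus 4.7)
The plan is to reduce to Theorem~\ref{thm:embedding} by first installing a suitable hop set on top of $G$, then choosing its approximation parameter small enough to keep the exponentiated stretch under control.

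First, I would invoke Cohen's hop-set construction~\cite{c-ptnlwasusp-00} on the input graph $G$, with parameters chosen so as to obtain a $(d,\hat\epsilon)$-hop set with $d \in \polylog n$, $\hat\epsilon \in 1/\polylog n$, and $\bigOT(m^{1+\epsilon})$ added edges, all computed w.h.p.\ in $\polylog n$ depth and $\bigOT(m^{1+\epsilon})$ work. Call the augmented graph $G'$; note that since the added edges have weights equal to distances in $G$, we have $\dist(v,w,G')=\dist(v,w,G)$, while \eqref{eq:hop-set} guarantees $\dist^d(v,w,G') \leq (1+\hat\epsilon)\dist(v,w,G)$ for all $v,w \in V$.

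Second, I would apply Theorem~\ref{thm:embedding} to $G'$ with $\alpha = 1+\hat\epsilon$. The theorem yields a tree embedding of expected stretch $\bigO(\alpha^{\bigO(\log n)} \log n)$ computed w.h.p.\ in depth $\bigOT(d) \subseteq \polylog n$ and work $\bigOT(m'\cdot d)$, where $m'\in \bigOT(m^{1+\epsilon})$ is the number of edges of $G'$; this gives a total work of $\bigOT(m^{1+\epsilon})$. The crucial parameter choice is to pick the exponent of $\polylog n$ in $\hat\epsilon$ large enough so that
\begin{equation}
\alpha^{\bigO(\log n)} = (1+\hat\epsilon)^{\bigO(\log n)} \leq e^{\hat\epsilon \cdot \bigO(\log n)} = e^{1/\polylog n} = 1 + \bigo(1),
\end{equation}
exactly as in the argument leading to~\eqref{eq:h-stretch-o1}. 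Then the resulting expected stretch is $\bigO(\log n)$ as required.

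Finally, combining the complexity bounds of the hop-set computation and of Theorem~\ref{thm:embedding}, and noting that both contribute $\polylog n$ depth and $\bigOT(m^{1+\epsilon})$ work, yields the claim. There is no real obstacle here beyond bookkeeping: the heavy lifting\dash---both reducing the \ac{SPD} via $H$ and simulating \ac{MBF-like} iterations on $H$ without materialising it\dash---has already been absorbed into Theorem~\ref{thm:embedding}, and Cohen's hop set is exactly what is needed to instantiate its hypothesis. The only delicate point is ensuring $\hat\epsilon$ is chosen small enough that the $\bigO(\log n)$-fold exponentiation remains a $(1+\bigo(1))$ factor, which is why we must have $\hat\epsilon \in 1/\polylog n$ with an exponent we can select, a flexibility that Cohen's construction indeed provides.
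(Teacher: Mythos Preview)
Your proposal is correct and follows essentially the same route as the paper's proof: apply Cohen's hop-set construction to obtain $G'$ with $d\in\polylog n$ and $\hat\epsilon\in 1/\polylog n$, then invoke Theorem~\ref{thm:embedding} on $G'$ with $\alpha=1+\hat\epsilon$, choosing $\hat\epsilon$ small enough that $\alpha^{\bigO(\log n)}$ contributes only a constant factor to the stretch. The paper is slightly terser and phrases the parameter choice as $\alpha\in 1+\bigO(1/\log n)$ rather than your $\hat\epsilon\in 1/\polylog n$ with a sufficiently large exponent, but the argument and the bookkeeping are the same.
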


\begin{proof}
	We apply the hop-set construction by Cohen~\cite{c-ptnlwasusp-00} to $G = (V, E, \weight)$ to w.h.p.\ determine an intermediate graph $G'$ with vertices $V$ and an additional $\bigOT(m^{1 + \epsilon})$ edges.
	The algorithm guarantees $\dist(v,w,G) \leq \alpha \dist^d(v,w,G')$ for $d \in \polylog n$ and $\alpha \in 1 + 1 / \polylog n$ (where the $\polylog n$ term in $\alpha$ is under our control), and has depth $\polylog n$ and work $\bigOT(m^{1 + \epsilon})$.
	Choosing $\alpha \in 1 + \bigO(1 / \log n)$ and applying Theorem~\ref{thm:embedding}, the claim follows due to Equation~\eqref{eq:h-stretch-o1}.
\end{proof}

Adding a hop set to~$G$, embedding the resulting graph in~$H$, and sampling \iac{FRT} tree on $H$ is a 3-step sequence of embeddings of~$G$.
Still, in terms of stretch, the embedding of Corollary~\ref{cor:embedding} is\dash---up to a factor in $1 + \bigo(1)$\dash---as good as directly constructing \iac{FRT} tree of~$G$:
\begin{enumerate*}
\item
	Hop sets do not stretch distances.

\item
	By Theorem~\ref{thm:h} and Equation~\eqref{eq:h-stretch-o1}, $H$~introduces a stretch of $1 + 1 / \polylog n$.

\item
	Together, this ensures that the expected stretch of the \ac{FRT} embedding w.r.t.\ $G$ is $\bigO(\log n)$.
\end{enumerate*}

It is possible to reduce the work at the expense of an increased stretch by first applying the spanner construction by Baswana and Sen~\cite{bs-sltracsswg-07}:

\begin{corollary}\label{cor:embedding-spanner}
	Suppose we are given the weighted incidence list of a graph~$G$.
	Then, for any constant $\epsilon > 0$ and any $k \in \N$, we can w.h.p.\ compute a tree embedding of $G$ of expected stretch $\bigO(k \log n)$ using depth $\polylog n$ and work $\bigOT(m + n^{1 + 1/k + \epsilon})$.
\end{corollary}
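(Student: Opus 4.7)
The plan is to reduce to Corollary~\ref{cor:embedding} by first sparsifying $G$ with a Baswana--Sen spanner, mirroring the proof of Theorem~\ref{thm:apsp2}. Concretely, I would first invoke the algorithm of~\cite{bs-sltracsswg-07} to w.h.p.\ obtain in $\polylog n$ depth and $\bigOT(m)$ work a subgraph $G' = (V, E', \weight)$ with $|E'| \in \bigOT(n^{1+1/k})$ and
\begin{equation}
\dist(v, w, G) \leq \dist(v, w, G') \leq (2k-1) \dist(v, w, G)
\end{equation}
for all $v, w \in V$. Without loss of generality one may assume $k \in \bigO(\log n)$, since otherwise the target work bound already exceeds $n^2$ and one could instead apply Corollary~\ref{cor:embedding} directly to $G$.

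Next, I would pick a constant $\epsilon' \in (0, \epsilon)$ small enough that $(1 + 1/k)(1 + \epsilon') \leq 1 + 1/k + \epsilon$ uniformly in $k \geq 1$; for instance, $\epsilon' := \epsilon/2$ suffices, because $(1+1/k)(1+\epsilon/2) = 1 + 1/k + \epsilon/2 + \epsilon/(2k) \leq 1 + 1/k + \epsilon$ for every $k \geq 1$. Applying Corollary~\ref{cor:embedding} to $G'$ with parameter~$\epsilon'$ then yields, w.h.p.\ in $\polylog n$ depth and $\bigOT(|E'|^{1+\epsilon'}) \subseteq \bigOT(n^{1+1/k+\epsilon})$ work, a random tree $T$ with $V \subseteq V_T$ satisfying $\dist(v,w,G') \leq \dist(v,w,T)$ pointwise and $\E[\dist(v,w,T)] \in \bigO(\log n) \cdot \dist(v,w,G')$.

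Finally, I would compose the two embeddings. Chaining the spanner bound with the tree-embedding bound gives $\dist(v,w,G) \leq \dist(v,w,T)$ and
\begin{equation}
\E[\dist(v,w,T)] \in \bigO(\log n) \cdot \dist(v,w,G') \subseteq \bigO(k \log n) \cdot \dist(v,w,G),
\end{equation}
so $T$ is a tree embedding of $G$ of expected stretch $\bigO(k \log n)$ in the sense of Definition~\ref{def:embedding}. The total work is $\bigOT(m) + \bigOT(n^{1+1/k+\epsilon}) = \bigOT(m + n^{1+1/k+\epsilon})$ and the depth is $\polylog n$, as required. I do not expect any real obstacle here: the argument is essentially a bookkeeping step on top of Corollary~\ref{cor:embedding} and the off-the-shelf spanner construction, with the only mild subtlety being the choice of $\epsilon' < \epsilon$ that absorbs the exponent blow-up $(1+1/k)(1+\epsilon')$ into the target exponent $1+1/k+\epsilon$.
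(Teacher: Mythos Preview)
Your proposal is correct and follows essentially the same approach as the paper: apply the Baswana--Sen spanner to sparsify, then invoke Corollary~\ref{cor:embedding} on the spanner and compose the stretches. Your explicit choice of $\epsilon' = \epsilon/2$ to control the exponent is a nice detail the paper glosses over; the only minor slip is your justification for assuming $k \in \bigO(\log n)$ (the target work bound \emph{shrinks} for large $k$, not grows --- the real reason is that the spanner size $kn^{1+1/k}$ stops improving past $k \approx \log n$), but this does not affect the argument.
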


\begin{proof}
	The algorithm of Baswana and Sen~\cite{bs-sltracsswg-07} computes a $(2k - 1)$-spanner of $G = (V, E, \weight)$, i.e., a subgraph $G' = (V, E', \weight)$ satisfying for all $v,w \in V$ that $\dist(v, w, G) \leq \dist(v, w, G') \leq (2k - 1) \dist(v, w, G)$ using $\polylog n$ depth and $\bigOT(m)$ work.
	We argue in the proof of Theorem~\ref{thm:apsp2} that $|E'| \in \bigOT(n^{1 + 1/k})$ w.h.p.
	The claim follows from applying Corollary~\ref{cor:embedding} to~$G'$.
\end{proof}

\subsection{Reconstructing Paths from Virtual Edges}
\label{sec:frt-datastructure}

Given that we only deal with distances and not with paths in the \ac{FRT} construction, there is one concern:
Consider an arbitrary graph $G = (V, E, \weight)$, its augmentation with a hop set resulting in~$G'$, which is then embedded into the complete graph~$H$, and finally into \iac{FRT} tree $T = (V_T, E_T, \weight_T)$.
How can an edge $e \in E_T$ of weight $\weight_T(e)$ be mapped to a path $p$ in $G$ with $\weight(p) \leq \weight_T(H)$?
Note that this question has to be answered in polylogarithmic depth and without incurring too much memory overhead.
Our purpose is not to provide specifically tailored data structures, but we propose a three-step approach that maps edges in $T$ to paths in~$H$, edges in $H$ to paths in~$G$, and finally edges from $G'$ to paths in~$G$.

Concerning a tree edge $e \in E_T$, observe that $e$ maps back to a path $p$ of at most $\spd(H)$ hops in $H$ with $\weight_H(p) \leq 3 \weight_T(e)$ as follows.
First, to keep the notation simple, identify each tree node\dash---given as tuple $(v_i, \dots, v_j)$\dash---with its ``leading'' node $v_i \in V$;
in particular, each leaf has $i = 0$ and is identified with the node in $V$ that is mapped to it.
A leaf $v_0$ has \iac{LE} entry $(\dist(v_0, v_1, H), v_1)$ and we can trace the shortest $v_0$-$v_1$-path in $H$ based on the \ac{LE} lists (nodes locally store the predecessor of shortest paths just like in \ac{APSP}).
Moreover, $\dist(v_i, v_{i+1}, H) \leq \weight_T(v_i, v_{i+1})$, i.e., we may map the tree edge back to the path without incurring larger cost than in~$T$.
If $i > 0$, $v_i$~and $v_{i+1}$ are inner nodes.
Choose an arbitrary leaf $v_0$ that is a common descendant (this choice can, e.g., be fixed when constructing the tree from the \ac{LE} list without increasing the asymptotic bounds on depth or work).
We then can trace shortest paths from $v_0$ to $v_i$ and from $v_0$ to $v_{i+1}$ in~$H$, respectively.
The cost of their concatenation is $\dist(v_0, v_i, H) + \dist(v_0, v_{i+1}, H) \leq \beta 2^i + \beta 2^{i+1} = 3(\beta 2^i) = 3\weight_T(v,w)$ by the properties of \ac{LE} lists and the \ac{FRT} embedding.
Note that, due to the identification of each tree node with its ``leading'' graph node, paths in $T$ map to concatenable paths in~$H$.

Regarding the mapping from edges in $H$ to paths in~$G$, recall that we compute the \ac{LE} lists of $H$ by repeated application of the operations~$r^V$, $\oplus$, $P_\lambda$, and $A_\lambda$ with $0 \leq \lambda \leq \Lambda$.
Observe that~$r^V$, $\oplus$, and $P_\lambda$ discard information, i.e., distances to nodes that do not make it into the final \ac{LE} lists and therefore are irrelevant to routing.
$A_\lambda$, on the other hand, is \iac{MBF} step.
Thus, we may store the necessary information for backtracing the induced paths at each node;
specifically, we can store, for each iteration $h \in \bigO(\log^2 n)$ w.r.t.~$H$, each of the intermediate $d$ iterations in~$G$, and each $\lambda \in \bigO(\log n)$, the state vector $y$ of the form in Equation~\eqref{eq:computation-cheap} in a lookup table.
This requires $\bigOT(d)$ memory and efficiently maps edges of $H$ to $d$-hop paths in~$G$\dash---or rather to $d$-hop paths in $G'$, if we construct $H$ after augmenting $G$ to $G'$ using a hop set.

Mapping edges of $G'$ to edges in $G$ depends on the hop set.
Cohen~\cite{c-ptnlwasusp-00} does not discuss this in her article, but her hop-set edges can be efficiently mapped to paths in the original graph by a lookup table:
Hop-set edges either correspond to a shortest path in a small cluster, or to a cluster that has been explored using polylogarithmic depth.
Regarding other hop-set algorithms, we note that many techniques constructing hop set edges using depth $D$ allow for reconstruction of corresponding paths at depth $\bigO(D)$, i.e., that polylogarithmic-depth algorithms are compatible analogously to Cohen's hop sets.
For instance, this is the case for the hop-set construction by Henziger et~al.~\cite{hkn-atdacsssp-15}, which we leverage in Section~\ref{sec:distributed-henziger}.

\section{Distributed \acs{FRT} Construction}
\label{sec:distributed}

Distributed algorithms for constructing \acs{FRT}-type tree embeddings in the Congest model are covered by our framework as well.
In the following, we recap two existing algorithms~\cite{gl-nodte-14,kkmpt-edaapte-12}\dash---our framework allows to do this in a very compact way\dash---and improve upon the state of the art reducing a factor of $n^\epsilon$ in the currently best known round complexity for expected stretch $\bigO(\log n)$~\cite{gl-nodte-14} to $n^{o(1)}$.
We use the hop set of Henzinger et~al.~\cite{hkn-atdacsssp-15} instead of Cohen's~\cite{c-ptnlwasusp-00}, because it is compatible with the Congest model.
Note that replacing the hop set is straightforward since our theorems in the previous sections are formulated w.r.t.\ generic $(d, \hat\epsilon)$-hop sets.

\paragraph{The Congest Model}

We refer to Peleg~\cite{p-dclsa-00} for a formal definition of the Congest model, but briefly outline its core aspects.
The Congest model is a model of computation that captures distributed computations performed by the nodes of a graph, where communication is restricted to its edges.
Each node is initialized with a unique ID of $\bigO(\log n)$ bits, knows the IDs of its adjacent nodes along with the weights of the corresponding incident edges, and ``its'' part of the input (in our case the input is empty);
each node has to compute ``its'' part of the output (in our case, as detailed in Section~\ref{sec:frt-embedding}, its \ac{LE} list).
Computations happen in rounds, and we are interested in how many rounds it takes for an algorithm to complete.
In each round, each node does the following:
\begin{enumerate}
\item
	Perform finite, but otherwise arbitrary local computations.

\item
	Send a message of $\bigO(\log n)$ bits to each neighboring node.

\item
	Receive the messages sent by neighbors.
\end{enumerate}
Recall that, by assumption, edge weights can be encoded using $\bigO(\log n)$ bits, i.e., an index--distance pair can be encoded in a single message.

\paragraph{Overview}

Throughout this section, let $G = (V, E, \weight)$ be a weighted graph and denote, for any graph~$G$, by $A_G \in (\Rdist)^{V \times V}$ its adjacency matrix according to Equation~\eqref{eq:minplus-adjacencymatrix}.
Fix the semiring $\mathcal{S} = \mathcal{S}_{\min,+}$, the zero-preserving semimodule $\mathcal{M} = \mathcal{D}$ from Definition~\ref{def:distance-map}, as well as $r$,~$\sim$, and $x^{(0)}$ as given in Definition~\ref{def:le}.

Sections~\ref{sec:distributed-kahn} and~\ref{sec:distributed-ghaffari} briefly summarize the distributed \ac{FRT} algorithms by Kahn et~al.~\cite{kkmpt-edaapte-12} and Ghaffari and Lenzen~\cite{gl-nodte-14}, respectively.
We use these preliminaries, our machinery, and a distributed hop-set construction due to Henziger et~al.~\cite{hkn-atdacsssp-15} in Section~\ref{sec:distributed-henziger} to propose an algorithm that reduces a multiplicative overhead of $n^\epsilon$ in the round complexity of~\cite{gl-nodte-14} to $n^{o(1)}$.

\subsection{The Algorithm by Khan et~al.}
\label{sec:distributed-kahn}

In our terminology, the algorithm of Khan et~al.~\cite{kkmpt-edaapte-12} performs $\spd(G)$ iterations of the \ac{MBF-like} algorithm for collecting \ac{LE} lists implied by Definition~\ref{def:le}, i.e.,
\begin{equation}
	r^V A_G^{\spd(G)} x^{(0)}
		\stackrel{\eqref{eq:filter-product}}{=} \left( r^V A_G \right)^{\spd(G)} x^{(0)}.
\end{equation}
It does so in $\spd(G) + 1$ iterations by initializing $x^{(0)}$ as in Equation~\eqref{eq:le-x0} and iteratively computing $x^{(i+1)} := r^V A_G x^{(i)}$ until a fixpoint is reached, i.e., until $x^{(i+1)} = x^{(i)}$.
As $(r^V A_G)^i x^{(0)} = r^V A_G^i x^{(0)}$, Lemma~\ref{lem:lists-short} shows that w.h.p.\ $|x^{(i)}_v| \in \bigO(\log n)$ for all $0 \leq i \leq \spd(G)$ and all $v \in V$.
Therefore, $v \in V$ can w.h.p.\ transmit $x^{(i)}_v$ to all of its neighbors using $\bigO(\log n)$ messages, and upon reception of its neighbors' lists locally compute $x^{(i+1)}_v$.
Thus, each iteration takes $\bigO(\log n)$ rounds w.h.p., implying the round complexity of $\bigO(\spd(G) \log n)$ w.h.p.\ shown in~\cite{kkmpt-edaapte-12}.

\subsection{The Algorithm by Ghaffari and Lenzen}
\label{sec:distributed-ghaffari}

The strongest lower bound regarding the round complexity for constructing a (low-stretch) metric tree embedding of $G$ in the Congest model is $\bigOmegaT(\sqrt{n} + \diam(G))$~\cite{dhkknppw-dvhda-12,gl-nodte-14}.
If $\spd(G) \gg \max\{\diam(G), \sqrt{n}\}$, one may thus hope for a solution that runs in $\bigoT(\spd(G))$ rounds.
For any $\epsilon \in \R_{>0}$, in~\cite{gl-nodte-14} it is shown that expected stretch $\bigO(\varepsilon^{-1}\log n)$ can be achieved in $\bigOT(n^{1/2 + \epsilon} + \diam(G))$ rounds;
below we summarize this algorithm.

The strategy is to first determine the \ac{LE} lists of a constant-stretch metric embedding of (the induced submetric of) an appropriately sampled subset of~$V$. The resulting graph is called the skeleton spanner, and its LE lists are then used to jump-start the computation on the remaining graph.
When sampling the skeleton nodes in the right way, stretching non-skeleton edges analogously to Section~\ref{sec:h}, and fixing a shortest path for each pair of vertices, w.h.p.\ all of these paths contain a skeleton node within a few hops.
Ordering skeleton nodes before non-skeleton nodes w.r.t.\ the random ordering implies that each \ac{LE} list has a short prefix accounting for the local neighborhood, followed by a short suffix containing skeleton nodes only.
This is due to the fact that skeleton nodes dominate all non-skeleton nodes for which the respective shortest path passes through them.
Hence, no node has to learn information that is further away than~$d_S$, an upper bound on the number of hops when a skeleton node is encountered on a shortest path that holds w.h.p.

\paragraph*{The Graph \texorpdfstring{$H$}{H}}

In~\cite{gl-nodte-14}, $G$ is embedded into $H$ and \iac{FRT} tree is sampled on~$H$, where $H$ is derived as follows.
Abbreviate $\ell := \lceil \sqrt{n} \rceil$.
For a sufficiently large constant~$c$, sample $\lceil c \ell \log n \rceil$ nodes uniformly at random;
call this set~$S$.
Define the \emph{skeleton graph}
\begin{align}
	G_S &:=
		(S, E_S, \weight_S)\text{, where}\label{eq:gs-begin} \\
	E_S &:=
		\left\{ \{s,t\} \in \binom{S}{2} \mid \dist^\ell(s, t, G) < \infty \right\}\text{ and} \\
	\weight_S(s, t) &\mapsto
		\dist^{\ell}(s, t, G).\label{eq:gs-end}
\end{align}
Then w.h.p.\ $\dist(s, t, G_S) = \dist(s, t, G)$ for all $s,t \in S$ (Lemma~4.6 of~\cite{lp-frtcusm-13}).
For $k \in \Theta(\epsilon^{-1})$, construct a $(2k-1)$-spanner
\begin{equation}
	G'_S := (S, E'_S, \weight_S)
\end{equation}
of the skeleton graph $G_S$ that has $\bigOT(\ell^{1+1/k}) \subseteq \bigOT(n^{1/2+\epsilon})$ edges w.h.p.\ (Lemma~4.9 of~\cite{lp-frtcusm-13}).
Define
\begin{align}
	H &:=
		(V, E_H, \weight_H)\text{, where}\label{eq:ghaffari-h-begin} \\
	E_H &:=
		E'_S \cup E\text{, and} \\
	\weight_H(e) &\mapsto \begin{cases}
			\weight_S(e)        & \text{if $e \in E'_S$ and} \\
			(2k - 1) \weight(e) & \text{otherwise.}
		\end{cases}\label{eq:ghaffari-h-end}
\end{align}
By construction, $G$~embeds into $H$ with a stretch of $2k - 1$ w.h.p., i.e., $\dist(v, w, G) \leq \dist(v, w, H) \leq (2k - 1) \dist(v, w, G)$.
Computing an \iac{FRT} tree $T$ of $H$ of expected stretch $\bigO(\log n)$ thus implies that $G$~embeds into $T$ with expected stretch $\bigO(k \log n) = \bigO(\epsilon^{-1} \log n)$.

\paragraph*{\acs{FRT} Trees of \texorpdfstring{$H$}{H}}

Observe that min-hop shortest paths in $H$ contain only a single maximal subpath consisting of spanner edges, where the maximal subpaths of non-spanner edges have at most $\ell$ hops w.h.p.
This follows analogously to Lemma~\ref{lem:prefix} with $2$ levels and a sampling probability of $\bigThetaT(1 / \ell)$.
Assuming $s < v$ for all $s \in S$ and $v \in V \setminus S$\dash---we discuss this below\dash---for each $v \in V$ and each entry $(w, \dist(v, w, H))$ of its \ac{LE} list, w.h.p.\ there is a min-hop shortest $v$-$w$-path with a prefix of $\ell$ non-spanner edges followed by a shortest path in~$G'_S$.
This entails that w.h.p.
\begin{equation}\label{eq:gl-algebra}
	r^V A_H^{\spd(H)} x^{(0)}
		= r^V A_{G,2k-1}^{\ell} A_{G'_S}^{|S|} x^{(0)}
		= r^V A_{G,2k-1}^{\ell} \underbrace{\left( r^V A_{G'_S}^{|S|} x^{(0)} \right)}_{=: \bar{x}^{(0)}},
\end{equation}
where $A_{G,s}$ is $A_G$ with entries stretched by factor of $s \in \Rdist$ and we extend $A_{G'_S}$ to be a $V \times V$ matrix by setting $(A_{G'_S})_{vw} = \infty$ if $v\neq w\in V\setminus S$ and $(A_{G'_S})_{vv}=0$ for $v\in V\setminus S$.

In order to construct \iac{FRT} tree, suppose we have sampled uniform permutations of $S$ and $V \setminus S$, and a random choice of~$\beta$.
We extend the permutations to a permutation of $V$ by ruling that for all $s \in S$ and $v \in V \setminus S$, we have $s < v$, fulfilling the above assumption.
Lemma~4.9 of~\cite{gl-nodte-14} shows that the introduced dependence between the topology of $H$ and the resulting permutation on $V$ does not increase the expected stretch of the embedding beyond $\bigO(\log n)$.
The crucial advantage of this approach lies in the fact that now the \ac{LE} lists of nodes in $S$ may be used to jump-start the construction of \ac{LE} lists for $H$, in accordance with \eqref{eq:gl-algebra}.

\paragraph*{The Algorithm}

In~\cite{gl-nodte-14}, it is shown that \ac{LE} lists of $H$ can be determined fast in the Congest model as follows.
\begin{enumerate}
\item
	Some node $v_0$ starts by broadcasting $k$ and a random choice of~$\beta$, constructing \iac{BFS} tree on the fly.
	Upon receipt, each node generates a random ID of $\bigO(\log n)$ bits which is unique w.h.p.
	Querying the amount of nodes with an ID of less than some threshold via the \ac{BFS} tree, $v_0$~determines the bottom $\ell$ node IDs via binary search;
	these nodes form the set $S$ and satisfy the assumption that went into Equation~\eqref{eq:gl-algebra}.
	All of these operations can be performed in $\bigOT(\diam(G))$ rounds.

\item
	The nodes in $S$ determine $G'_S$, which is possible in $\bigOT(\diam(G) + \ell^{1 + 1/k}) \subseteq \bigOT(\diam(G) + n^{1/2 + \epsilon})$ rounds, such that all $v \in V$ learn $E'_S$ and $\weight_S$~\cite{gl-nodte-14,lp-frtcusm-13}.
	After that, $G'_S$~is global knowledge and each $v \in V$ can locally compute~$\bar{x}^{(0)}_v$.

\item\label{enum:ghaffari-final}
	Subsequently, nodes w.h.p.\ determine their component of $r^V A_{G,2k-1}^\ell \bar{x}^{(0)} = (r^V A_{G,2k-1})^\ell \bar{x}^{(0)}$ via $\ell$ \ac{MBF-like} iterations of
	\begin{equation}
		\bar{x}^{(i+1)} := r^V A_{G,2k-1} \bar{x}^{(i)}.
	\end{equation}
	Here, one exploits that for all~$i$, $|\bar{x}^{(i)}_v| \in \bigO(\log n)$ w.h.p.\ by Lemma~\ref{lem:lists-short},\footnote{%
		We apply Lemma~\ref{lem:lists-short} twice, as it requires $x \in \mathcal{D}$ to be independent of the permutation.
		First consider a computation initialized with $y^{(0)}_{vw} := 0$ if $v = w \in S$ and $y^{(0)}_{vw} := \infty$ else.
		By Lemma~\ref{lem:lists-short}, we have $|y_v^{(i)}| \in \bigO(\log n)$ w.h.p.\ for all $y^{(i)} := r^V A_{H_S}^i y^{(0)}$ and iterations $i \in \{1, \dots, |S|\}$.
		Analogously, apply Lemma~\ref{lem:lists-short} to $z^{(i)} := r^V A_{G,2k-1}^i z^{(0)}$, $i \in \{1, \dots, \ell\}$ with $z^{(0)}_{vw} := 0$ if $v = w \in V \setminus S$ and $z^{(0)}_{vw} := \infty$ else;
		this yields that $|z^{(i)}_v| \in \bigO(\log n)$ for all $v \in V$ w.h.p., too.
		As we have $x_v^{(i)} = r^V (y_v^{(j)} \oplus z_v^{(k)})$ for all $v \in V$ and appropriate $i,j,k \in \N$, we obtain $|x_v^{(i)}| \in \bigO(\log n)$ w.h.p.%
	} and thus each iteration can be performed by sending $\bigO(\log n)$ messages over each edge, i.e., in $\bigO(\log n)$ rounds;
	the entire step hence requires $\bigOT(\ell) \subseteq \bigOT(n^{1/2})$ rounds.
\end{enumerate}
Together, this w.h.p.\ implies the round complexity of $\bigOT(n^{1/2 + \epsilon} + \diam(G))$ for an embedding of expected stretch $\bigO(\epsilon^{-1}\log n)$.

\subsection{Achieving Stretch \texorpdfstring{$\bigO(\log n)$}{O(log n)} in Near-Optimal Time}
\label{sec:distributed-henziger}

The multiplicative overhead of $n^{\epsilon}$ in the round complexity is due to constructing and broadcasting the skeleton spanner~$G'_S$.
We can improve upon this by relying on hop sets, just as we do in our parallel construction.
Henziger et~al.~\cite{hkn-atdacsssp-15} show how to compute an $(n^{\bigo(1)}, \bigo(1))$-hop set of the skeleton graph in the Congest model using $n^{1/2 + \bigo(1)} + \diam(G)^{1 + \bigo(1)}$ rounds.

Our approach is similar to the one outlined in Section~\ref{sec:distributed-ghaffari}. The key difference is that we replace the use of a spanner by combining a hop set of the skeleton graph with the construction from Section~\ref{sec:h}; using the results from Section~\ref{sec:oracle}, we can then efficiently construct the LE lists on $S$ to jump-start the construction of LE lists for all nodes.

\paragraph*{The Graph \texorpdfstring{$H$}{H}}

Let~$\ell$, $c$, and the skeleton graph $G_S = (S, E_S, \weight_S)$ be defined as in Section~\ref{sec:distributed-ghaffari} and Equations~\eqref{eq:gs-begin}--\eqref{eq:gs-end}, w.h.p.\ yielding $\dist(s, t, G_S) = \dist(s, t, G)$ for all $s,t \in S$.
Suppose for all $s,t\in S$, we know approximate weights $\weight'_S(s,t)$ with
\begin{equation*}
\dist(s,t,G)\leq \weight'_S(s,t)\in (1+o(1))\weight_S(s,t)
\end{equation*}
\dash---our algorithm has to rely on an approximation to meet the stated round complexity\dash---and add an $(n^{\bigo(1)}, \bigo(1/\log n))$-hop set to $G_S$ using the construction of Henzinger et~al.~\cite{hkn-atdacsssp-15}.
Together, this results in a graph
\begin{equation}
	G'_S := (S, E'_S, \weight'_S),
\end{equation}
where $E'_S$ contains the skeleton edges $E_S$ and some additional edges, and w.h.p.\ it holds for all $s,t \in S$ that
\begin{equation}
	\dist(s, t, G_S)
		\leq \dist^d(s, t, G'_S)
		\in (1 + o(1/\log n)) \dist(v, w, G_S)
\end{equation}
for some $d \in n^{\bigo(1)}$ and $\dist(v,w,G)\leq \dist(v,w,G_S)\in (1+o(1))\dist(v,w,G)$.
Next, embed $G'_S$ into $H_S$ as in Section~\ref{sec:h}, yielding node and edge levels $\level(e) \in \{0, \dots, \Lambda\}$:
\begin{align}
	H_S &:=
		\left( S, \binom{S}{2}, \weight_{H_S} \right)\text{ with} \\
	\weight_{H_S}(\{s,t\}) &\mapsto
		(1 + \hat\epsilon)^{\Lambda - \level(s,t)} \dist^d(s, t, G'_S)
\end{align}
with $d$ as above, $\hat\epsilon \in \bigo(1 / \log n)$.
By Theorem~\ref{thm:h}, w.h.p.\ we have that $\spd(G) \in \bigO(\log^2 n)$ and for all $s,t\in S$ that
\begin{equation}
	\dist(s,t,G)\leq \dist(s,t,G_S)\leq \dist(s,t,H_S) \in (1+o(1)) \dist(s, t, G_S),
\end{equation}
which is bounded from above by $\alpha \dist(s,t,G)$ for some $\alpha \in 1 + \bigo(1)$.
Analogously to Equations~\eqref{eq:ghaffari-h-begin}--\eqref{eq:ghaffari-h-end}, define
\begin{align}
	H &:=
		(V, E_H, \weight_H),\text{ where} \\
	E_H &:=
		E \cup \binom{S}{2}\text{, and} \\
	\weight_H(e) &\mapsto \begin{cases}
			\weight_{H_S}(e)    & \text{if $e \in \binom{S}{2}$ and} \\
			\alpha \weight_G(e) & \text{otherwise.}
		\end{cases}
\end{align}
By construction, we thus have
\begin{equation}
	\forall v,w \in V\colon\quad
		\dist(v,w,G)
			\leq \dist(v,w,H)
			\leq \alpha \dist(v,w,G)
			\in (1 + \bigo(1)) \dist(v,w,G)
\end{equation}
w.h.p.

\paragraph*{\acs{FRT} Trees of \texorpdfstring{$H$}{H}}

Analogously to Section~\ref{sec:distributed-ghaffari}, assume that the node IDs of $S$ are ordered before those of $V \setminus S$;
then min-hop shortest paths in $H$ contain a single maximal subpath of edges in~$E_{H_S}$.
To determine the \ac{LE} lists for~$H$, we must hence compute
\begin{equation}
	r^V A_H^{\spd(H)} x^{(0)}
		= \left( r^V A_{G,\alpha} \right)^{\ell}
			\underbrace{\left( r^V A_{H_S} \right)^{\spd(H_S)} x^{(0)}}_{=: \bar{x}^{(0)}},
\end{equation}
where $A_{G,\alpha}$ is given by multiplying each entry of $A_G$ by the abovementioned factor of~$\alpha$, and $A_{H_S}$ is extended to an adjacency matrix on the node set $V$ as in Section~\ref{sec:distributed-ghaffari}.

\paragraph*{The Algorithm}

We determine the \ac{LE} lists of $H$ as follows, adapting the approach from~\cite{gl-nodte-14} outlined in Section~\ref{sec:distributed-ghaffari}.
\begin{enumerate}
\item
	A node $v_0$ starts the computation by broadcasting a random choice of~$\beta$.
	The broadcast is used to construct \iac{BFS} tree, nodes generate distinct random IDs of $\bigO(\log n)$ bits w.h.p., and $v_0$ figures out the ID threshold of the bottom $c\ell$ nodes $S$ w.r.t.\ the induced random ordering.
	This can be done in $\bigOT(\diam(G))$ rounds.

\item
	Each skeleton nodes $s\in S$ computes $\weight'_S(s,t)$ as above for all $t\in S$, using the $(1 + 1 / \log^2 n)$-approximate $(S,\ell,|S|)$-detection
	algorithm given in~\cite{lp-fpdea-15}. This takes $\bigOT(\ell+\ell)=\bigOT(n^{1/2})$ rounds.
	
\item
	Run the algorithm of Henzinger et~al.~\cite{hkn-atdacsssp-15} to compute an $(n^{\bigo(1)}, \bigo(1))$-hop set of~$G_S'$\dash---in the sense that nodes in $S$ learn their incident weighted edges.
	This takes $n^{1/2 + \bigo(1)} + \diam(G)^{1 + \bigo(1)}$ rounds.

\item
	Next, we (implicitly) construct~$H_S$.
	To this end, nodes in $S$ locally determine their level and broadcast it over the \ac{BFS} tree, which takes $\bigO(|S| + \diam(G)) \subset \bigOT(\sqrt{n} + \diam(G))$ rounds;
	thus, $s \in S$ knows the level of $\{s,t\} \in E_{H_S}$ for each $t \in S$.

\item
	To determine~$\bar{x}^{(0)}$, we follow the same strategy as in Theorem~\ref{thm:oracle}, i.e., we simulate matrix-vector multiplication with $A_{H_S}$ via matrix-vector multiplications with $A_{G'_S}$.
	Hence, it suffices to show that we can efficiently perform a matrix-vector multiplication $A_{G'_S} x$ for any $x$ that may occur during the computation\dash---applying $r^V$ is a local operation and thus free\dash---assuming each node $v \in V$ knows $x_v$ and its row of the matrix.

	Since multiplications with $A_{G'_S}$ only affects lists at skeleton nodes, this can be done by local computations once all nodes know $x_s$ for each $s \in S$.
	As before, $|x_s| \in \bigO(\log n)$ w.h.p., so $\sum_{s \in S} |x_s| \in \bigO(|S| \log n)\subset \bigOT(\sqrt{n})$ w.h.p.
	We broadcast these lists over the \ac{BFS} tree of~$G$, taking $\bigOT(\sqrt{n} + \diam(G))$ rounds per matrix-vector multiplication.
	Due to $\spd(H_S) \in \bigOT(\log^2 n)$ by Theorem~\ref{thm:h}, this results in a round complexity of $\bigOT(n^{1/2 + \bigo(1)} + \diam(G)^{1 + \bigo(1)})$.

\item
	Applying $r^V A_{G,\alpha}^{\ell}$ is analogous to step~\ref{enum:ghaffari-final} in Section~\ref{sec:distributed-ghaffari} and takes $\bigOT(\ell) \subseteq \bigOT(n^{1/2})$ rounds.
\end{enumerate}
Altogether, this yields a round complexity of $n^{1/2 + \bigo(1)} + \diam(G)^{1 + \bigo(1)}$.
Combining this result with the algorithm by Khan et~al.~\cite{kkmpt-edaapte-12}, which terminates quickly if $\spd(G)$ is small, yields the following result.

\begin{theorem}\label{thm:distributed}
	There is a randomized distributed algorithm w.h.p.\ computing a metric tree embedding of expected stretch $\bigO(\log n)$ in $\min\{ (\sqrt{n} + \diam(G)) n^{o(1)} , \bigOT(\spd(G)) \}$ rounds of the Congest model.
\end{theorem}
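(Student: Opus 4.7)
The plan is to assemble the ingredients already developed in Sections \ref{sec:distributed-ghaffari} and \ref{sec:distributed-henziger} into a single algorithm and then combine it with the algorithm of Khan et~al.~\cite{kkmpt-edaapte-12} to obtain the claimed minimum. Concretely, I would run two algorithms in parallel, terminating as soon as one of them produces output: (a)~the algorithm of Khan et~al., which produces \ac{LE} lists of $G$ in $\bigOT(\spd(G))$ rounds, and (b)~the algorithm outlined in Section~\ref{sec:distributed-henziger}, which produces \ac{LE} lists of the auxiliary graph $H$ in $n^{1/2+\bigo(1)} + \diam(G)^{1+\bigo(1)}$ rounds. Since both algorithms output \ac{LE} lists for a graph whose metric $(1+\bigo(1))$-approximates $\dist(\cdot,\cdot,G)$ (trivially so for Khan et~al., who work on $G$ itself), and since by Lemma~\ref{lem:tree-explicit}-style reasoning the \ac{LE} lists determine an \ac{FRT} tree, the result is a metric tree embedding.

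Next, I would verify the round complexity of algorithm~(b) by summing the contributions of its six steps: broadcasting $\beta$ and establishing the skeleton node set $S$ via a \ac{BFS} tree takes $\bigOT(\diam(G))$; computing the approximate skeleton weights $\weight'_S$ using $(1+1/\log^2 n)$-approximate $(S,\ell,|S|)$-detection takes $\bigOT(\sqrt n)$; computing an $(n^{\bigo(1)},\bigo(1/\log n))$-hop set of $G'_S$ via Henzinger et~al.~\cite{hkn-atdacsssp-15} takes $n^{1/2+\bigo(1)} + \diam(G)^{1+\bigo(1)}$; broadcasting node levels uses $\bigOT(\sqrt n + \diam(G))$; simulating each matrix-vector multiplication with $A_{H_S}$ by broadcasting the (w.h.p.\ short) lists $x_s$ over the \ac{BFS} tree takes $\bigOT(\sqrt n + \diam(G))$ per iteration, and by Theorem~\ref{thm:h} we need only $\bigOT(1)$ such iterations; finally, the $\ell \in \bigOT(\sqrt n)$ outer iterations applying $r^V A_{G,\alpha}$ each cost $\bigO(\log n)$ rounds by Lemma~\ref{lem:lists-short}. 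Summing yields the claimed $n^{1/2+\bigo(1)} + \diam(G)^{1+\bigo(1)}$.

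For expected stretch, I would combine two facts: first, by Theorem~\ref{thm:h} applied to $G'_S$, the graph $H_S$ satisfies $\dist(s,t,G) \leq \dist(s,t,H_S) \leq \alpha\, \dist(s,t,G)$ for $s,t \in S$ with $\alpha \in 1 + \bigo(1)$, and consequently $\dist(v,w,G) \leq \dist(v,w,H) \leq (1+\bigo(1))\dist(v,w,G)$ for all $v,w \in V$; second, the \ac{FRT} construction on $H$ gives expected stretch $\bigO(\log n)$ with respect to $H$-distances~\cite{frt-tbaamtm-04}, and the argument of Lemma~4.9 of~\cite{gl-nodte-14} shows that ordering $S$ before $V\setminus S$ in the random permutation does not affect this asymptotic bound. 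Multiplying the two stretches yields $(1+\bigo(1)) \cdot \bigO(\log n) = \bigO(\log n)$ in expectation.

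The main obstacle I expect is the bookkeeping around the algebraic identity $r^V A_H^{\spd(H)} x^{(0)} = (r^V A_{G,\alpha})^\ell \cdot (r^V A_{H_S})^{\spd(H_S)} x^{(0)}$: this requires justifying that w.h.p.\ every min-hop shortest path in $H$ decomposes as a prefix of non-spanner edges followed by an $H_S$-subpath. This uses a Lemma~\ref{lem:prefix}-style argument with two levels (skeleton vs.\ non-skeleton) and sampling probability $\bigThetaT(1/\ell)$, together with the convention that $s < v$ for $s \in S, v \in V\setminus S$, which forces the skeleton endpoint to dominate non-skeleton entries on the relevant paths. Once this decomposition is in place, repeated application of Corollary~\ref{cor:rV-id} to insert filters $r^V$ between all operations allows every intermediate state to be bounded in size by Lemma~\ref{lem:lists-short}, which is what makes each Congest round transmit only $\bigO(\log n)$ messages per edge. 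Finally, running (a) and (b) in parallel and halting on first completion yields the $\min\{\cdot,\cdot\}$ round complexity claimed.
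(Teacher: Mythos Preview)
Your proposal is correct and follows essentially the same approach as the paper: the paper's proof is precisely the algorithm described in Section~\ref{sec:distributed-henziger} (whose six steps and their round complexities you recapitulate accurately), followed by the one-line observation that combining it with Khan et~al.'s $\bigOT(\spd(G))$-round algorithm yields the claimed minimum. The stretch argument via Theorem~\ref{thm:h}, the path-decomposition justification, and the appeal to Lemma~4.9 of~\cite{gl-nodte-14} for the biased permutation are exactly what the paper relies on.
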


\section{\texorpdfstring{$k$}{k}-Median}
\label{sec:kmedian}

In this section, we turn to the $k$-median problem, an application considered by Blelloch et~al.~\cite{bgt-pptekmbabnd-12} and show how their results are improved by applying our techniques.
The contribution is that we work on a weighted graph $G$ that only implicitly provides the distance metric $\dist(\cdot,\cdot,G)$;
Blelloch et~al.\ require a metric providing constant-time query access.
Our solution is more general, as any finite metric defines a complete graph of \ac{SPD}~$1$, whereas determining exact distances in graphs requires $\bigOmega(\spd(G))$ depth.
The use of hop sets, however, restricts us to polynomially bounded edge-weight ratios.

\begin{definition}[$k$-Median]
	In the \emph{$k$-median problem} we are given a weighted graph $G = (V, E, \weight)$ and an integer $k \in \N$.
	The task is to determine $F \subseteq V$ with $|F| \leq k$ that minimizes
	\begin{equation}
		\sum_{v \in V} \dist(v, F, G),
	\end{equation}
	where $\dist(v, F, G) := \min\{ \dist(v, f, G) \mid f \in F \}$ is the distance of $v$ to the closest member of~$F$.
\end{definition}

Blelloch et~al.~\cite{bgt-pptekmbabnd-12} solve the following problem:
Given a metric with constant-time query access, determine an expected $\bigO(\log k)$-approximation of $k$-median using $\bigO(\log^2 n)$ depth and $\bigOT(nk + k^3)$ work for $k \geq \log n$;
the special case of $k < \log n$ admits an $\bigOT(n)$-work solution of the same depth~\cite{bt-paaflp-10}.
Below, we show how to determine an expected $\bigO(\log k)$-approximation of $k$-median on a weighted graph, using $\polylog n$ depth and $\bigOT(m^{1+\epsilon} + k^3)$ work.

The algorithm of Blelloch et~al.~\cite{bgt-pptekmbabnd-12} essentially comprises three steps:
\begin{enumerate}
\item\label{enum:kmedian1}
	Use a parallel version of a sampling technique due to Mettu and Plaxton~\cite{mp-otbac-04}.
	It samples candidates~$Q$, such that $|Q| \in \bigO(k)$ and there is $F \subseteq Q$ that $\bigO(1)$-approximates $k$-median.

\item\label{enum:kmedian2}
	Sample \iac{FRT} tree regarding the submetric spanned by~$Q$.
	Normalize the tree to a binary tree (required by the next step);
	this is possible without incurring too much overhead w.r.t.\ the depth of the tree~\cite{bgt-pptekmbabnd-12}.

\item
	Run an $\bigO(k^3)$-work dynamic programming algorithm to solve the tree instance optimally without using any Steiner nodes.
	This yields an $\bigO(\log k)$-approximate solution on the original metric due to the expected stretch from the \ac{FRT} embedding.
\end{enumerate}
We keep the overall structure but modify steps~\ref{enum:kmedian1}--\ref{enum:kmedian2}, resulting in the following algorithm:
\begin{enumerate}
\item
	The sampling step generates $\bigO(k)$ candidate points~$Q$.

	It requires $\bigO(\log \frac{n}{k})$ iterations and maintains a candidate set $U$ that initially contains all points.
	In each iteration, $\bigO(\log n)$ candidates $S$ are sampled and a constant fraction of vertices in~$U$, those closest to~$S$, is removed~\cite{bgt-pptekmbabnd-12}.

	They key to adapting this procedure to graphs lies in efficiently determining $\dist(u, S, G)$ for all $u \in U$ (this would be trivial with constant-time query access to the metric).
	We achieve this by sampling after embedding in $H$ from Section~\ref{sec:h} which only costs a factor of $(1 + \bigo(1))$ in approximation, regardless of~$k$.
	By Theorem~\ref{thm:h}, we only require $\bigO(\log^2 n)$ iterations of the \ac{MBF-like} algorithm from Example~\ref{ex:fire} (for $d = \infty$) to determine each node's distance to the closest vertex in~$S$ w.h.p.
	Hence, we require polylogarithmic depth and $\bigOT(m^{1 + \epsilon})$ work for this step.

	Since $|U|$ decreases by a constant factor in each iteration and we have $\bigO(\log n)$ iterations, we require a total of $\bigOT(m^{1 + \epsilon})$ work and polylogarithmic depth, including the costs for determining Cohen's hop set~\cite{c-ptnlwasusp-00}.

\item
	Sample \iac{FRT} tree on the submetric spanned by~$Q$.

	To compute the embedding only on $Q$ set $x^{(0)}_{vv} = 0$ if $v \in Q$ and $x^{(0)}_{vw} = \infty$ everywhere else.
	Consider only the \ac{LE} lists of nodes in $Q$ when constructing the tree.

	As we are limited to polynomially bounded edge-weight ratios, our \ac{FRT} trees have logarithmic depth.
	We normalize to a binary tree using the same technique as Blelloch et~al.~\cite{bgt-pptekmbabnd-12}.

\item
	The $\bigOT(k^3)$-work polylogarithmic-depth dynamic-programming algorithm of Blelloch et~al.\ can be applied without modification.
\end{enumerate}
W.h.p., we arrive at an expected $\bigO(\log k)$-approximation of $k$-median:
\begin{theorem}
	For any fixed constant $\epsilon > 0$, w.h.p., an expected $\bigO(\log k)$-approximation to $k$-median on a weighted graph can be computed using $\polylog n$ depth and $\bigOT(m^{1+\epsilon} + k^3)$ work.
\end{theorem}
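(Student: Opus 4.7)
The plan is to verify the three-step algorithm sketched in the paragraph preceding the theorem, focusing on (i) the approximation guarantee being $\bigO(\log k)$ in expectation and (ii) the claimed bounds on depth and work. The dynamic programming in step~3 is taken verbatim from Blelloch et~al.~\cite{bgt-pptekmbabnd-12}, and the FRT construction on the $\bigO(k)$ candidates is a direct application of Corollary~\ref{cor:embedding}. The substantive work lies in verifying that the Mettu--Plaxton style sampling can be executed on a graph input via the oracle on $H$ without sacrificing either the approximation factor or the work--depth budget.

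For correctness, I would argue as follows. By the analysis inherited from Mettu--Plaxton~\cite{mp-otbac-04,bgt-pptekmbabnd-12}, w.h.p.\ the sampled set $Q$ of size $\bigO(k)$ contains some $F \subseteq Q$ of size at most $k$ whose $k$-median cost is within a constant factor of the optimum on the original metric. Executing this sampling with respect to distances in $H$ rather than $G$ distorts each pairwise distance by only a factor of $1 + \bigo(1)$ by Theorem~\ref{thm:h} and~\eqref{eq:h-stretch-o1}, so the candidate property is preserved up to the same factor. Running the FRT construction of Corollary~\ref{cor:embedding} on the submetric induced by $Q$ yields a tree of expected stretch $\bigO(\log |Q|) = \bigO(\log k)$; the binary-tree normalization from~\cite{bgt-pptekmbabnd-12} inflates depth only by a constant. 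Finally, the dynamic program solves $k$-median on the binary tree optimally, so the returned solution is an expected $\bigO(\log k)$-approximation on $G$.

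For the complexity bounds, I would treat the three steps in order. Preprocessing computes Cohen's $(d,1/\polylog n)$-hop set in $\polylog n$ depth and $\bigOT(m^{1+\epsilon})$ work and implicitly fixes $H$. Each of the $\bigO(\log (n/k))$ outer iterations of step~1 samples $\bigO(\log n)$ new candidates~$S$ and needs $\dist(u,S,H)$ for all $u \in U$; this can be obtained by one invocation of the oracle of Theorem~\ref{thm:oracle} with the MBF-like algorithm of Example~\ref{ex:fire} (equivalently, single-source shortest paths from a virtual source linked to $S$ by zero-weight edges). Since $\spd(H) \in \bigO(\log^2 n)$ w.h.p., and each per-iteration cost on $G$ is governed by Lemma~\ref{lem:computation-cheap}, the oracle call takes $\polylog n$ depth and $\bigOT(m^{1+\epsilon})$ work w.h.p.; summing over the outer iterations preserves these bounds. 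Step~2 applies Corollary~\ref{cor:embedding} with the initial vector restricted to $Q$ (i.e., $x^{(0)}_{vv}=0$ if $v \in Q$ and $\infty$ otherwise), staying within $\polylog n$ depth and $\bigOT(m^{1+\epsilon})$ work, with the binary-tree normalization subsumed. Step~3 contributes $\bigOT(k^3)$ work at polylogarithmic depth.

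The main obstacle I anticipate is the clean encapsulation of the parallel Mettu--Plaxton sampling from~\cite{bgt-pptekmbabnd-12} as a sequence of batched oracle queries. Concretely, I would need to verify that every distance-query the sampling procedure makes per outer iteration can be answered by a single $(S,\infty,\bigO(\log n))$-source-detection-style MBF-like computation on $H$, and that replacing true distances by their $(1+\bigo(1))$-approximations does not affect the constant-factor candidate guarantee. The latter is immediate from the Mettu--Plaxton analysis being stable under a multiplicative perturbation of the metric, while the former amounts to pointing out that in each outer iteration the algorithm only uses, for each $u \in U$, its distance to the closest current sample\dash---exactly what the forest-fires MBF-like algorithm from Example~\ref{ex:fire} returns.
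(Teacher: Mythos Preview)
Your proposal is correct and follows essentially the same approach as the paper: both run the parallel Mettu--Plaxton sampling with distance queries answered on $H$ via the forest-fire \ac{MBF-like} algorithm of Example~\ref{ex:fire} through the oracle, then build an \ac{FRT} tree on the $\bigO(k)$ candidates by restricting $x^{(0)}$ to~$Q$, and finish with the $\bigOT(k^3)$ dynamic program of~\cite{bgt-pptekmbabnd-12}. One minor remark: your reference to Lemma~\ref{lem:computation-cheap} for the per-iteration cost in step~1 is slight overkill, since the forest-fire algorithm keeps at most one entry per node (so a single iteration on $G$ costs $\bigOT(m)$ work and $\bigO(\log n)$ depth directly), but this does not affect the argument.
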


\section{Buy-at-Bulk Network Design}
\label{sec:buyatbulk}

In this section, we reduce the work of the approximation algorithm for the buy-at-bulk network design problem given by Blelloch et~al.~\cite{bgt-pptekmbabnd-12} that requires $\bigO(n^3 \log n)$ work and $\bigO(\log^2 n)$ depth w.h.p., while providing the same asymptotic approximation guarantees.
Blelloch et~al.\ transform the input graph $G$ into a metric which allows constant-time query access on which they sample \iac{FRT} embedding, hence their work is dominated by solving \ac{APSP}.

Replacing the \ac{APSP} routine in the algorithm Blelloch et~al.\ with our $(1 + \epsilon)$-approximate metric from Theorem~\ref{thm:apsp}\dash---and keeping the rest of the algorithm in place\dash---directly reduces the work to $\bigOT(n^2)$ while incurring $\polylog n$ depth.
However, using our result from Section~\ref{sec:frt} to sample \iac{FRT} without the detour over the metric, we can guarantee a stronger work bound of $\bigOT(\min\{ m^{1+\epsilon} + kn, n^2 \}) \subseteq \bigOT(n^2)$, which achieves the same depth.
The use of hop sets, however, restricts us to polynomially bounded edge ratios (or our solution loses efficiency).

\begin{definition}[Buy-at-Bulk Network Design]
	In the \emph{buy-at-bulk network design problem,} one is given a weighted graph $G = (V, E, \weight)$, demands $(s_i, t_i, d_i) \in V \times V \times \R_{>0}$ for $1 \leq i \leq k$, and a finite set of cable types $(u_i, c_i) \in \R_{>0} \times \R_{>0}$, $1 \leq i \leq \ell$, where the cable of type $i$ incurs costs $c_i \weight(e)$ when purchased for edge~$e$ (multiple cables of the same type can be bought for an edge).
	The goal is to find an assignment of cable types and multiplicities to edges minimizing the total cost, such that the resulting edge capacities allow to simultaneously route $d_i$ units of (distinct) flow from $s_i$ to $t_i$ for all $1 \leq i \leq k$.
\end{definition}

Andrews showed that the buy-at-bulk network design problem is hard to approximate better than with factor $\log^{1/2 - \bigo(1)} n$~\cite{a-hbbnd-04}.
Blelloch et~al.~\cite{bgt-pptekmbabnd-12} give an expected $\bigO(\log n)$-approximation w.h.p.\ using $\polylog n$ depth and $\bigO(n^3\log n)$ work for the buy-at-bulk network design problem.
It is a straightforward parallelization of the algorithm by Awerbuch and Azar~\cite{aa-bbnd-97}. 
Our tools allow for a more work-efficient parallelization of this algorithm, as the work of the implementation by Blelloch et~al.\ is dominated by solving \ac{APSP} to determine the distance metric of the graph;
we achieve the same approximation guarantee as Blelloch et~al.\ using $\polylog n$ depth and $\bigOT(n^2)$ work.
We propose the following modification of the approach of Blelloch et~al.
\begin{enumerate}
\item
	Metrically embed $G$ into a tree $T = (V_T, E_T, \weight_T)$ with expected stretch $\bigO(\log n)$.
	As the objective is linear in the edge weights, an optimal solution in $G$ induces a solution in $T$ whose expected cost is by at most a factor $\bigO(\log n)$ larger.

\item
	$\bigO(1)$-approximate on~$T$:
	For $e \in E_T$, pick the cable of type $i$ that minimizes $c_i \lceil d_e / u_i \rceil$, where $d_e$ is the accumulated flow on~$e$, see~\cite{bgt-pptekmbabnd-12}).

\item
	Map the tree solution back to~$G$, increasing the cost by a factor of $\bigO(1)$.
\end{enumerate}
Combining these steps yields an $\bigO(\log n)$-approximation.
Using Corollary~\ref{cor:embedding}, the first step has $\polylog n$ depth and $\bigOT(m^{1+\epsilon})$ work;
for the second step, Blelloch et~al.\ discuss an algorithm of $\polylog n$ depth and $\bigOT(n + k)$ work.

Concerning the third step, recall that each tree edge $\{v,w\}$ maps back to a path $p$ of at most $\spd(H)$ hops in $H$ with $\weight(p) \leq 3 \weight_T(v,w)$ as argued in Section~\ref{sec:frt-datastructure}.
Using this observation, we can map the solution on $T$ back to one in $H$ whose cost is at most by factor $3$ larger.
Assuming suitable data structures are used, this operation has depth $\polylog n$ and requires $\bigOT(\min\{k,n\})$ work w.h.p., where we exploit that $\spd(H) \in \bigO(\log^2 n)$ w.h.p.\ by Theorem~\ref{thm:h} and the fact that $T$ has depth $\bigO(\log n)$, implying that the number of edges in $T$ with non-zero flow is bounded by $\bigO(\min\{k,n\} \log n)$.

Finally, we map back from $H$ to $G'$ ($G$~augmented with hop set edges) and then to~$G$.
This can be handled with depth $\polylog n$ and $\bigOT(n)$ work for a single edge in $H$ because edges in $H$ and hop-set edges in $G'$ correspond to polylogarithmically many edges in $G'$ and at most $n$ edges in~$G$, respectively.
The specifics depend on the hop set and, again, we assume that suitable data structures are in place, see Section~\ref{sec:frt-datastructure}.
Since we deal with $\bigOT(\min\{k,n\})$ edges in~$H$, mapping back the edges yields $\bigOT(\min\{kn,n^2\})$ work in total.
Together with the computation of the hop set, we have $\bigOT(\min\{ m^{1+\epsilon}, n^2 \} + \min\{kn,n^2\}) = \bigOT(\min\{ m^{1+\epsilon} + kn, n^2 \}) \subseteq \bigOT(n^2)$ work (the work to determine Cohen's hop set~\cite{c-ptnlwasusp-00} is bounded by $\bigOT(n^2)$ due to the same reasoning as in the proof of Theorem~\ref{thm:apsp}).

\begin{theorem}
	For any constant $\epsilon > 0$, w.h.p., an expected $\bigO(\log n)$-approximation to the buy-at-bulk network design problem can be computed using $\polylog n$ depth and $\bigOT(\min\{ m^{1+\epsilon} + kn, n^2 \}) \subseteq \bigOT(n^2)$ work.
\end{theorem}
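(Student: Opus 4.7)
The plan is to follow the three-step reduction outlined in the discussion preceding the theorem: embed $G$ into a tree, solve on the tree, and lift the solution back. For the first step I would apply Corollary~\ref{cor:embedding} to sample, in $\polylog n$ depth and $\bigOT(m^{1+\epsilon})$ work, a metric tree embedding $T = (V_T, E_T, \weight_T)$ of expected stretch $\bigO(\log n)$. Since the buy-at-bulk objective is a nonnegative linear combination of edge weights (each cable cost is $c_i \weight(e)$ times a multiplicity determined by the routed flow), an optimal solution on $G$ induces a solution on $T$ whose expected cost is at most an $\bigO(\log n)$ factor larger.

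For the second step, I would invoke the greedy tree algorithm of Blelloch et~al.~\cite{bgt-pptekmbabnd-12}: on the (unique) tree path for each demand the flows aggregate, and on every tree edge $e$ we simply pick the cable type minimizing $c_i \lceil d_e / u_i \rceil$, which gives an $\bigO(1)$-approximation on the tree instance with $\polylog n$ depth and $\bigOT(n + k)$ work. Combined with step one this yields an expected $\bigO(\log n)$-approximation of the cost of the optimal tree solution.

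The third step is to map the tree solution back to $G$ at cost increased by only a constant factor; this is where I expect the main technical obstacle to lie, as we must bound the work of translating each used tree edge into an actual path in $G$. Using the three-stage backtracking infrastructure from Section~\ref{sec:frt-datastructure}, each tree edge $\{v,w\}$ corresponds to a path in $H$ of at most $\spd(H) \in \bigO(\log^2 n)$ hops with weight at most $3\weight_T(v,w)$ w.h.p.\ by Theorem~\ref{thm:h}, each such edge in $H$ unfolds into a $d$-hop path in $G'$ (the hop-set augmentation), and each hop-set edge unfolds into a path in $G$ of at most $n$ edges. Since $T$ has logarithmic depth (edge weights are polynomially bounded), the number of tree edges carrying nonzero flow is bounded by $\bigOT(\min\{k,n\})$, so expanding through $H$ costs $\bigOT(\min\{k,n\})$ additional work, while the final expansion to $G$ costs up to $n$ work per $H$-edge, giving $\bigOT(\min\{kn, n^2\})$ work overall for the lifting. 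Proper pointer-based data structures precomputed during the \ac{FRT} construction keep the depth at $\polylog n$.

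Totaling the three phases gives work $\bigOT(m^{1+\epsilon}) + \bigOT(n+k) + \bigOT(\min\{kn, n^2\}) = \bigOT(\min\{m^{1+\epsilon} + kn, n^2\})$ and depth $\polylog n$, where the second bound in the minimum is obtained by observing that the hop-set computation itself can be capped at $\bigOT(n^2)$ (as argued for Theorem~\ref{thm:apsp}) and that $kn$ is trivially dominated by $n^2$. The expected approximation factor is the product of the $\bigO(\log n)$ stretch of the embedding, the $\bigO(1)$ tree approximation, and the $\bigO(1)$ blow-up of the lift, giving the claimed $\bigO(\log n)$. The high-probability claim follows since all invoked subroutines (Corollary~\ref{cor:embedding}, Theorem~\ref{thm:h}, the hop-set bounds) succeed w.h.p., which is preserved under the union bound by Lemma~\ref{lem:whp}.
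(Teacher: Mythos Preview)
Your proposal is correct and follows essentially the same three-step argument as the paper: embed via Corollary~\ref{cor:embedding}, solve on the tree with the $\bigO(1)$-approximation of Blelloch et~al., and lift back through $H \to G' \to G$ using the infrastructure of Section~\ref{sec:frt-datastructure}, accounting for $\bigOT(\min\{k,n\})$ loaded tree edges and $\bigOT(n)$ work per $H$-edge to reach~$G$. One minor remark: your claim that ``$kn$ is trivially dominated by $n^2$'' is not literally true when $k > n$, but this does not affect the argument since you already derived the lifting cost as $\bigOT(\min\{kn,n^2\})$, and the $n^2$ cap on the hop-set/embedding step follows exactly as in the paper (cf.\ the remark after Theorem~\ref{thm:apsp}).
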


\section{Conclusion}
\label{sec:conclusion}

In this work, we show how to sample from an \acs{FRT}-style distribution of metric tree embeddings at low depth and near-optimal work, provided that the maximum ratio between edge weights is polynomially bounded.
While we consider the polylogarithmic factors too large for our algorithm to be of practical interest, this result motivates the search for solutions that achieve low depth despite having work comparable to the currently best known sequential bound of $\bigO(m \log^3 n)$~\cite{ms-fckrpsg-09}.
Concretely, better hop-set constructions could readily be plugged into our machinery to yield improved bounds, and one may seek to reduce the number of logarithmic factors incurred by the remaining construction.

Our second main contribution is an algebraic interpretation of \ac{MBF-like} algorithms, reducing the task of devising and analyzing such algorithms to the following recipe:
\begin{enumerate}
\item
	Pick a suitable semiring $\mathcal{S}$ and semimodule $\mathcal{M}$ over~$\mathcal{S}$.

\item
	Choose a filter $r$ and initial values $x^{(0)} \in \mathcal{M}^V$ so that $r^V A^h x^{(0)}$ is the desired output.

\item
	Verify that $r$ induces a congruence relation on~$\mathcal{M}$.

\item
	Leverage (repeated use of) $r^V$ to ensure that iterations can be implemented efficiently.
\end{enumerate}
As can be seen by example of our metric tree embedding algorithm, further steps may be required to control the number of iterations~$h$;
concretely, we provide an embedding into a complete graph of small \ac{SPD} and an oracle allowing for efficient \ac{MBF-like} queries.
Nevertheless, we believe that our framework unifies and simplifies the interpretation and analysis of \ac{MBF-like} algorithms, as illustrated by the examples listed in Sections~\ref{sec:examples} and the discussion of distributed tree embeddings in Section~\ref{sec:distributed}.
Therefore, we hope that our framework will be of use in the design of further efficient \ac{MBF-like} algorithms in the future.


\bibliographystyle{abbrv}
\bibliography{bibliography}


\begin{appendix}
	\section{Algebraic Foundations}
\label{app:algebra}

For the sake of self-containment and unambiguousness, we give the algebraic definitions required in this paper as well as a standard result.
Definitions~\ref{def:semigroup}, \ref{def:semiring}, and~\ref{def:semimodule} are slightly adapted from Chapters 1 and 5 of~\cite{hw-satacs-98}.
In this section, we refer to the neutral elements of addition and multiplication as $0$ and~$1$.
Note, however, that in the min-plus semiring $\mathcal{S}_{\min,+}$ the neutral element of ``addition''~($\min$) is~$\infty$ and that of ``multiplication''~($+$) is~$0$.

\begin{definition}[Semigroup]\label{def:semigroup}
	Let $M \neq \emptyset$ be a set and $\circ\colon M \times M \to M$ a binary operation.
	$(M, \circ)$ is a \emph{semigroup} if and only if $\circ$ is associative, i.e.,
	\begin{equation}
		\forall x,y,z \in M\colon \quad x \circ (y \circ z) = (x \circ y) \circ z.
	\end{equation}
	A semigroup $(M, \circ)$ is \emph{commutative} if and only if
	\begin{equation}
		\forall x,y \in M\colon \quad x \circ y = y \circ x.
	\end{equation}
	$e \in M$ is a \emph{neutral element} of $(M, \circ)$ if and only if
	\begin{equation}
		\forall x \in M\colon \quad e \circ x = x \circ e = x.
	\end{equation}
\end{definition}

Some authors do not require semirings to have neutral elements or an annihilating~$0$.
We, however, need them and work on semirings\dash---mostly on $\mathcal{S}_{\min,+}$, $\mathcal{S}_{\max,\min}$, and $\mathcal{P}_{\min,+}$\dash---which provide them, anyway.

\begin{definition}[Semiring]\label{def:semiring}
	Let $M \neq \emptyset$ be a set, and $\oplus,\odot\colon M \times M \to M$ binary operations.
	Then $(M, \oplus, \odot)$ is a \emph{semiring} if and only if
	\begin{enumerate}
	\item
		$(M, \oplus)$ is a commutative semigroup with neutral element~$0$,

	\item
		$(M, \odot)$ is a semigroup with neutral element~$1$,

	\item
		the left- and right-distributive laws hold:
		\begin{align}
			\forall x,y,z \in M\colon \quad x \odot (y \oplus z) &= (x \odot y) \oplus (x \odot z), \label{eq:dist-left} \\
			\forall x,y,z \in M\colon \quad (y \oplus z) \odot x &= (y \odot x) \oplus (z \odot x)\text{, and} \label{eq:dist-right}
		\end{align}

	\item
		$0$~annihilates w.r.t.\ $\odot$:
		\begin{equation}
			\forall x \in M\colon \quad 0 \odot x = x \odot 0 = 0.
		\end{equation}
\end{enumerate}
\end{definition}

\begin{definition}[Semimodule]\label{def:semimodule}
	Let $\mathcal{S} = (S, \oplus, \odot)$ be a semiring.
	$\mathcal{M} = (M, \oplus, \odot)$ with binary operations $\oplus\colon M \times M \to M$ and $\odot\colon S \times M \to M$ is a \emph{semimodule over~$\mathcal{S}$} if and only if
	\begin{enumerate}
	\item
		$(M, \oplus)$ is a semigroup and

	\item
		for all $s,t \in S$ and all $x,y \in M$:
		\begin{align}
			1 \odot x            &= x, \label{eq:semimodule-begin} \\
			s \odot (x \oplus y) &= (s \odot x) \oplus (s \odot y), \\
			(s \oplus t) \odot x &= (s \odot x) \oplus (t \odot x)\text{, and} \label{eq:dist-left-module} \\
			(s \odot t) \odot x  &= s \odot (t \odot x). \label{eq:semimodule-end}
		\end{align}
	\end{enumerate}

	$\mathcal{M}$ is \emph{zero-preserving} if and only if
	\begin{enumerate}
	\item
		$(M, \oplus)$ has the neutral element $0$ and

	\item
		$0 \in S$ is an annihilator for~$\odot$:
		\begin{equation}
			\forall x \in M\colon \quad 0 \odot x = 0.
		\end{equation}
\end{enumerate}
\end{definition}

A frequently used semimodule over the semiring $\mathcal{S}$ is $\mathcal{S}^k$ with coordinate-wise addition, i.e., $k$-dimensional vectors over~$\mathcal{S}$.
Note that $\mathcal{S} = \mathcal{S}^1$ always is a semimodule over itself.

\begin{lemma}\label{lem:semiring-to-the-k}
	Let $\mathcal{S} = (S, \oplus, \odot)$ be a semiring and $k \in \N$ an integer.
	Then $\mathcal{S}^k := (S^k, \oplus, \odot)$ with, for all $s \in \mathcal{S}$, $x,y \in \mathcal{S}^k$, and $1 \leq i \leq k$,
	\begin{align}
		(x \oplus y)_i &:= x_i \oplus y_i\text{ and} \\
		(s \odot x)_i  &:= s \odot x_i
	\end{align}
	is a zero-preserving semimodule over $\mathcal{S}$ with zero $(0, \dots, 0)$.
\end{lemma}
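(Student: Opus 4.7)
The plan is a straightforward coordinate-wise verification of the semimodule axioms~\eqref{eq:semimodule-begin}--\eqref{eq:semimodule-end} and the zero-preservation conditions from Definition~\ref{def:semimodule}, lifting each identity from $\mathcal{S}$ to $\mathcal{S}^k$ by applying it in every coordinate. There is no deep content; the proof is essentially bookkeeping.

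First I would check that $(S^k, \oplus)$ is a semigroup. Associativity is immediate: for $x,y,z \in S^k$ and each coordinate $i$, $((x \oplus y) \oplus z)_i = (x_i \oplus y_i) \oplus z_i = x_i \oplus (y_i \oplus z_i) = (x \oplus (y \oplus z))_i$ by associativity in $\mathcal{S}$. Moreover, setting $\mathbf{0} := (0,\dots,0)$, for every $x \in S^k$ and every $i$ one has $(\mathbf{0} \oplus x)_i = 0 \oplus x_i = x_i = x_i \oplus 0 = (x \oplus \mathbf{0})_i$ because $0$ is neutral in $(S, \oplus)$; hence $(S^k, \oplus)$ has neutral element $\mathbf{0}$.

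Next I would verify the four semimodule axioms. For $s,t \in S$ and $x,y \in S^k$, inspect each coordinate $i$:
\begin{align}
(1 \odot x)_i            &= 1 \odot x_i = x_i, \\
(s \odot (x \oplus y))_i &= s \odot (x_i \oplus y_i) = (s \odot x_i) \oplus (s \odot y_i) = ((s \odot x) \oplus (s \odot y))_i, \\
((s \oplus t) \odot x)_i &= (s \oplus t) \odot x_i = (s \odot x_i) \oplus (t \odot x_i) = ((s \odot x) \oplus (t \odot x))_i, \\
((s \odot t) \odot x)_i  &= (s \odot t) \odot x_i = s \odot (t \odot x_i) = (s \odot (t \odot x))_i,
\end{align}
where each right-hand equality uses the corresponding semiring axiom for $\mathcal{S}$ (neutrality of $1$ for $\odot$, left distributivity, right distributivity, and associativity of $\odot$, respectively).

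Finally, I would confirm zero-preservation. The semigroup $(S^k, \oplus)$ has neutral element $\mathbf{0}$ as shown above, and for every $x \in S^k$ and every coordinate $i$, $(0 \odot x)_i = 0 \odot x_i = 0$ by the fact that $0 \in S$ is an annihilator for $\odot$ in the semiring $\mathcal{S}$; thus $0 \odot x = \mathbf{0}$. There is no genuine obstacle here\dash---the only thing to be careful about is to keep the three different roles of ``$\oplus$'' and ``$\odot$'' (in $\mathcal{S}$, in $\mathcal{S}^k$ as $\oplus$, and as scalar action $\odot$) notationally separated, and to invoke precisely the annihilation axiom from Definition~\ref{def:semiring} for the final step.
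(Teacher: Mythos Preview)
Your proof is correct and follows essentially the same approach as the paper: a coordinate-wise verification of the semigroup structure, the four semimodule axioms, and the zero-preservation conditions, each lifted directly from the corresponding semiring axiom in $\mathcal{S}$. The only difference is cosmetic\dash---you spell out associativity of $\oplus$ explicitly where the paper simply remarks that it is inherited from $(S,\oplus)$.
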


\begin{proof}
	We check the conditions of Definition~\ref{def:semimodule} one by one.
	Throughout the proof, let $s,t \in \mathcal{S}$ and $x,y \in \mathcal{S}^k$ be arbitrary.
	\begin{enumerate}
	\item
		$(S^k, \oplus)$ is a semigroup because $(S, \oplus)$ is.

	\item
		Equations~\eqref{eq:semimodule-begin}--\eqref{eq:semimodule-end} hold due to
		\begin{gather}
			(1 \odot x)_i
				= 1 \odot x_i
				= x_i, \\
			(s \odot (x \oplus y))_i
				= s \odot (x_i \oplus y_i)
				= (s \odot x_i) \oplus (s \odot y_i)
				= ((s \odot x) \oplus (s \odot y))_i, \\
			((s \oplus t) \odot x)_i
				= (s \oplus t) \odot x_i
				= (s \odot x_i) \oplus (t \odot x_i)
				= ((s \odot x) \oplus (t \odot x))_i\text{, and} \\
			((s \odot t) \odot x)_i
				= (s \odot t) \odot x_i
				= s \odot (t \odot x_i)
				= (s \odot (t \odot x))_i.
		\end{gather}

	\item
		$(0, \dots, 0)$ is the neutral element of $(S^k, \oplus)$ because $0$ is the neutral element of $(S, \oplus)$.

	\item
		$0$~is an annihilator for~$\odot$:
		\begin{equation}
			(0 \odot x)_i = 0 \odot x_i = 0. \qedhere
		\end{equation}
	\end{enumerate}
\end{proof}

	\section{Deferred Proofs}
\label{app:proofs}

This appendix contains the proofs deferred from Section~\ref{sec:examples} for the sake of presentation.

\paragraph*{Proof of Lemma~\ref{lem:minplus-xh}}

\begin{proof}
	The claim trivially holds for $h = 0$.
	As induction hypothesis, suppose the claim holds for $h \in \N$.
	We obtain
	\begin{align}
		x^{(h+1)}_{vw}
			&= (A x^{(h)})_{vw} \\
			&= \left( \bigoplus_{u \in V} a_{vu} \odot x^{(h)}_u \right)_w \\
			&= \bigoplus_{u \in V} a_{vu} \odot x^{(h)}_{uw} \\
			&= \min_{u \in V} \left\{ a_{vu} + x^{(h)}_{uw} \right\} \\
			&= \min \left\{ \weight(v,u) + \dist^h(u,w,G) \mid \{v,u\} \in E \right\} \cup \left\{ 0 + \dist^h(v,w,G) \right\},
	\end{align}
	i.e., exactly the definition of $\dist^{h+1}(v,w,G)$, as claimed.
\end{proof}

\paragraph*{Proof for Example~\ref{ex:source-detection}}

\begin{proof}
	Let $s \in \mathcal{S}_{\min,+}$ be arbitrary and let $x,x',y,y' \in \mathcal{D}$ be such that $x \sim x'$ and $y \sim y'$, where $x \sim y :\Leftrightarrow r(x) = r(y)$.
	By Lemma~\ref{lem:congruence-by-r}, it suffices to show
	\begin{enumerate*}
	\item
		that $r^2 = r$,

	\item
		that $r(sx) = r(sx')$, and

	\item
		that $r(x \oplus y) = r(x' \oplus y')$.
	\end{enumerate*}

	We show the claims one by one.
	First observe that $r(x)_v = \infty$ for all $v \in V \setminus S$, hence w.l.o.g.\ assume $v \in S$ in the following.
	\begin{enumerate*}
	\item
		$r(x)$~has at most $k$ entries, each at most~$d$, so $r(r(x)) = r(x)$ by~\eqref{eq:source-detection-filter}.

	\item
		Since multiplication with $s$ uniformly increases the non-$\infty$ entries of $x$ and~$x'$, it does not affect their ordering w.r.t.~\eqref{eq:source-detection-filter}.
		As the $k$ smallest $S$-entries of $x$ and $x'$ w.r.t.~\eqref{eq:source-detection-filter} are identical, so are those of $sx$ and~$sx'$.
		Some entry $(sx)_v$ may become larger than~$d$, but that happens for $(sx)'_v$ as well, hence $r(sx) = r(sx')$.

	\item
		We have $r(x \oplus y)_v \leq d$ only if $(x \oplus y)_v = \min\{ x_v, y_v \} \leq d$ is among the $k$ smallest entries of $(x \oplus y)$ w.r.t.~\eqref{eq:source-detection-filter}.
		If that is the case, there are no $k$ entries smaller than $r(x \oplus y)_v$ in $x$ or in~$y$.
		Hence, these entries exist in $x'$ and $y'$ as well, form the $k$ smallest entries of $(x' \oplus y')$, and $r(x \oplus y)_v = r(x' \oplus y')_v$ follows.
	\end{enumerate*}
\end{proof}

\paragraph*{Proof of Lemma~\ref{lem:maxmin}}

\begin{proof}
	We check each of the requirements of Definition~\ref{def:semiring} in Appendix~\ref{app:algebra}.
	Throughout the proof, let $x,y,z \in \Rdist$ be arbitrary.
	\begin{enumerate}
	\item
		$(\Rdist, \max)$ is a commutative semigroup because $\max$ is associative and commutative.
		Since $0$ is the minimum of $\Rdist$, it is the neutral element of $(\Rdist, \max)$.

	\item
		$(\Rdist, \min)$ is a semigroup because $\min$ is associative.
		Like above, $\infty$~is its neutral element because it is the maximum of $\Rdist$.

	\item
		Regarding the left- and right-distributive laws in Equations~\eqref{eq:dist-left}--\eqref{eq:dist-right}, a case distinction between the cases
		\begin{enumerate*}
		\item
			$x \leq y \leq z$,

		\item
			$y \leq x \leq z$, and

		\item
			$y \leq z \leq x$
		\end{enumerate*}
		is exhaustive due to the commutativity of $\min$ and $\max$ and reveals that
		\begin{equation}
			\min\{ x, \max\{y, z\} \} = \max\{ \min\{x, y\}, \min\{x, z\} \},
		\end{equation}
		i.e., that the left-distributive law holds.
		Since $\min$ is commutative,
		\begin{equation}
			\min\{ \max\{y, z\}, x \} = \max\{ \min\{y, x\}, \min\{z, x\} \}
		\end{equation}
		immediately follows;
		hence $\mathcal{S}_{\max,\min}$ fulfills both distributive laws.

	\item
		$0$~is an annihilator for $\min$ because
		\begin{equation}
			\min\{0,x\} = \min\{x,0\} = 0.
		\end{equation}
	\end{enumerate}
	Together, it follows that $\mathcal{S}_{\max,\min}$ is a semiring as claimed.
\end{proof}

\paragraph*{Proof of Lemma~\ref{lem:maxmin-xh}}

\begin{proof}
	The claim holds for $h = 0$ by Equation~\eqref{eq:maxmin-x0}.
	As induction hypothesis, suppose the claim holds for some $h \in \N$.
	We obtain
	\begin{equation}
		x^{(h+1)}_v
			\stackrel{\eqref{eq:maxmin-xh}}{=} \left( A x^{(h)} \right)_v
			= \bigoplus_{w \in V} a_{vw} \odot x^{(h)}_w
			\stackrel{\eqref{eq:maxmin-adjacencymatrix}}{=}
				\underbrace{\infty \odot x^{(h)}_v}_{x^{(h)}_v}
				\oplus \bigoplus_{\{v,w\} \in E} \weight(v,w) \odot x^{(h)}_w.
	\end{equation}
	Recall that $\oplus$ in $\mathcal W$ is the element-wise maximum by Corollary~\ref{cor:maxmin-modules}.
	Hence, we have
	\begin{equation}
		x^{(h+1)}_{vu}
			= \max\left\{ x^{(h)}_{vu} \right\}
				\cup \left\{ \min\{ \weight(v,w), x^{(h)}_{wu} \} \mid \{v,w\} \in E \right\}
	\end{equation}
	and the induction hypothesis yields
	\begin{equation}
		x^{(h+1)}_{vu}
			= \max\left\{ \width^h(v, u, G) \right\}
				\cup \left\{ \min\{ \weight(v,w), \width^h(w, u, G) \} \mid \{v,w\} \in E \right\},
	\end{equation}
	which is exactly $\width^{h+1}(v, u, G)$.
\end{proof}

\paragraph*{Proof of Lemma~\ref{lem:allpaths}}

\begin{proof}
	We check the requirements of Definition~\ref{def:semiring} in Appendix~\ref{app:algebra} step by step.
	Throughout the proof, let $\pi \in P$ and $x,y,z \in \mathcal{P}_{\min,+}$ be arbitrary.
	\begin{enumerate}
	\item
		We first show that $((\Rdist)^P, \oplus)$ is a commutative semigroup with neutral element~$0$.
		The associativity of~$\oplus$\dash---and with it the property of $((\Rdist)^P, \oplus)$ being a semigroup\dash---follows from the associativity of $\min$:
		\begin{equation}
			((x \oplus y) \oplus z)_\pi
				= \min\{ \min\{ x_\pi, y_\pi \}, z_\pi \}
				= \min\{ x_\pi, \min\{ y_\pi, z_\pi \} \}
				= (x \oplus (y \oplus z))_\pi.
		\end{equation}
		Since $\min$ is commutative, $\oplus$~is too and it is easy to check that $(x \oplus 0)_\pi = (0 \oplus x)_\pi = x_\pi$.

	\item
		To see that $((\Rdist)^P, \odot)$ is a semigroup with neutral element~$1$, we first check that $\odot$ is associative, i.e., that it is a semigroup:
		\begin{align}
			((x \odot y) \odot z)_\pi
				&= \min\{ \min\{ x_{\pi^1} + y_{\pi^2} \mid \pi^{12} = \pi^1 \circ \pi^2 \} + z_{\pi^3} \mid \pi = \pi^{12} \circ \pi^3 \} \\
				&= \min\{ (x_{\pi^1} + y_{\pi^2}) + z_{\pi^3} \mid \pi = (\pi^1 \circ \pi^2) \circ \pi^3 \} \\
				&= \min\{ x_{\pi^1} + (y_{\pi^2} + z_{\pi^3}) \mid \pi = \pi^1 \circ (\pi^2 \circ \pi^3) \} \\
				&= (x \odot (y \odot z))_\pi.
		\end{align}
		Furthermore, $(1 \odot x)_\pi = \min\{ 0 + x_\pi \} = x_\pi = (x \odot 1)_\pi$, hence $1$ is the neutral element w.r.t.~$\odot$.

	\item
		Regarding the distributive laws, we begin with the left-distributive law~\eqref{eq:dist-left}:
		\begin{align}
			(x \odot (y \oplus z))_\pi
				&= \min\{ x_{\pi^1} + \min\{ y_{\pi^2}, z_{\pi^2} \} \mid \pi = \pi^1 \circ \pi^2 \} \\
				&= \min\{ \min\{ x_{\pi^1} + y_{\pi^2}, x_{\pi^1} + z_{\pi^2} \} \mid \pi = \pi^1 \circ \pi^2 \} \\
				&= \min\{ \min\{ x_{\pi^1} + y_{\pi^2} \mid \pi = \pi^1 \circ \pi^2 \},
					\min\{ x_{\pi^1} + z_{\pi^2} \mid \pi = \pi^1 \circ \pi^2 \} \} \\
				&= ((x \odot y) \oplus (x \odot z))_\pi.
		\end{align}
		Regarding the right-distributive law~\eqref{eq:dist-right}, we obtain:
		\begin{align}
			((y \oplus z) \odot x)_\pi
				&= \min\{ \min\{ y_{\pi^1}, z_{\pi^1} \} + x_{\pi^2} \mid \pi = \pi^1 \circ \pi^2 \} \\
				&= \min\{ \min\{ y_{\pi^1} + x_{\pi^2}, z_{\pi^1} + x_{\pi^2} \} \mid \pi = \pi^1 \circ \pi^2 \} \\
				&= \min\{ \min\{ y_{\pi^1} + x_{\pi^2} \mid \pi = \pi^1 \circ \pi^2 \},
					\min\{ z_{\pi^1} + x_{\pi^2} \mid \pi = \pi^1 \circ \pi^2 \} \} \\
				&= ((y \odot x) \oplus (z \odot x))_\pi.
		\end{align}

	\item
		It remains to check that $0$ is an annihilator for~$\odot$.
		We have
		\begin{equation}
			(0 \odot x)_\pi
				= \min\{ 0_{\pi^1} + x_{\pi^2} \mid \pi = \pi^1 \circ \pi^2 \}
				= \min \emptyset
				= \infty
				= 0_\pi
		\end{equation}
		and, equivalently, $(x \odot 0)_\pi = 0_\pi$.
	\end{enumerate}
	Hence, $\mathcal{P}_{\min,+}$~is a semiring as claimed.
\end{proof}

\paragraph*{Proof of Lemma~\ref{lem:allpaths-xh}}

\begin{proof}
	We prove the claim by induction.
	By Equation~\eqref{eq:allpaths-x0}, the claim holds for $h = 0$.
	As induction hypothesis, suppose the claim holds for all $0 \leq h' \leq h$.
	The induction step yields
	\begin{equation}\label{eq:allpaths-proof}
		x^{(h+1)}_v
			\stackrel{\eqref{eq:allpaths-xh}}{=} \left( A x^{(h)} \right)_v
			= \bigoplus_{w \in V} a_{vw} x^{(h)}_w
			\stackrel{\eqref{eq:allpaths-adjacencymatrix}}{=} \underbrace{a_{vv}}_{1} x^{(h)}_v \oplus \bigoplus_{\{v,w\} \in E} a_{vw} x^{(h)}_w.
	\end{equation}
	We have $a_{vv} x^{(h)}_v = 1 x^{(h)}_v = x^{(h)}_v$ by construction, i.e., $a_{vv} x^{(h)}_v$ contains exactly the properly weighted $h$-hop paths beginning at $v$ by the induction hypothesis.
	Next, consider $\{v,w\} \in E$.
	By induction, $x^{(h)}_w$ contains exactly the $h$-hop paths beginning in $w$ and $a_{vw}$ contains only the edge $\{v,w\}$ of weight $\weight(v,w)$ by Equation~\eqref{eq:allpaths-adjacencymatrix}.
	Hence, $a_{vw} x^{(h)}$ contains all $(h+1)$-hop paths beginning with $\{v,w\}$.
	Due to Equation~\eqref{eq:allpaths-proof} and
	\begin{equation}
		\paths^{h+1}(v, \cdot, G)
			= \paths^h(v, \cdot, G) ~\cup
				\bigcup_{\{v,w\} \in E} \left\{ (v,w) \circ \pi \mid \pi \in \paths^h(w, \cdot, G) \right\},
	\end{equation}
	$x^{(h+1)}_v$~contains exactly the properly weighted $(h+1)$-hop paths, as claimed.
\end{proof}

\paragraph*{Proof of Lemma~\ref{lem:ksdp-r}}

\begin{proof}
	%
	%
	%
	Clearly, $r$~is a projection.
	We show in one step each that it fulfills Conditions~\eqref{eq:congruence-by-r-product} and~\eqref{eq:congruence-by-r-sum} of Lemma~\ref{lem:congruence-by-r}.
	Throughout the proof, let $x,x',y,y' \in \mathcal{P}_{\min,+}$ be such that $x \sim x'$ and $y \sim y'$.
	\begin{enumerate}
	\item
		To see that $r$ fulfills~\eqref{eq:congruence-by-r-product}, suppose for contradiction and w.l.o.g.\ that $r(yx)_\pi < r(yx')_\pi$ for some $v$-$s$-path~$\pi$.
		By definition, we have $r(yx)_\pi = y_{\pi^1} + x_{\pi^2}$ for some partition $\pi = \pi^1 \circ \pi^2$.
		Suppose that $\pi^1$ is a $v$-$w$-path and $\pi^2$ a $w$-$s$-path.
		Furthermore, $r(yx)_\pi < \infty$, i.e., $\pi \in P_k(v,s,yx)$, by assumption.

		Observe that $\pi^2 \in P_k(w,s,x)$, otherwise $\pi \notin P_k(v,s,yx)$.
		Because $x \sim x'$, it holds that $P_k(w,s,x') = P_k(w,s,x)$ with $x_{\pi'} = x'_{\pi'}$ for any $\pi' \in P_k(w,s,x')$.
		In particular, $\pi^2 \in P_k(w,s,x')$ and hence $\pi \in P_k(v,s,yx')$, where $(yx')_{\pi} = (yx)_{\pi}$.
		In other words, $r(yx)_\pi = r(yx')_\pi$, contradicting the assumption that $r(yx)_{\pi} < r(yx')_{\pi}$.

	\item
		We show that $r$ fulfills~\eqref{eq:congruence-by-r-sum} by contradiction;
		assume w.l.o.g.\ that $r(x \oplus y)_\pi < r(x' \oplus y')_\pi$ for a $v$-$s$-path~$\pi$.
		This implies $r(x \oplus y)_\pi < \infty$, i.e., $\pi \in P_k(v, s, r(x \oplus y))$.
		By definition, $r(x \oplus y)_\pi = \min\{ x_\pi, y_\pi \} < \infty$.
		Assume w.l.o.g.\ that $\min\{x_\pi, y_\pi\} = x_\pi$, so in particular $\pi \in P_k(v, s, x)$.
		As $x \sim x'$, $\pi \in P_k(v, s, x') = P_k(v,s,x)$ and $x'_\pi = x_\pi$.
		Hence,
		\begin{equation}
			(x' \oplus y')_\pi
				= \min\{ x'_\pi, y'_\pi \}
				\leq x'_\pi
				= x_\pi
				= r(x \oplus y)_\pi
				< r(x' \oplus y')_\pi,
		\end{equation}
		implying that $(x' \oplus y')_\pi \neq r(x' \oplus y')_\pi = \infty$.
		Using that $P_k(v, s, x' \oplus y') \subseteq P_k(v, s, x') \cup P_k(v, s, y')$, we see that this means that, together, $r(x')$ and $r(y')$ must contain at least $k$ distinct paths $\pi'$ such that $(r(x')_{\pi'}, \pi') < (r(x' \oplus y')_{\pi}, \pi)$ or $(r(y')_{\pi'}, \pi') < (r(x'\oplus y')_{\pi}, \pi)$.
		Since $x \sim x'$ and $y \sim y'$, for all such $\pi'$ we have that
		\begin{equation}
			((r(x) \oplus r(y))_{\pi'}, \pi')
				= ((r(x') \oplus r(y'))_{\pi'}, \pi')
				< (r(x' \oplus y')_{\pi}, \pi).
		\end{equation}
		This contradicts $\pi \in P_k(v, s, r(x \oplus y))$.
	\end{enumerate}
	Since $x \sim x'$, $y \sim y'$, and $\pi$ are arbitrary, the claim follows.
\end{proof}

\paragraph*{Chernoff's Bound}

We use a variant of Chernoff's bound regarding the sum of $0$--$1$ random variables $X_1, \dots, X_n$ that imposes weaker assumptions regarding the independence of the individual variables:
Instead of the standard assumption that all $\{X_1, \dots, X_n\}$ are independent, it suffices to require each $X_i$ to be independent of $\{X_1, \dots, X_{i-1}\}$.
This bound can be derived using well-known techniques\dash---we adapt the derivation of Mitzenmacher and Upfal~\cite{mu-pc-05}\dash---which we present for the sake of self-containment.

\begin{lemma}[Chernoff's Bound]\label{lem:chernoff}
	Let $X_1, \dots, X_n$ be $0$--$1$ random variables such that for all $2 \leq i \leq n$, $X_i$~is independent of $\{X_1, \dots, X_{i-1}\}$.
	Then for $X := \sum_{i=1}^n X_i$ and all $\delta \in \R_{>0}$ it holds that
	\begin{equation}\label{eq:chernoff}
		\Prob[X \geq (1 + \delta) \E[X]]
			\quad\leq\quad
			\left( \frac{e^\delta}{(1+\delta)^{(1+\delta)}} \right)^{\E[X]}.
	\end{equation}
\end{lemma}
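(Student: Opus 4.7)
The plan is to follow the standard exponential-moment (Chernoff) argument, with the only nonstandard ingredient being the factorization of the moment generating function under the weakened independence assumption. Concretely, for any $t > 0$, I would start from Markov's inequality applied to $e^{tX}$:
\begin{equation}
	\Prob[X \geq (1+\delta)\E[X]]
		= \Prob\bigl[e^{tX} \geq e^{t(1+\delta)\E[X]}\bigr]
		\leq \frac{\E[e^{tX}]}{e^{t(1+\delta)\E[X]}}.
\end{equation}
The right-hand side is the quantity I must bound.

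The main obstacle, and the only place where the weakened hypothesis enters, is to show that
\begin{equation}
	\E[e^{tX}] = \E\Bigl[\prod_{i=1}^n e^{tX_i}\Bigr] \leq \prod_{i=1}^n \E[e^{tX_i}].
\end{equation}
I cannot invoke full mutual independence, so I would proceed by induction on $n$ using the tower property of conditional expectation. For the inductive step, condition on $\mathcal{F}_{n-1} := \sigma(X_1,\ldots,X_{n-1})$ and compute
\begin{equation}
	\E\Bigl[\prod_{i=1}^n e^{tX_i}\Bigr]
		= \E\Bigl[\prod_{i=1}^{n-1} e^{tX_i} \cdot \E[e^{tX_n} \mid \mathcal{F}_{n-1}]\Bigr]
		= \E\Bigl[\prod_{i=1}^{n-1} e^{tX_i}\Bigr] \cdot \E[e^{tX_n}],
\end{equation}
where the last equality uses precisely the assumption that $X_n$ is independent of $\{X_1,\ldots,X_{n-1}\}$, so that $\E[e^{tX_n} \mid \mathcal{F}_{n-1}] = \E[e^{tX_n}]$ is a constant that can be pulled out. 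The induction hypothesis applied to the remaining factor finishes this step.

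Once factorization is in hand, the rest is routine. For each $0$-$1$ variable with $p_i := \Prob[X_i = 1]$, I would bound
\begin{equation}
	\E[e^{tX_i}] = 1 + p_i(e^t - 1) \leq e^{p_i(e^t - 1)},
\end{equation}
using $1 + x \leq e^x$. Writing $\mu := \E[X] = \sum_i p_i$ by linearity of expectation and multiplying these bounds yields $\E[e^{tX}] \leq e^{(e^t-1)\mu}$, so
\begin{equation}
	\Prob[X \geq (1+\delta)\mu]
		\leq \frac{e^{(e^t-1)\mu}}{e^{t(1+\delta)\mu}}.
\end{equation}
Finally, I would optimize by setting $t = \ln(1+\delta) > 0$, which gives exponent $(e^t - 1)\mu - t(1+\delta)\mu = \delta\mu - (1+\delta)\ln(1+\delta)\mu$, and hence the claimed bound $\bigl(e^\delta/(1+\delta)^{1+\delta}\bigr)^\mu$. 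The factorization step is the only real content; everything else is the standard template.
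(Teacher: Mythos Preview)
Your proposal is correct and follows essentially the same approach as the paper: factorize $\E[e^{tX}]$ by induction using the weakened independence hypothesis, bound each factor via $1+x\le e^x$, and optimize with $t=\ln(1+\delta)$. The only cosmetic difference is that you phrase the factorization step via the tower property of conditional expectation, whereas the paper states it directly as $\E[Y\cdot Y_{k+1}]=\E[Y]\cdot\E[Y_{k+1}]$ for $Y=\prod_{i\le k}Y_i$; the content is identical.
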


\begin{proof}
	First consider random variables $Y_1, \dots, Y_k$ such that for all $2 \leq i \leq k$, $Y_i$~is independent of $\{Y_1, \dots, Y_{i-1}\}$.
	We claim that under these circumstances, we have
	\begin{equation}\label{eq:chernoff-product}
		\E\left[ \prod_{i=1}^k Y_i \right] = \prod_{i=1}^k \E[Y_i].
	\end{equation}
	For $k = 1$, \eqref{eq:chernoff-product}~trivially holds.
	As induction hypothesis, suppose that~\eqref{eq:chernoff-product} holds for some $k \in \N$ and define $Y := \prod_{i=1}^k Y_i$.
	Then $Y_{k+1}$ is independent from $Y$ by assumption and, using the induction hypothesis, we obtain~\eqref{eq:chernoff-product}:
	\begin{equation}
		\E\left[ \prod_{i=1}^{k+1} Y_i \right]
			= \E[Y \cdot Y_{k+1}]
			= \E[Y] \cdot \E[Y_{k+1}]
			= \prod_{i=1}^{k+1} \E[Y_i].
	\end{equation}

	Since $X$ is non-negative, we may apply Markov's bound and obtain, for arbitrary $t, \delta \in \R_{>0}$,
	\begin{equation}\label{eq:chernoff-markov}
		\Prob[X \geq (1 + \delta) \E[X]]
			= \Prob\left[ e^{tX} \geq e^{t (1 + \delta) \E[X]} \right]
			\leq \frac{\E[e^{tX}]}{e^{t (1 + \delta) \E[X]}}.
	\end{equation}
	Defining $Y_i := e^{t X_i}$ and scrutinizing $\E[e^{tX}]$ yields
	\begin{align}
		\E\left[ e^{tX} \right]
			&= \E\left[ \prod_{i=1}^n e^{t X_i} \right]
			\stackrel{\eqref{eq:chernoff-product}}{=} \prod_{i=1}^n \E\left[ e^{t X_i} \right]
			\stackrel{\Prob[X_i = 1] = \E[X_i]}{=}
				\prod_{i=1}^n \left( (e^t - 1) \E[X_i] + 1 \right) \\
			&\stackrel{1+x \leq e^x}{\leq}
				\prod_{i=1}^n e^{(e^t - 1) \E[X_i]}
			= e^{\sum_{i=1}^n (e^t - 1) \E[X_i]}
			= e^{(e^t - 1) \E[X]}. \label{eq:chernoff-exponent}
	\end{align}
	Combining~\eqref{eq:chernoff-markov} and~\eqref{eq:chernoff-exponent}, it follows that
	\begin{equation}\label{eq:chernoff-t}
		\Prob[X \geq (1 + \delta) \E[X]]
			\stackrel{\eqref{eq:chernoff-markov}}{\leq}
				\frac{\E[e^{tX}]}{e^{t(1 + \delta) \E[X]}}
			\stackrel{\eqref{eq:chernoff-exponent}}{\leq} \frac{e^{(e^t - 1) \E[X]}}{e^{t (1 + \delta) \E[X]}}
			= \left( \frac{e^{(e^t - 1)}}{e^{t (1 + \delta)}} \right)^{\E[X]}.
	\end{equation}
	Choosing $t := \ln(1 + \delta)$ in~\eqref{eq:chernoff-t} yields the claim.
\end{proof}

Mitzenmacher and Upfal~\cite{mu-pc-05} show that for $R \geq 6\E[X]$, it follows from~\eqref{eq:chernoff} that $\Prob[X \geq R] \leq 2^{-R}$.
In Lemma~\ref{lem:lists-short}, we have $\E[X] \in \bigO(\log n)$, i.e., that $\E[X] \leq c' \log_2 n$ for some $c' \in \R_{\geq 1}$.
Hence, for an arbitrary $c \in \R_{\geq 1}$, we can choose $R := 6cc' \log_2 n$ and obtain
\begin{equation}
	\Prob[X \geq R]
		\leq 2^{-R}
		\leq 2^{-6cc' \log_2 n}
		= n^{-6cc'}
		\leq n^{-c},
\end{equation}
i.e., $X \in \bigO(\log n)$ w.h.p.

\end{appendix}

\end{document}